\renewcommand{\epsilon}{\varepsilon}
\newcommand{\R}{\mathbb{R}}
\newcommand{\Rnonneg}{\R_+}
\newcommand{\Xunit}{\Rnonneg}
\newcommand{\X}{\Xunit^d}
\newcommand{\M}{\mathcal{M}}
\newcommand{\E}{\mathbb{E}}
\newcommand{\N}{\mathbb{N}}
\newcommand{\RRound}[1]{\text{RandRound}\left(#1\right)}
\newcommand{\RResponse}[1]{\text{RandResponse}\left(#1\right)}
\newcommand{\range}{\mathcal{R}}
\newcommand{\ceil}[1]{{\left \lceil #1 \right \rceil}}
\newcommand{\floor}[1]{{\left \lfloor #1 \right \rfloor}}
\newcommand{\Lap}[1]{\mathrm{Lap}{\left(#1\right)}}
\newcommand{\Oovereps}[1]{O{\left(#1/\epsilon\right)}}
\newcommand{\Ooneeps}{\Oovereps{1}}
\newcommand{\Ologdeps}{O{\left(\frac{\log(d)}{\epsilon}\right)}}
\newcommand{\Ologdeltaeps}{O{\left(\frac{\log(1/\delta)}{\epsilon}\right)}}
\newcommand{\thresholdapprox}{Threshold2} 
\newcommand{\threshold}{Threshold} 
\newcommand{\ALPone}{ALP1-projection}
\newcommand{\ALPest}{ALP1-estimator}
\newcommand{\ALPgeneral}{ALP-projection}
\newcommand{\ALPgeneralest}{ALP-estimator} 
\newcommand{\ALPfinal}{Threshold ALP-projection}
\newcommand{\ALPfinalest}{Threshold ALP-estimator}
\renewcommand\footnotetextcopyrightpermission[1]{}
  \providecommand\BibTeX{{%
    \normalfont B\kern-0.5em{\scshape i\kern-0.25em b}\kern-0.8em\TeX}}}
\begin{document}

\title[Differentially Private Sparse Vectors with Low Error, Optimal Space, and Fast Access]{Differentially Private Sparse Vectors with Low Error, \\ Optimal Space, and Fast Access}  

\author{Martin Aumüller}
\email{maau@itu.dk}
\affiliation{%
  \institution{IT University of Copenhagen}
  \streetaddress{Rued Langgaards Vej 7}
  \city{Copenhagen}
  \country{Denmark}
  \postcode{2300}
}

\author{Christian Janos Lebeda}
\email{chle@itu.dk}
\affiliation{%
  \institution{BARC \\IT University of Copenhagen}
  \streetaddress{Rued Langgaards Vej 7}
  \city{Copenhagen}
  \country{Denmark}
  \postcode{2300}
}

\author{Rasmus Pagh}
\email{pagh@di.ku.dk} 
\affiliation{%
  \institution{BARC \\ University of Copenhagen}
  \streetaddress{Universitetsparken 1}
  \city{Copenhagen}
  \country{Denmark}
  \postcode{2100}
}

\fancyhead{}

\begin{abstract}
  Representing a sparse histogram, or more generally a sparse vector, is a fundamental task in differential privacy.
  An ideal solution would use space close to information-theoretical lower bounds, have an error distribution that depends optimally on the desired privacy level, and allow fast random access to entries in the vector.
  However, existing approaches have only achieved two of these three goals.
  
  In this paper we introduce the Approximate Laplace Projection (ALP) mechanism for approximating $k$-sparse vectors. This mechanism is shown to simultaneously have information-theoretically optimal space (up to constant factors), fast access to vector entries, and error of the same magnitude as the Laplace-mechanism applied to dense vectors.
  A key new technique is a \emph{unary} representation of small integers, which we show to be robust against ``randomized response'' noise. This representation is combined with hashing, in the spirit of Bloom filters, to obtain a space-efficient, differentially private representation.

  Our theoretical performance bounds are complemented by simulations which show that the constant factors on the main performance parameters are quite small, suggesting practicality of the technique.
\end{abstract}

\begin{CCSXML}
    <ccs2012>
    <concept>
    <concept_id>10002978.10002991.10002995</concept_id>
    <concept_desc>Security and privacy~Privacy-preserving protocols</concept_desc>
    <concept_significance>500</concept_significance>
    </concept>
    </ccs2012>
\end{CCSXML}

\ccsdesc[500]{Security and privacy~Privacy-preserving protocols}

\keywords{Algorithms, Differential Privacy, Sparse Vector}

\maketitle

\section*{Publication}

The paper appears in Proceedings of the 28th ACM Conference on Computer and Communications Security, 2021. 
\\
\url{https://doi.org/10.1145/3460120.3484735}

\section{Introduction}
\label{sec:intro}
One of the fundamental results in differential privacy is that a histogram can be made differentially private by adding noise from the Laplace distribution to each entry of the histogram before it is released~\cite{DworkDPpaper}.
The expected magnitude of the noise on each histogram entry is $\Ooneeps$, where $\epsilon$ is the privacy parameter, and this is known to be optimal~\cite{hardt2010geometry}. In fact, there is a sense in which the Laplace mechanism is optimal~\cite{koufogiannis2015optimality}. However, some histograms of interest are extremely sparse, and cannot be represented in explicit form. Consider, for example, a histogram of the number of HTTP requests to various servers. Already the IPv4 address space has over 4 billion addresses, and the number of unique, valid URLs have long exceeded $10^{12}$, so it is clearly not feasible to create a histogram with a (noisy) counter for each possible value.

Korolova, Kenthapadi, Mishra, and Ntoulas~\cite{korolova2009releasing} showed that it is possible to achieve \emph{approximate} differential privacy with space that depends only on the number of non-zero entries in the histogram. However, for $(\epsilon,\delta)$-differential privacy the upper bound on the expected per-entry error becomes $\Ologdeltaeps$, which is significantly worse than the Laplace mechanism for small $\delta$.
Cormode, Procopiuc, Srivastava, and Tran~\cite{cormode2012differentially} showed how to achieve pure $\epsilon$-differential privacy with expected per-entry error bounded by $\Ologdeps$, where $d$ is the dimension of the histogram, i.e., the number of entries.
While both these methods sacrifice accuracy they are very fast, allowing access to entries of the private histogram in constant time.
If access time is not of concern, it is possible to combine small space with small per-entry error, as shown by Balcer and Vadhan~\cite{balcer2017differential}. They achieve an error distribution that is comparable to the Laplace mechanism (up to constant factors) and space proportional to the sum $n$ of all histogram entries --- but the time to access a single entry is $\tilde{O}(n/\epsilon)$, which is excessive for large datasets.

\subsection{Our results}
Our contribution is a mechanism that  achieves optimal error and space (up to constant factors) with only a small increase in access time. The mechanism works for either approximate or pure differential privacy, with the  former providing
faster access time. Our main results are summarized in Theorem~\ref{thm:intro}.

\begin{theorem} [Informal Version of Theorems~\ref{thm:ALPpure} and ~\ref{thm:ALPapprox}]\label{thm:intro}
  Let $x$ be a histogram with $d$ entries each bounded by some value $u$ where at most $k$ entries have non-zero values. Given privacy parameters $\varepsilon > 0$ and $\delta \geq 0$, there exists an $(\varepsilon, \delta)$-differentially private algorithm to represent $x$ using $O(k\log(d+u))$ bits with per-entry error matching the Laplace mechanism up to constant factors. The access time is $O(\log(1/\delta))$ when $\delta > 0$ and $O(\log(d))$ when $\delta=0$.
\end{theorem}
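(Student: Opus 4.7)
The plan is to build the mechanism from three layers that each address one of the three desiderata (space, error, access time), and then analyze the composition.

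First I would reduce the problem to approximating integer counts by small nonnegative integers. Since the Laplace mechanism error is $\Ooneeps$, we can truncate every count at a threshold $T = O(1/\epsilon + \log(1/\delta)/\epsilon)$ without distorting the statistics we ultimately release, as long as we add Laplace-like noise on top. The key primitive is a \emph{unary} encoding: represent an integer $v \in \{0, 1, \dots, T\}$ as a bit string $1^{v} 0^{T-v}$ of length $T$. On this encoding I would apply independent randomized response to each bit with an appropriately chosen flipping probability $p$ calibrated to $\epsilon$. A decoder reads the noisy bits and returns, say, the largest prefix length above a threshold, or a count-of-ones type estimator. The first technical step is to show that the resulting estimator has an error distribution whose tails decay like $\exp(-\Omega(\epsilon t))$, matching Laplace up to constants. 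This is essentially a concentration argument for sums of independent biased Bernoullis, and it is the place where the unary representation pays off: a single bit flip moves the decoded value by only one.

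Second, I would compress storage using a Bloom-filter-style hashing layer. Instead of maintaining a separate unary string per histogram coordinate, allocate a single shared bit array of size $m = O(k T)$ and, for each coordinate $i$ and each position $j \in \{1,\dots,T\}$, map the pair $(i,j)$ to a location in the array via a hash function (using $O(\log(1/\delta))$ independent hashes for the approximate-DP case, or a $t$-wise independent family of arity $O(\log d)$ for the pure-DP case). A coordinate $i$ with true value $v_i$ sets its first $v_i$ locations; the remaining locations stay $0$ before the randomized response pass. Access to entry $i$ then reads the $T$ relevant array locations and decodes. Privacy follows from post-processing of the randomized response step, since the hashing layout is data-independent.

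Third, I would analyze the collision/spurious-one behavior introduced by hashing. Here the concern is that two different coordinates land in the same cell and one raises the apparent value of the other; with a shared array of size $\Theta(kT)$ and appropriate hash independence, the expected number of spurious ones along any coordinate's $T$ positions is $O(1)$, which only inflates the constant in the error bound. Combined with the $O(\log(1/\delta))$ or $O(\log d)$ hash evaluations per access, this yields the stated access times, and the bit-length $O(kT) = O(k \log(d+u))$ after accounting for $T = O(\log(d+u)/\epsilon)$ in the pure case and the analogous expression in the approximate case (absorbing $1/\epsilon$ into constants under the standard convention, or stating the bound including $\epsilon$).

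The main obstacle I expect is the third step: proving that the randomized-response noise plus hashing collisions still yields a per-entry error distribution tight up to constants with the Laplace mechanism, rather than merely within a logarithmic factor. This requires choosing the flipping probability $p$, the threshold $T$, the load factor $kT/m$, and the decoder threshold in a carefully coupled way so that the decoded estimator is simultaneously $(\epsilon,\delta)$-differentially private and has Laplace-matching tails. The pure-DP case ($\delta=0$) is harder because we cannot tolerate the $\log(1/\delta)$ slack and must instead pay the $\log d$ factor in access time through larger hash arity to drive the per-coordinate collision probability below $1/d$.
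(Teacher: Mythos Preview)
Your outline has the right flavor---unary encoding, randomized response, hashing into a shared array---but there are two substantive gaps that would prevent the construction from achieving the stated bounds.

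\textbf{The $\epsilon$-scaling step is missing.} You propose a unary string of length $T = O(\log(d)/\epsilon)$ and a flip probability $p$ ``calibrated to $\epsilon$''. But the paper explicitly analyzes this alternative and shows it fails: if you set $p = \frac{1}{\epsilon+2}$ so that randomized response is $\epsilon$-DP, the expected decoding error is $O(1/\epsilon^2)$, not $O(1/\epsilon)$. And if you keep $p$ constant and instead make the unary string length $T = O(\log(d)/\epsilon)$, then reading all $T$ bits costs $O(\log(d)/\epsilon)$ time, not $O(\log d)$. The paper's fix is to first multiply the input by $\epsilon$, run a $1$-DP version with constant $p$ and unary length $O(\log d)$, then divide the estimate by $\epsilon$. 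This scaling is what simultaneously gives $O(1/\epsilon)$ error and $\epsilon$-independent access time, and it is not recoverable from your description.

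\textbf{The source of the $\log d$ versus $\log(1/\delta)$ access time is misidentified.} You attribute the distinction to hash arity---``larger hash arity to drive the per-coordinate collision probability below $1/d$'' in the pure case. That is not how the paper does it, and it would not work: you cannot make collision probability $1/d$ with an array of size $O(k)$. In the paper, collisions occur with constant probability $k/s = \Theta(1)$ and are absorbed into the random-walk error analysis. The $\log d$ versus $\log(1/\delta)$ comes from a \emph{separate} thresholding mechanism (Cormode et al.\ for pure DP, Korolova et al.\ for approximate DP) that stores entries with Laplace noise above a threshold $t$; the ALP embedding is then only responsible for values up to $\beta = t$, so the unary length is $m = O(t\epsilon)$, which equals $O(\log d)$ or $O(\log(1/\delta))$ respectively. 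Your proposal conflates truncation with this two-mechanism composition: simply truncating at $T$ and encoding would lose all information about entries larger than $T$.

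A smaller point: your proposed decoders (largest prefix above threshold, or count-of-ones) are not what the paper uses. The paper's estimator is the argmax of the running partial sum $\sum_{a \le n}(2\tilde z_a - 1)$, and the error analysis goes through the last non-negative step of a biased simple random walk, which is what gives the Laplace-matching exponential tails.
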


Here we assume that $k=\Omega(\log(d))$. Otherwise the mechanism has an additional term of $O(\log^2(d))$ or $O(\log(d)\log(1/\delta))$ bits in its space usage for pure and approximate differential privacy, respectively.

\subsection{Techniques}
On a high level, we treat ``small'' and ``large'' values of the histogram differently. Large values are handled by the thresholding technique developed in~\cite{korolova2009releasing,cormode2012differentially}. For small entries,
we represent them using a \emph{unary encoding} as fixed-length bit strings.
From~\cite{korolova2009releasing,cormode2012differentially} we know that their length is logarithmic in either $d$ (for $\varepsilon$-DP) or $1/\delta$ (for $(\varepsilon, \delta$)-DP).
Privacy is achieved by perturbing each bit using randomized response~\cite{warner1965randomized}.
As it turns out, the unary encoding is redundant enough to allow accurate estimation even when the probability of flipping each bit is a constant bounded away from 1/2.
In order to pack all unary representations into small space, we use hashing to randomize the position of each bit in the unary representation of a given entry. 
The access time is linear in the length of the bit representation, given constant time evaluation of the hash function.
Interestingly, although hash collisions can lead to overestimates, they do not influence the error asymptotically. 

We remark here that a direct application of randomized response does not give the desired $\Ooneeps$ error dependency, but we solve this issue with an initial scaling step that gives $\epsilon$-differential privacy when combined with randomized response.
Though the discussion above has been phrased in terms of histograms, which makes the comparison to earlier work easier, our techniques apply more generally to representing sparse real vectors, with privacy for neighboring datasets with bounded $\ell_1$-distance.

\subsection{Overview}
In Section~\ref{sec:prelim} we define differential privacy for vectors, discuss the Laplace mechanism, and provide probabilistic tools necessary for the analysis. In Section~\ref{sec:relatedwork} we discuss related work on differentially private sparse histograms. In Section~\ref{sec:algorithm} we introduce the Approximate Laplace Projection (ALP) mechanism and analyze its theoretical guarantees. In Section~\ref{sec:combineddatastructure} we improve space and access time using techniques from earlier work~\cite{korolova2009releasing,cormode2012differentially}. In Section~\ref{sec:experiments} we evaluate the performance of the ALP mechanism based on simulations. In Section~\ref{sec:practioners} we present suggestions for practical applications. We conclude the paper by stating an open problem in Section~\ref{sec:openproblems}.

\section{Preliminaries}
\label{sec:prelim}
\paragraph{Problem Setup.} In this work, we consider $d$-dimensional $k$-sparse vectors of non-negative real values. We say that a vector $x \in \X$ is $k$-sparse if it contains at most $k$ non-zero entries. We assume that $k=\Omega(\log(d))$. All entries are bounded from above by a value $u \in \R$, i.e., $\max_{i \in [d]} x_i =: \|x\|_\infty \leq u$. Here $[d]$ is the set of integers $\{1, \dots, d\}$.
We consider the problem of constructing an algorithm $\M$ for releasing a differentially private representation of $x$, i.e., $\tilde{x}:= \M(x)$. Note that $\tilde{x}$ does not itself need to be $k$-sparse.

\paragraph{Utility Measures.} We use two measures for the utility of an algorithm $\M$. We define the \emph{per-entry error} as $|x_i - \tilde{x}_i|$ for any ${i \in [d]}$.
We define the \emph{maximum error} as $\max_{i \in [d]}|x_i-\tilde{x}_i|=\|x-\tilde{x}\|_\infty$. We compare the utility of algorithms using the expected per-entry and maximum error and compare the tail probabilities of the per-entry error of our algorithm with the Laplace mechanism introduced below.

\paragraph{Differential Privacy.} Differential privacy is a constraint to limit privacy loss introduced by Dwork, McSherry, Nissim, and Smith~\cite{DworkDPpaper}. 
We use definitions and results as presented by
Dwork and Roth~\cite{DworkBook}.
Intuitively, a differentially private algorithm ensures that a slight change in the input does not significantly impact the probability of seeing any particular output. We measure the distance between inputs using their $\ell_1$-distance. In this work, two vectors are neighbors iff their $\ell_1$-distance is at most 1. That is for all neighboring vectors $x, x' \in \X$ we have $\|x-x'\|_1 := \sum_{i \in [d]} |x_i - x'_i| \leq 1$. We can now define differential privacy for neighboring vectors.

\begin{definition}[Differential privacy~{\cite[Def~2.4]{DworkBook}}] \label{def:differentialprivacy}
    Given $\epsilon > 0$ and $\delta \geq 0$, a randomized algorithm $\M\colon \X \rightarrow \range$ is $(\epsilon, \delta)$-differentially private if for all subsets of outputs $S \subseteq \range$ and pairs of $k$-sparse input vectors $x, x' \in \X$ such that $\|x - x'\|_1 \leq 1$ it holds that:
    $$\Pr[\M(x) \in S] \leq e^\epsilon \cdot \Pr[\M(x') \in S] + \delta \enspace .$$
\end{definition}

$\M$ satisfies \emph{approximate differential privacy} when $\delta > 0$ and \emph{pure differential privacy} when $\delta = 0$. In particular, a pure differentially private algorithm satisfies $\epsilon$-\emph{differential privacy}. 
The following properties of differential privacy are useful in this paper.

\begin{lemma}[Post-processing~{\cite[Proposition~2.1]{DworkBook}}] \label{lem:postprocessing}
    Let $\M\colon \X \rightarrow \range$ be an $(\epsilon, \delta)$-differentially private algorithm and let $f\colon\range \rightarrow \range'$ be any randomized mapping. Then $f \circ \M\colon \X \rightarrow \range'$ is $(\epsilon, \delta)$-differentially private.   
\end{lemma}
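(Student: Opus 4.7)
The plan is the standard reduction from a randomized post-processor to a family of deterministic ones. Fix any two $k$-sparse neighbors $x, x' \in \X$ with $\|x - x'\|_1 \leq 1$ and any measurable $S' \subseteq \range'$; the goal is to show
$$\Pr[f(\M(x)) \in S'] \leq e^\epsilon \Pr[f(\M(x')) \in S'] + \delta.$$
The first step is to view the randomized map $f$ as a deterministic function of its input together with an independent random seed $\omega$, i.e.\ $f(r) = f_\omega(r)$ where $\omega$ is drawn from some distribution independent of $\M(x)$ and $\M(x')$.

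Next, I would condition on $\omega$. For each fixed value of $\omega$, the set $T_\omega := f_\omega^{-1}(S') \subseteq \range$ is a (measurable) subset of the output space of $\M$, so applying the $(\epsilon, \delta)$-differential privacy guarantee of $\M$ to the subset $T_\omega$ gives
$$\Pr[\M(x) \in T_\omega] \leq e^\epsilon \Pr[\M(x') \in T_\omega] + \delta.$$
Since $\omega$ is independent of $\M$'s randomness, $\{f(\M(x)) \in S'\}$ has probability equal to the expectation over $\omega$ of $\Pr[\M(x) \in T_\omega]$ (and similarly for $x'$), so taking expectations on both sides preserves the inequality and yields the desired bound.

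The only subtle point is the measurability of $T_\omega$ and the use of Fubini-type exchange of expectation and probability; in the discrete / standard-Borel settings considered in the paper this is automatic, so I would simply note that $f$ is modeled as a Markov kernel and the decomposition into deterministic maps $f_\omega$ is well-defined. There are no other obstacles: the proof is essentially a one-line reduction once the randomness of $f$ is externalized.
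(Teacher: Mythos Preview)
Your argument is correct and is exactly the standard proof of closure under post-processing: externalize the randomness of $f$ as an independent seed $\omega$, pull back $S'$ along the deterministic map $f_\omega$ to obtain $T_\omega \subseteq \range$, apply the $(\epsilon,\delta)$-DP guarantee of $\M$ to $T_\omega$, and average over $\omega$. The measurability caveat you raise is appropriate and harmless in the setting of the paper.

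There is nothing to compare against, however: the paper does not prove this lemma at all. It is stated as a citation to \cite[Proposition~2.1]{DworkBook} and used as a black box (for clamping outputs and for the privacy of Algorithm~\ref{alg:threspure}). So your proposal is not so much matching the paper's approach as supplying the omitted textbook argument, which it does correctly.
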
 

\begin{lemma}[Composition~{\cite[Theorem~3.16]{DworkBook}}] \label{lem:composition}
    Let $\M_1\colon \X \rightarrow \range_1$ and $\M_2\colon \X \rightarrow \range_2$ be randomized algorithms such that $M_1$ is $(\epsilon_1, \delta_1)$-differentially private and $\M_2$ is $(\epsilon_2, \delta_2)$-differentially private. Then the algorithm $\M$ where $\M(x)=(\M_1(x), \M_2(x))$ is $(\epsilon_1 + \epsilon_2, \delta_1 + \delta_2)$-differentially private.
\end{lemma}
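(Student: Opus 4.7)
My plan is to reduce the joint statement to a pointwise (or near-pointwise) bound on the product densities of the two independent mechanisms. The crucial hypothesis is that $\M_1$ and $\M_2$ use independent randomness, so for any output pair $(r_1, r_2)$ the joint mass factors as $\Pr[\M(x) = (r_1,r_2)] = \Pr[\M_1(x) = r_1]\cdot \Pr[\M_2(x) = r_2]$ (and similarly for $x'$). I will first dispatch the pure case $\delta_1 = \delta_2 = 0$ since it is essentially just one line of algebra: multiplying the two pointwise bounds $\Pr[\M_1(x)=r_1]\leq e^{\epsilon_1}\Pr[\M_1(x')=r_1]$ and $\Pr[\M_2(x)=r_2]\leq e^{\epsilon_2}\Pr[\M_2(x')=r_2]$ gives an $e^{\epsilon_1+\epsilon_2}$ bound pointwise, and summing/integrating over any $S\subseteq\range_1\times\range_2$ closes the argument.

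For the approximate case, I will use a ``bad set'' decomposition, which is the standard way to avoid the multiplicative blow-up of $\delta$ one would otherwise get. Fix neighbors $x,x'\in\X$. Define
\[
B_i := \{r_i \in \range_i : \Pr[\M_i(x) = r_i] > e^{\epsilon_i}\Pr[\M_i(x') = r_i]\}, \quad i = 1,2,
\]
and let $B := (B_1 \times \range_2) \cup (\range_1 \times B_2)$. The key pointwise observation is that outside $B$ both factors satisfy their $\epsilon_i$-bounds, so by independence the joint density is bounded by $e^{\epsilon_1+\epsilon_2}$ times the joint density at $x'$. Next I will invoke the well-known reformulation of $(\epsilon,\delta)$-DP stating that the $\M_i(x)$-mass of $B_i$ is at most $\delta_i$ (this is obtained by applying Definition~\ref{def:differentialprivacy} with the set $B_i$ itself and rearranging, noting $\Pr[\M_i(x)\in B_i] \le e^{\epsilon_i}\Pr[\M_i(x')\in B_i] + \delta_i$ combined with the defining inequality of $B_i$).

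Now for an arbitrary measurable $S\subseteq\range_1\times\range_2$, I split $\Pr[\M(x)\in S] = \Pr[\M(x) \in S\setminus B] + \Pr[\M(x)\in S\cap B]$. The first term is bounded by $e^{\epsilon_1+\epsilon_2}\Pr[\M(x')\in S\setminus B] \le e^{\epsilon_1+\epsilon_2}\Pr[\M(x')\in S]$ using the pointwise bound. The second term is bounded by $\Pr[\M(x)\in B] \le \Pr[\M_1(x)\in B_1] + \Pr[\M_2(x)\in B_2] \le \delta_1 + \delta_2$ via the union bound and the mass bound on $B_i$. Adding the two pieces yields exactly the desired $(\epsilon_1+\epsilon_2,\delta_1+\delta_2)$-inequality.

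The main subtlety, and the only place real care is required, is the intermediate claim that the $\M_i(x)$-mass of $B_i$ is at most $\delta_i$; naively one worries that this requires a second application of DP with an $e^{\epsilon_i}$ factor that could inflate the additive slack. The resolution is that on $B_i$ itself, $\Pr[\M_i(x')\in B_i] \le e^{-\epsilon_i}\Pr[\M_i(x)\in B_i]$ by construction, so the defining inequality $\Pr[\M_i(x)\in B_i] \le e^{\epsilon_i}\Pr[\M_i(x')\in B_i] + \delta_i$ collapses to $\Pr[\M_i(x)\in B_i] \le \Pr[\M_i(x)\in B_i] + \delta_i - (\text{nonnegative slack})$, giving the clean $\delta_i$ bound without any $e^{\epsilon_i}$ factor. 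For continuous output spaces I would work with Radon--Nikodym derivatives of $\M_i(x)$ with respect to $\M_i(x')$ in place of pointwise density ratios, but the argument is otherwise unchanged.
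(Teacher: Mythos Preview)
The paper does not prove this lemma at all; it simply cites it from Dwork and Roth. So there is nothing to compare your argument against, and I will evaluate it on its own merits.

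Your pure-DP paragraph is fine. The approximate-DP argument, however, has a genuine gap at exactly the point you flagged as ``the main subtlety'': the claim that $\Pr[\M_i(x)\in B_i]\le \delta_i$ is \emph{false} in general, and your justification is circular. Substituting $\Pr[\M_i(x')\in B_i]\le e^{-\epsilon_i}\Pr[\M_i(x)\in B_i]$ into the DP inequality $\Pr[\M_i(x)\in B_i]\le e^{\epsilon_i}\Pr[\M_i(x')\in B_i]+\delta_i$ yields only $\Pr[\M_i(x)\in B_i]\le \Pr[\M_i(x)\in B_i]+\delta_i$, which is vacuous. A concrete counterexample: take $\epsilon=0$, $\delta=0.1$, with $\M(0)$ outputting $0$ with probability $0.6$ and $\M(1)$ outputting $0$ with probability $0.5$. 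This is $(0,0.1)$-DP, the bad set is $B=\{0\}$, and $\Pr[\M(0)\in B]=0.6\not\le 0.1$. What \emph{is} bounded by $\delta_i$ is the excess mass $\int_{B_i}(p_i-e^{\epsilon_i}q_i)$, not the total mass $\int_{B_i}p_i$.

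A corrected version of your decomposition: writing $p_i,q_i$ for the densities of $\M_i(x),\M_i(x')$, set $\tilde p_i=\min(p_i,e^{\epsilon_i}q_i)$, so that $p_i=\tilde p_i+(p_i-e^{\epsilon_i}q_i)_+$ with $\int(p_i-e^{\epsilon_i}q_i)_+\le\delta_i$. Then
\[
p_1p_2=\tilde p_1\tilde p_2+\tilde p_1(p_2-e^{\epsilon_2}q_2)_+ + (p_1-e^{\epsilon_1}q_1)_+\,p_2.
\]
The first term is $\le e^{\epsilon_1+\epsilon_2}q_1q_2$ pointwise. For the second term use $\tilde p_1\le p_1$ (\emph{not} $\tilde p_1\le e^{\epsilon_1}q_1$, which would cost you an unwanted $e^{\epsilon_1}$ factor) and integrate to get at most $\delta_2$; the third term integrates to at most $\delta_1$. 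Summing over any $S$ gives the desired bound. The essential difference from your attempt is that on the ``bad'' part you must retain the comparison to $q$ rather than discard it entirely.
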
 

Throughout this paper, we clamp the output of all algorithms to the interval $\left[ 0,u \right]$. An estimate outside this interval is due to noise and clamping outputs cannot increase the error. It follows from Lemma~\ref{lem:postprocessing} that clamping the output does not affect privacy. We clamp the output implicitly to simplify presentation.

\paragraph{Probabilistic Tools.}
\emph{The Laplace Mechanism} introduced by Dwork, McSherry, Nissim, and Smith~\cite{DworkDPpaper} satisfies pure differential privacy by adding noise calibrated to the $\ell_1$-distance to each entry.
For completeness, Algorithm~\ref{alg:lapmech} provides a formulation of the Laplace mechanism in the context of releasing an $\varepsilon$-differentially private representation of a sparse vector.

\SetNlSty{textbf}{(}{)} %
\begin{algorithm}%
  \caption{The Laplace Mechanism \label{alg:lapmech}}
  \SetKwInOut{Parameters}{Parameters}\SetKwInOut{Input}{Input}\SetKwInOut{Output}{Output}
  \SetAlgoLined
  
  \Parameters{$\epsilon > 0$.}
  \Input{$k$-sparse vector $x \in \X$.}
  \Output{$\epsilon$-differentially private approximation of $x$.}
  
  \nl Let $\tilde{x}_i = x_i + \eta_i$ for all $i \in [d]$, where $\eta_i \sim \Lap{1/\epsilon}$. \\
  \nl Release $\tilde{x}$.
\end{algorithm} 

Here $\Lap{1/\epsilon}$ is the Laplace distribution with scale parameter $1/\epsilon$. The PDF and CDF of the distribution are presented in Definitions~\ref{def:lappdf} and~\ref{def:lapcdf} and the expected error and tail bound of the mechanism are shown in Propositions~\ref{prop:laperror} and~\ref{prop:laptail}. 
The Laplace mechanism works well for vectors with low dimensionality and serves as a baseline for our work. However, it is impractical or even infeasible in the setting of $k$-sparse vectors. The output vector is dense, and as such the space requirement scales linearly in the input dimensionality $d$.

\begin{definition} \label{def:lappdf}
  The probability density function of the Laplace distribution centered around 0 with scale parameter $1/\epsilon$ is
  $$f(\tau) = \frac{\epsilon}{2}e^{-|\tau|\epsilon} \enspace .$$
\end{definition}

\begin{definition} \label{def:lapcdf}
  The cumulative distribution function of the Laplace distribution centered around 0 with scale parameter $1/\epsilon$ is:
  $$\Pr[\Lap{1/\epsilon} \leq \tau] = \begin{cases}
    \frac{1}{2}e^{\tau\epsilon}, & \text{if } \tau < 0 \\
    1 - \frac{1}{2}e^{-\tau\epsilon}, & \text{if } \tau \geq 0
  \end{cases}$$
\end{definition}

\begin{proposition} [Expected Error~{\cite[Theorem~3.8]{DworkBook}}] \label{prop:laperror}
  The expected per-entry and maximum error of the Laplace mechanism are $\E[|x_i-\tilde{x}_i|] = \Ooneeps$ and $\E[\|x-\tilde{x}\|_\infty] = \Ologdeps$ respectively.
\end{proposition}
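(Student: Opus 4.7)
The plan is to reduce both claims to bounds on the i.i.d.\ Laplace noise and then handle each bound separately. Since $\tilde{x}_i$ is obtained from $x_i + \eta_i$ with $\eta_i \sim \Lap{1/\epsilon}$ followed by clamping to $[0,u]$, and since the true value $x_i \in [0,u]$, clamping can only move $\tilde{x}_i$ closer to $x_i$. Hence $|x_i - \tilde{x}_i| \leq |\eta_i|$ pointwise, and consequently $\|x - \tilde{x}\|_\infty \leq \max_{i \in [d]} |\eta_i|$. It therefore suffices to bound $\E[|\eta_i|]$ and $\E[\max_i |\eta_i|]$.

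For the per-entry bound I would integrate the PDF from Definition~\ref{def:lappdf} directly: by symmetry $\E[|\eta_i|] = \epsilon \int_0^\infty \tau\, e^{-\tau\epsilon}\, d\tau$, which equals $1/\epsilon$ after one integration by parts. This gives $\E[|x_i - \tilde{x}_i|] = \Ooneeps$.

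For the maximum error I would use the standard tail-integration method. From Definition~\ref{def:lapcdf}, the symmetric tail is $\Pr[|\eta_i| > t] = e^{-t\epsilon}$ for $t \geq 0$, so a union bound gives $\Pr[\max_i |\eta_i| > t] \leq \min(1, d\, e^{-t\epsilon})$. Then using $\E[Z] = \int_0^\infty \Pr[Z > t]\, dt$ for a nonnegative $Z$, I would split the integral at the threshold $t^\ast = \log(d)/\epsilon$ where $d\, e^{-t\epsilon} = 1$. The initial segment contributes at most $t^\ast = \log(d)/\epsilon$, and the tail contributes $\int_{t^\ast}^\infty d\, e^{-t\epsilon}\, dt = 1/\epsilon$. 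Summing yields $\E[\|x - \tilde{x}\|_\infty] = \Ologdeps$.

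The statement is essentially a textbook fact and there is no genuinely hard step; both bounds follow from a direct integration and a one-line union bound. The only subtleties worth being explicit about are (i) that clamping cannot increase the error (which uses only $x_i \in [0,u]$, not independence of the $\eta_i$), and (ii) choosing the correct splitting threshold $t^\ast = \log(d)/\epsilon$ so that the two contributions to the integral balance up to constants.
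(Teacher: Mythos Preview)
Your argument is correct and is exactly the standard derivation one would expect: clamping can only decrease the error, $\E[|\eta_i|]=1/\epsilon$ by direct integration, and the maximum is handled by a union bound on the tail $\Pr[|\eta_i|>t]=e^{-t\epsilon}$ together with the layer-cake formula split at $t^\ast=\log(d)/\epsilon$. Note, however, that the paper does not supply its own proof of this proposition; it is stated as a citation to~\cite[Theorem~3.8]{DworkBook}, so there is no in-paper argument to compare against---your write-up is essentially the textbook proof being referenced.
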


\begin{proposition} [Tail bound~{\cite[Theorem~3.8]{DworkBook}}]\label{prop:laptail}
  With probability at least $1-\psi$ we have:
  $$|\Lap{1/\epsilon}| \leq \frac{1}{\epsilon}\ln\frac{1}{\psi} \enspace .$$
\end{proposition}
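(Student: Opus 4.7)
The plan is to read the bound off the cumulative distribution function given in Definition~\ref{def:lapcdf}. Since the Laplace distribution is symmetric about $0$, for any $\tau \geq 0$ I would split the two-sided tail event as
\[
    \Pr[|\Lap{1/\epsilon}| > \tau] = \Pr[\Lap{1/\epsilon} > \tau] + \Pr[\Lap{1/\epsilon} < -\tau].
\]
Each one-sided tail evaluates to $\tfrac{1}{2} e^{-\tau \epsilon}$ by the two cases of Definition~\ref{def:lapcdf}, so summing gives $\Pr[|\Lap{1/\epsilon}| > \tau] = e^{-\tau \epsilon}$.

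Next I would set $\tau = \tfrac{1}{\epsilon} \ln \tfrac{1}{\psi}$, which yields $\Pr[|\Lap{1/\epsilon}| > \tau] = e^{-\ln(1/\psi)} = \psi$. Taking complements produces $\Pr\bigl[|\Lap{1/\epsilon}| \leq \tfrac{1}{\epsilon} \ln \tfrac{1}{\psi}\bigr] \geq 1 - \psi$, which is exactly the stated bound.

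There is no real obstacle here: the result is a one-line substitution into the Laplace CDF, and the only minor care required is to invoke the symmetry of the density when converting the two-sided event $|\Lap{1/\epsilon}| > \tau$ into the two one-sided tails covered by the cases of Definition~\ref{def:lapcdf}. Since this fact is standard and cited from~\cite{DworkBook}, I expect the final write-up to be very short.
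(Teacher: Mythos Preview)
Your proposal is correct. The paper does not actually supply its own proof of this proposition; it is stated as a cited fact from~\cite{DworkBook} and left unproved. Your derivation from the CDF in Definition~\ref{def:lapcdf} is the standard one-line argument and is exactly what one would write if a proof were required.
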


\emph{Random rounding} or stochastic rounding is used for rounding a real value probabilistically based on its fractional part. We define random rounding for any real $r \in \R$ as follows:
$$\RRound{r} = \begin{cases}
  \ceil{r} & \text{with probability } r - \floor{r} \\
  \floor{r} & \text{with probability } 1 - (r - \floor{r}) 
\end{cases}$$

\begin{lemma} \label{lem:rrounderror}
  The expected error of random rounding is maximized when $r-\floor{r}=0.5$. For any $r$ we have:
  \begin{align*}
    \E[|r - \RRound{r}|]\leq \frac{1}{2} \enspace .
  \end{align*}
\end{lemma}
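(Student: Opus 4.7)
The plan is straightforward: compute the expected error as a function of the fractional part of $r$ and optimize. Let $f = r - \lfloor r \rfloor \in [0,1)$. If $f = 0$ the output equals $r$ deterministically and the error is $0$, so assume $f \in (0,1)$, in which case $\lceil r \rceil = \lfloor r \rfloor + 1$.

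By the definition of $\RRound{r}$, with probability $f$ the output is $\lfloor r \rfloor + 1$, contributing error $|r - (\lfloor r \rfloor+1)| = 1 - f$, and with probability $1-f$ the output is $\lfloor r \rfloor$, contributing error $|r - \lfloor r \rfloor| = f$. Hence
\begin{equation*}
  \E[|r - \RRound{r}|] = f(1-f) + (1-f)f = 2f(1-f).
\end{equation*}

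The function $g(f) = 2f(1-f)$ on $[0,1]$ is a downward parabola with $g'(f) = 2 - 4f$, so it is maximized at $f = 1/2$, where $g(1/2) = 1/2$. This proves both assertions: the expected error is maximized when the fractional part of $r$ equals $1/2$, and in all cases $\E[|r - \RRound{r}|] \leq 1/2$.

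There is no real obstacle here; the only thing to be careful about is the boundary case of integer $r$, which is handled by noting that $f = 0$ gives expected error $0$, still bounded by $1/2$.
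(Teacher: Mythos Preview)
Your proof is correct. The paper states this lemma without proof in the preliminaries, so there is nothing to compare against; your direct computation $\E[|r-\RRound{r}|]=2f(1-f)$ with $f=r-\lfloor r\rfloor$, maximized at $f=1/2$, is exactly the natural argument and fully justifies the claim.
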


\emph{Randomized response} was first introduced by Warner~\cite{warner1965randomized}. The purpose of the mechanism is to achieve plausible deniability by changing one's answer to some question with probability $p$ and answer truthfully with probability $q=1-p$. We define randomized response for a boolean value $b \in \{0, 1\}$ as follows: 
$$\RResponse{b, p} = \begin{cases}
  1-b & \text{with probability } p \\
  b & \text{with probability } q 
\end{cases}$$

\paragraph{Universal Hashing.} A \emph{hash family} is a collection of functions $\mathcal{H}$ mapping keys from a universe $U$ to a range $R$. A family $\mathcal{H}$ is called \emph{universal}, if each pair
of different keys collides with probability at most $1/|R|$,
where the randomness is taken over the random choice of $h \in \mathcal{H}$. 
A particularly efficient construction that uses $O(\log |U|)$ bits and constant evaluation time is presented in~\cite{DietzfelbingerHKP97}.

\paragraph{Model of Computation.} We use the $w$-bit word RAM model defined by Hagerup~\cite{Hagerup98} where $w=\Theta(\log(d) + \log(u))$. This model allows constant time memory access and basic operations on $w$-bit words. As such, we can store a $k$-sparse vector using $O(k\log(d+u))$ bits with constant lookup time using a hash table. 
We assume that the privacy parameters $\varepsilon$ and $\delta$ can be represented in a single word.

\paragraph{Negative Values.} In this paper, we consider vectors with non-negative real values, but the mechanism can be generalized for negative values using the following reduction. Let $v \in \R^d$ be a real valued $k$-sparse vector. Construct $x, y \in \R_+^d$ from $v$ such that $x_i = \max(v_i, 0)$ and $y_i = -\min(v_i,0)$. By construction both $x$ and $y$ are $k$-sparse and the $\ell_1$-distance between vectors is preserved. We can access elements in $v$ as $v_i=x_i-y_i$. As such, any differentially private representation of $x$ and $y$ can be used as a differentially private representation of $v$ with at most twice the error.

\section{Related work}
\label{sec:relatedwork}

Previous work on releasing differentially private sparse vectors primarily focused on the special case of discrete vectors in the context of releasing the histogram of a dataset. 

\begin{table*}[t]
  \centering
  \def\arraystretch{1.5}
  \begin{tabular}{lcccc}
  \multicolumn{1}{l}{\textbf{Algorithm}}                & \multicolumn{1}{c}{\textbf{Space (bits)}}                     & \multicolumn{1}{c}{\textbf{Access time}}            & \multicolumn{1}{c}{\textbf{Per-entry error}}                                                                         & \multicolumn{1}{c}{\textbf{Maximum error}}     \\ \hline
  Dwork et al. \cite{DworkDPpaper}                       & $O{\left(d\log(u)\right)}$                            & $O{\left(1\right)}$                          & $O{\left(\frac{1}{\epsilon}\right)}$   & $\Ologdeps$ \\ 
  Cormode et al. \cite{cormode2012differentially}        & $O{\left(k \log(d + u)\right)}$                       & $O{\left(1\right)}$                          & $\Ologdeps$                            & $\Ologdeps$ \\ 
  Balcer \& Vadhan \cite{balcer2017differential}         & $\tilde{O}{\left(\frac{n}{\epsilon} \log(d) \right)}$ & $\tilde{O}{\left(\frac{n}{\epsilon}\right)}$ & $O{\left(\frac{1}{\epsilon}\right)}$   & $\Ologdeps$ \\ 
  \textit{Theorem~\ref{thm:ALPpure} (this work)}                  & $O{\left(k \log(d + u) \right)}$                      & $O{\left(\log(d)\right)}$                    & $O{\left(\frac{1}{\epsilon}\right)}$   & $\Ologdeps$ \\ \hline 
  Korolova et al. \cite{korolova2009releasing} & $O{\left(k \log(d + u) \right)}$                      & $O{\left(1\right)}$                          & $\Ologdeltaeps$                        & $\Ologdeltaeps$ \\ 
  \textit{Theorem~\ref{thm:ALPapprox} (this work)}      & $O{\left(k (\log(d + u) + \log(1/\delta) \right))}$           & $O{\left(\log(1/\delta)\right)}$             & $O{\left(\frac{1}{\epsilon}\right)}$   & $\Ologdeltaeps$ \\ \hline
\end{tabular}%
  \captionsetup{justification=centering}
  \caption{Comparison with previous work of expected values for worst-case input. 
  The first four rows are results on $\epsilon$-differential privacy, and the last two are on $(\epsilon,\delta)$-differential privacy. The $\tilde{O}$-notation suppresses logarithmic factors.
  }
  \label{tab:relatedworkcomparison}
\end{table*} 
Korolova, Kenthapadi, Mishra, and Ntoulas~\cite{korolova2009releasing} first introduced an approximately differentially private mechanism for the release of a sparse histogram. A similar mechanism was later introduced independently by Bun, Nissim, and Stemmer~\cite{bun2016simultaneous} in another context. 
The mechanism adds noise to non-zero entries and removes those with a noisy value below a threshold $t=\Ologdeltaeps$. The threshold is chosen such that the probability of releasing an entry with true value $1$ is at most $\delta$. The expected maximum error is $O{\left(\frac{\log(\max(k,1/\delta))}{\epsilon}\right)}$. Since $\delta$ is usually chosen to be negligible in the input size, we assume that $\delta \leq 1/k$. As such, the expected maximum error is $\Ologdeltaeps$. 
We discuss the per-entry error below. 
Their mechanism is designed to satisfy differential privacy for discrete data. We extend their technique to real-valued data as part of Section~\ref{sec:combineddatastructure}, where we combine it with our mechanism.

Cormode, Procopiuc, Srivastava, and Tran~\cite{cormode2012differentially} introduced a differentially private mechanism in their work on range queries for sparse data. The mechanism adds noise to all entries and removes those with a noisy value below a threshold $t=\Ologdeps$. Here the threshold is used to reduce the expected output size. The number of noisy entries above $t$ is $O(k)$ with high probability. 
The construction time of a naive implementation of their technique scales linearly in $d$. They improve on this by sampling from a binomial distribution to determine the number of zero entries to store. They show that their approach produces the same output distribution as a naive implementation that adds noise to every entry. 
Their mechanism works for real-valued data in a straightforward way.

Since the expected number of non-zero entries in the output is $O(k)$ for both mechanisms above, their memory requirement is $O(k\log(d+u))$ bits using a hash table. An entry is accessed in constant time. The expected per-entry error depends on the true value of the entry. If the noisy value is above the threshold with sufficiently high probability, the expected error is $\Ooneeps$. However, this does not hold for entries that are likely removed. Consider for example an entry with a true value exactly at the threshold $t$. This entry is removed for any negative noise added. As such the expected per-entry error is $O(t)$ for worst-case input, which is $\Ologdeltaeps$ and $\Ologdeps$ for the two mechanisms, respectively.

In their work on differential privacy on finite computers, Balcer and Vadhan~\cite{balcer2017differential} introduced several algorithms including some with similar utility as the mechanisms described above. Moreover, they provided a lower bound of $\Omega{\left(\frac{\min\{\log(d),\, \log(\varepsilon/\delta),\, n\}}{\epsilon}\right)}$ for the expected per-entry error of any algorithm that always outputs a sparse histogram. (See~\cite[Theorem 7.2]{balcer2017differential} for the precise technical statement.) Here $n$ is the number of rows in the dataset, i.e., the sum of all entries of the histogram. This lower bound means that an algorithm that always outputs a $O(k)$-sparse histogram cannot achieve $\Ooneeps$ expected per-entry error for all input. They bypass this bound by producing a compact representation of a dense histogram. Their representation has expected per-entry and maximum error of $\Ooneeps$ and $\Ologdeps$, respectively. It requires $\tilde{O}{\left(\frac{n}{\epsilon} \log(d)\right)}$ bits and an entry is accessed in time $\tilde{O}{\left(\frac{n}{\epsilon}\right)}$. Note that their problem setup differs from ours in that each entry is bounded only by $n$ such that $\|x\|_\infty \leq n$. That is, $n$ serves a similar purpose as $u$ does in our setup. We do not know how to extend their approach to our setup with real-valued input.

In light of the results achieved in previous work, our motivation is to design a mechanism that achieves three properties simultaneously: $\Ooneeps$ expected per-entry error for arbitrary input, fast access, and (asymptotically) optimal space. Previous approaches only achieved at most two of these properties simultaneously. Moreover, we want the per-entry error to match the tail bounds of the Laplace mechanism up to constant factors.
We construct a compact representation of a dense vector to bypass the lower bound for sparse vectors by Balcer and Vadhan~\cite{balcer2017differential}. The access time of our mechanism is $O(\log(d))$ and $O(\log(1/\delta))$ for pure and approximate differential privacy, respectively. Table~\ref{tab:relatedworkcomparison} summarizes the results of previous work and our approach. 

\section{The ALP mechanism}
\label{sec:algorithm}

In this section, we introduce the Approximate Laplace Projection (ALP) mechanism\footnote{The name is chosen to indicate that the error distribution is approximately like the Laplace distribution, and that we \emph{project} the sparse vector to a much lower-dimensional representation. It also celebrates the mountains, whose silhouette plays a role in a certain random walk considered in the analysis of the ALP mechanism.} and give an upper bound on the expected per-entry error. 
The ALP mechanism consists of two algorithms. The first algorithm constructs a differentially private representation of a $k$-sparse vector and the second estimates the value of an entry based on its representation.

\subsection{A 1-differentially private algorithm}

We start by considering the special case of $\epsilon=1$ and later generalize to all values of $\epsilon > 0$. 
Moreover, the mechanism works well only for entries bounded by a parameter $\beta$. 
In general, this would mean that we had to set $\beta=u$ if we only were to use the ALP mechanism.
However, in Section~\ref{sec:combineddatastructure} we will discuss how to set $\beta$ smaller and still perform well for all entries. 

In the first step of the projection algorithm, we scale every non-zero entry by a parameter of the algorithm and use random rounding to map each such entry to an integer.
We then store the unary representation of these integers in a two-dimensional bit-array using a sequence of universal hash functions~\cite{carter1979universal}. We call this bit-array the \emph{embedding}. Lastly, we apply randomized response on the embedding to achieve privacy. The pseudocode of the algorithm is given in Algorithm~\ref{alg:ALP1} and we discuss it next.

\begin{algorithm}[t]
  \caption{\ALPone \label{alg:ALP1}}
  \SetKwInOut{Parameters}{Parameters}\SetKwInOut{Input}{Input}\SetKwInOut{Output}{Output}
  \SetAlgoLined

  \Parameters{$\alpha, \beta > 0$, and $s \in \N$.}
  \Input{$k$-sparse vector $x \in \X$ where $s > 2k$. Sequence of hash functions from domain $[d]$ to $[s]$, $h = (h_1,\ldots,h_m)$, where $m=\ceil{\frac{\beta}{\alpha}}$.} 
  \Output{$1$-differentially private representation of $x$.}
  
  \nl Apply random rounding to a scaled version of each non-zero entry of $x$ such that $y_i = \RRound{\frac{x_i}{\alpha}}$. \\
  \nl Construct $z \in \{0, 1\}^{s \times m}$ by hashing the unary representations of $y$ such that:
  $$z_{a,b} = \begin{cases}
    1, & \exists i: b \leq y_i \text{ and } h_b(i) = a \\
    0, & \text{otherwise}
  \end{cases} $$ \label{line:ALPz} \\
  \nl Apply randomized response to each bit of $z$ such that $\tilde{z}_{a,b} = \RResponse{z_{a,b}, \frac{1}{\alpha+2}}$. \\
  \nl Release $h$ and $\tilde{z}$.
\end{algorithm} 

\begin{figure}[t]
  \includegraphics[width=\linewidth]{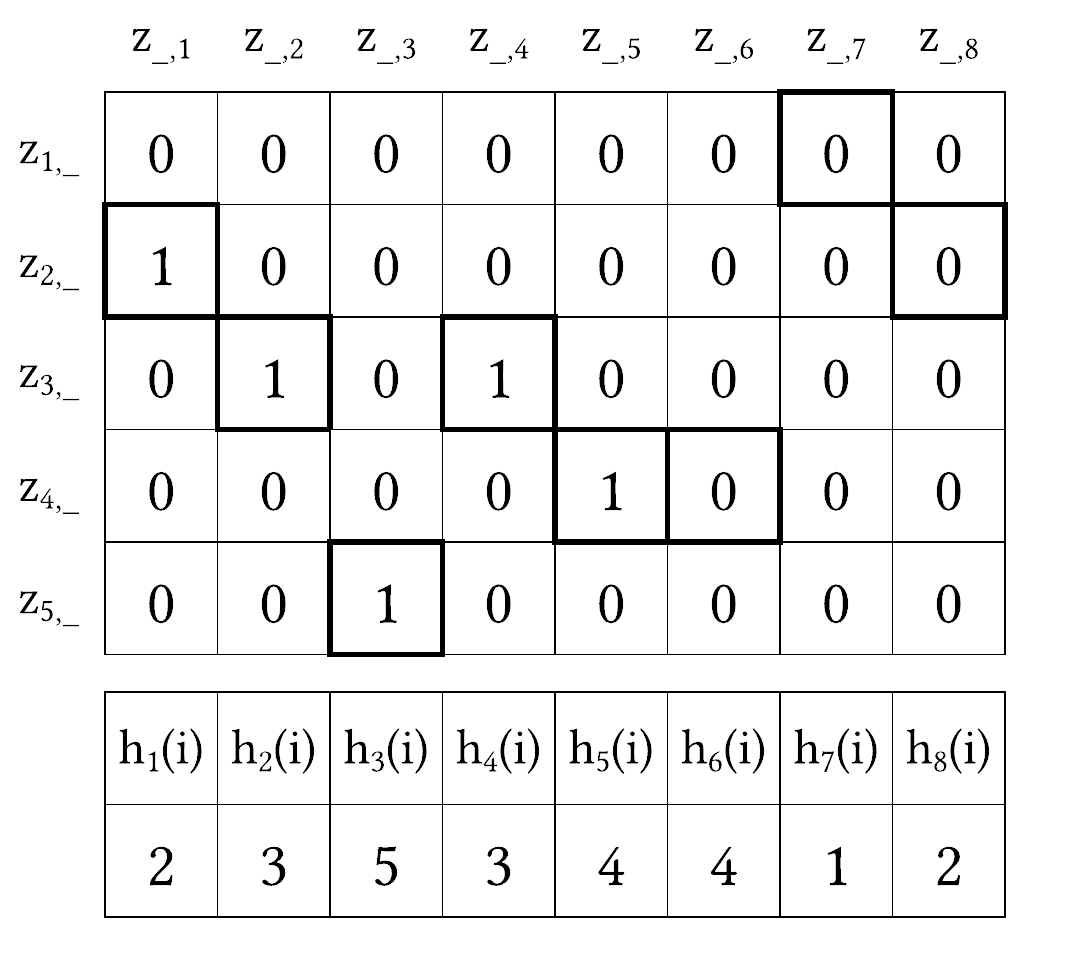} 
  \Description[Visualization of ALP-projection]{We use $m$ bits to represent each entry. Since we are storing the value 5, we set the first 5 bits to 1.}
  \captionsetup{justification=centering}
  \caption{Embedding with $m=8$, $s=5$ and $y_i=5$. \\
  The $i$th entry is the only non-zero entry.} 
  \label{fig:projection}
\end{figure}

Figure~\ref{fig:projection} shows an example of an embedding before applying randomized response. The input is a vector $x$ where the $i$th entry $x_i$ is the only non-zero value. The result of evaluating $i$ for each hash function is shown in the table at the bottom and the $m=8$ bits representing the $i$th entry in the bit-array are highlighted. 
In Step (1) of the algorithm, $x_i$ is scaled by $1/\alpha$ and randomized rounding is applied to the scaled value.
This results in $y_i = 5$. Using the hash functions, we represent this value in unary encoding by setting the first five bits to $1$ in Step (2),
where the $j$th bit is selected by evaluating the hash function $h_j$ on $i$.
The final three bits are unaffected by the entry. 
Finally, we apply randomized response in each cell of the bit-array. 
The bit-array after applying randomized response is not shown here, but we present it later in Figure~\ref{fig:estimation}.
Both the bit-array and the hash functions are the differentially private representation of the input vector $x$. We use this construction when estimating the value of $x_i$ later.

The algorithm takes three parameters $\alpha, \beta$, and $s$.
The parameters $\alpha$ and $s$ are adjustable.
We discuss these parameters later as part of the error analysis. In Section~\ref{sec:experiments} we discuss how to select values for $\alpha$ and $s$. Throughout the paper we sometimes assume that $\alpha$ is a constant and $s$ is a constant multiple of $k$ that is $\alpha=\Theta(1)$ and $s=\Theta(k)$. The parameter $\beta$ bounds the values stored in the embedding. We discuss $\beta$ as part of the error analysis as well.

\begin{lemma} \label{lem:ALP1DP}
  Algorithm~\ref{alg:ALP1} satisfies 1-differential privacy.
\end{lemma}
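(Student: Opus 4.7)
The plan is to condition on the hash functions $h$ (which are data-independent and part of the output), reduce the joint distribution of $\tilde{z}$ to a product over independent cells, and bound the resulting sum of per-cell log-likelihood contributions by $1$.

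The structural observation that drives the proof is a cell-wise independence claim: given $x$ and $h$, the bits $\tilde{z}_{a,b}$ are mutually independent. Writing $y_i = \lfloor x_i/\alpha\rfloor + B_i$ with $B_i$ the rounding bit, entries with $\lfloor x_i/\alpha\rfloor \geq b$ and $h_b(i) = a$ set $z_{a,b}$ to $1$ deterministically, while entries with $\lfloor x_i/\alpha\rfloor \leq b-2$ contribute nothing; the only randomness in $z_{a,b}$ beyond $h$ comes from the single bits $B_i$ of ``potential voters'' $i$ with $\lfloor x_i/\alpha\rfloor = b-1$ and $h_b(i)=a$. Each $B_i$ is a potential voter for at most one cell (the one in column $\lfloor x_i/\alpha\rfloor+1$ and row $h_{\lfloor x_i/\alpha\rfloor+1}(i)$), so the relevant random-bit sets are pairwise disjoint across cells, and the independent randomized-response flips then make $\tilde{z}_{a,b}$ factorize. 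Consequently, the log-likelihood ratio reduces to a sum $\sum_{(a,b)} \ln\bigl(\Pr[\tilde{z}_{a,b}=\tilde{z}^*_{a,b}\mid x]/\Pr[\tilde{z}_{a,b}=\tilde{z}^*_{a,b}\mid x']\bigr)$.

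For each cell, set $P_{a,b}(x) := \Pr[z_{a,b}=1\mid x] = 1-\prod_{i:\,h_b(i)=a}\pi_{i,b}(x_i)$ with $\pi_{i,b}(x_i) := \Pr[y_i < b\mid x_i]$. With $p = 1/(\alpha+2)$, the marginal $\theta_{a,b}(x) = p + (1-2p)P_{a,b}(x)$ lies in $[p,1-p]$, so $\ln\theta$ and $\ln(1-\theta)$ are $\alpha$-Lipschitz in $P$ (the slope $(1-2p)/p$ equals $\alpha$); each cell therefore contributes at most $\alpha\,|P_{a,b}(x)-P_{a,b}(x')|$ to the total log-ratio. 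The standard product-Lipschitz inequality for numbers in $[0,1]$ bounds $|P_{a,b}(x)-P_{a,b}(x')|$ by $\sum_{i:\,h_b(i)=a}|\pi_{i,b}(x_i)-\pi_{i,b}(x'_i)|$, and summing over $a$ for fixed $b$ collapses the hash. Finally, since $\pi_{i,b}$ is monotone non-increasing in $x_i$ and $\sum_{b=1}^{m}(1-\pi_{i,b}(x_i)) = \E[\min(y_i,m)\mid x_i] = \min(x_i/\alpha,m)$ is $(1/\alpha)$-Lipschitz in $x_i$, monotonicity collapses absolute values and gives $\sum_b|\pi_{i,b}(x_i)-\pi_{i,b}(x'_i)| \leq |x_i-x'_i|/\alpha$. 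Chaining these estimates yields $\sum_{(a,b)}|P_{a,b}(x)-P_{a,b}(x')| \leq \|x-x'\|_1/\alpha \leq 1/\alpha$, so the total log-ratio is at most $\alpha\cdot(1/\alpha)=1$.

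The step I expect to need the most care is the cell-wise independence claim: at first glance $\tilde{z}_{a,b}$ and $\tilde{z}_{a',b-1}$ look correlated through the shared $y_i$, and one has to spell out that once the deterministic integer part of $x_i/\alpha$ is exposed, the only randomness in $y_i$ is a single rounding bit hitting a single column. Once that is in place, the calibration $p = 1/(\alpha+2)$ is what makes the numbers land on the nose: the $\alpha$ from the per-cell Lipschitz estimate exactly cancels the $1/\alpha$ that random rounding contributes to the $\ell_1$ telescoping, giving a clean $1$-DP bound rather than something off by a constant.
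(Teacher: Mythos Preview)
Your proof is correct but structurally different from the paper's. The paper argues by a hybrid/telescoping route: it first treats neighboring vectors that differ in a single coordinate with a single affected bit of $z$, bounds the probability shift on that bit by $|x_i-x'_i|/(\alpha+2)$, divides by the lower bound $p$ on the bit's marginal to get a ratio $\leq e^{|x_i-x'_i|}$, then inserts intermediate vectors to extend to several affected bits, and finally multiplies over coordinates. Your argument instead globalizes the analysis: you observe that each rounding bit $B_i$ can influence at most one cell (column $\lfloor x_i/\alpha\rfloor+1$, row $h_{\lfloor x_i/\alpha\rfloor+1}(i)$), so the $\tilde{z}_{a,b}$'s are independent given $x$ and $h$, and the log-likelihood ratio splits as a sum; then a per-cell Lipschitz bound (using $(1-2p)/p=\alpha$) together with the product-Lipschitz inequality and monotone telescoping over $b$ gives the $\|x-x'\|_1$ bound in one pass. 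Both arguments hinge on the same calibration identity, but yours buys a cleaner treatment of hash collisions, multiple bits, and multiple coordinates simultaneously, while the paper's buys elementarity---it never needs the cell-wise independence claim, which, as you correctly flagged, is the step requiring the most care in your version.
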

\begin{proof}
Let $x, x' \in \X$ denote two neighboring vectors. 
We prove the lemma in several steps. 
First, the vectors differ only in their $i$th entry. In this case, we start by assuming that only a single bit of $z$ is affected by changing $x$ to $x'$ and that there are no hash collisions. We then allow them to differ in several bits and include hash collisions. Finally, we generalize to the case that they differ in more than one entry.

Assume that $z$ differs only in a single bit for $x$ and $x'$. 
Let $Y$ denote the event that the affected bit is set to one after running the algorithm. Let $p=\frac{1}{\alpha+2}$ be the parameter of the randomized response step and let $q=1-p$. Then the probability of $Y$ occurring with input $x$ is $\Pr[Y \mid x] = (1-r) \cdot p + r \cdot q$, where $r=\frac{x_i}{\alpha} - \floor{\frac{\min(x_i, x'_i)}{\alpha}}$ denotes the probability of the bit being one before the randomized response step. 
Similarly for $x'$ we define $r'=\frac{x'_i}{\alpha} - \floor{\frac{\min(x_i, x'_i)}{\alpha}}$. The minimum term is needed when $\max(x_i, x'_i)$ is a multiple of $\alpha$ such that $\max(r,r')=1$. We find the difference in the probability of $Y$ occurring for $x$ and $x'$ as:
\begin{align*} 
  \Pr[Y \mid x] - \Pr[Y \mid x'] & = ((1-r)p + rq) - ((1-r')p + r'q) \\
  & = (r - r') \cdot (q-p) \\
  & = \frac{x_i-x'_i}{\alpha} \cdot \frac{\alpha}{\alpha+2} \\
  & = \frac{x_i-x'_i}{\alpha+2} \enspace .
\end{align*} 

By symmetry, the absolute difference in probability for setting the bit to either zero or one is $\frac{|x_i-x'_i|}{\alpha+2}$. Let $Z$ be an arbitrary output of Algorithm~\ref{alg:ALP1}. Since $x$ and $x'$ agree on all but the $i$th entry, the change in probability of outputting $Z$ depends only on the affected bit. 
Let $Y'$ denote the event that the bit agrees with output $Z$. Then we find the ratio of probabilities of outputting $Z$ as:

\begin{align*}
  \frac{\Pr[\text{\ALPone}(x') = Z]}{\Pr[\text{\ALPone}(x) = Z]} & = \frac{\Pr[Y' \mid x']}{\Pr[Y' \mid x]} \leq \frac{\Pr[Y' | x] + \frac{|x_i-x'_i|}{\alpha + 2}}{\Pr[Y' | x]} \\
  & \leq \frac{p + \frac{|x_i-x'_i|}{\alpha+2}}{p} 
  = 1 + |x_i - x'_i| \\
  & \leq e^{|x_i - x'_i|} \enspace .
\end{align*}

Here the second inequality follows from $p \leq \Pr[Y' | x] \leq q$. From here it is easy to take hash collisions into account as follows: Let $p'$ denote the probability of $Y$ occurring after setting the $i$th entry to zero. That is, we have $p \leq p' \leq q$ and $\Pr[Y \mid x]=(1-r)\cdot p' + r \cdot q$. The absolute difference in probability is still bounded such that $\Pr[Y \mid x] - \Pr[Y \mid x'] \leq \frac{|x_i-x'_i|}{\alpha+2}$. As such it still holds that:

\begin{align*}
  \frac{\Pr[\text{\ALPone}(x') = Z]}{\Pr[\text{\ALPone}(x) = Z]} & \leq e^{|x_i-x'_i|} \enspace .
\end{align*}

Next, we remove the assumption that only a single bit is affected by composing probabilities. We provide the following inductive construction. Let $x, x' \in \X$ be vectors that differ in the $i$th entry such that exactly two bits are affected. We consider the case of $x_i < x_i'$ and fix a vector $x'' \in \X$ with $x_i < x''_i < x'_i$ such that the differences affects exactly one bit each. 
Again, let $Z$ be an arbitrary output of Algorithm~\ref{alg:ALP1}.
Applying the upper bound from above twice, we may bound the change in probabilities by:

\begin{align*}
  \frac{\Pr[\text{\ALPone}(x') = Z]}{\Pr[\text{\ALPone}(x) = Z]} & = \frac{\Pr[\text{\ALPone}(x'') = Z]}{\Pr[\text{\ALPone}(x) = Z]} \\
  & \quad \cdot \frac{\Pr[\text{\ALPone}(x') = Z]}{\Pr[\text{\ALPone}(x'') = Z]} \\
  & \leq e^{|x_i - x''_i|} \cdot e^{|x''_i - x'_i|} \\
  & = e^{|x_i-x'_i|} \enspace ,
\end{align*}
which can be applied inductively if changing an entry affects more than two bits.

We are now ready to generalize to any vectors $x, x' \in \X$, i.e.,
where vectors may differ in more than a single position. Using the bound from above, we can bound the ratio of probabilities by: 
\begin{align*}
  \frac{\Pr[\text{\ALPone}(x') = Z]}{\Pr[\text{\ALPone}(x) = Z]} &\leq \prod_{i \in [d]} e^{|x_i-x'_i|} \\
  &= e^{\sum_{i\in [d]} |x_i-x'_i|} \\
  &= e^{\|x-x'\|_1} \enspace .
\end{align*}

The privacy loss is thus bounded by the $\ell_1$-distance of the vectors for any output. Recall that the $\ell_1$-distance is upper bounded by $1$ for two neighboring vectors.
As such the algorithm is $1$-differentially private as for any pair of neighboring vectors $x$ and $x'$ and any subset of outputs $S$ we have: 
\begin{align*}
  \Pr[\text{\ALPone}(x) \in S] & \leq e^{\|x-x'\|_1} \Pr[\text{\ALPone}(x') \in S] \\
  & \leq e \cdot \Pr[\text{\ALPone}(x') \in S]  \enspace .
\end{align*}
\end{proof}

The following lemma summarizes the space complexity of storing the bit-array and the collection of hash functions.
\begin{lemma} \label{lem:ALP1space}
  The number of bits required to store $h$ and $\tilde{z}$ is
  \begin{align*}
    O{\left(\frac{(s + \log d ) \cdot \beta}{\alpha}\right)} \enspace .
  \end{align*}
\end{lemma}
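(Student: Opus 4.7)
The proof is essentially a direct bit-counting argument that separately accounts for the bit-array $\tilde z$ and the hash family $h$, so I will split the plan along those two components and then add them.

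First, for the bit-array $\tilde z \in \{0,1\}^{s \times m}$ with $m = \lceil \beta/\alpha\rceil$, each cell stores a single bit after randomized response, so the total contribution is exactly $s \cdot m = s \cdot \lceil \beta/\alpha \rceil$ bits, which is $O(s\beta/\alpha)$. No real subtlety here beyond unfolding the definition of $m$ from Algorithm~\ref{alg:ALP1}.

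Second, for the hash sequence $h = (h_1,\ldots,h_m)$, I would invoke the universal hash family construction of Dietzfelbinger et al.~\cite{DietzfelbingerHKP97} cited in the preliminaries, which represents a single hash function from a universe of size $d$ to a range of size $s \leq d$ using $O(\log d)$ bits (and allows constant-time evaluation in the word-RAM model already assumed). Storing the whole sequence then takes $m \cdot O(\log d) = O\bigl(\tfrac{\beta \log d}{\alpha}\bigr)$ bits.

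Finally, adding the two contributions gives
\begin{align*}
O\!\left(\frac{s\beta}{\alpha}\right) + O\!\left(\frac{\beta \log d}{\alpha}\right) = O\!\left(\frac{(s + \log d)\cdot \beta}{\alpha}\right),
\end{align*}
which is the claimed bound. Since every step is a direct calculation, there is no real obstacle; the only thing to flag is the implicit dependence on the chosen hash-family representation, which is why the $O(\log d)$ factor in the second term is not removable in general. I would state the argument in this order so that the reader immediately sees which term in the final bound comes from the embedding and which from the hash functions.
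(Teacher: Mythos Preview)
Your proposal is correct and matches the paper's own proof essentially line for line: count $s\cdot m = O(s\beta/\alpha)$ bits for $\tilde z$, count $m\cdot O(\log d) = O(\beta\log d/\alpha)$ bits for the hash functions, and add. The only difference is that you explicitly cite the Dietzfelbinger et al.\ construction to justify the $O(\log d)$ cost per hash function, which the paper's proof leaves implicit.
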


\begin{proof}
  By definition $m=O{\left(\frac{\beta}{\alpha}\right)}$ and as such $s\cdot m=O{\left(\frac{s\beta}{\alpha}\right)}$ bits are used to store $\tilde{z}$. Each hash function uses $O(\log (d))$ bits for a total of $O{\left(\frac{\log(d)\beta}{\alpha}\right)}$ bits to store $h$.
\end{proof}

\subsection{Estimating an entry}
We now introduce the algorithm to estimate an entry based on the embedding from Algorithm~\ref{alg:ALP1}. When accessing the $i$th entry, we estimate the value of $y_i$ and multiply by $\alpha$ to reverse the initial scaling of $x_i$. The estimate of $y_i$ is chosen to maximize a partial sum. If multiple values maximize the sum we use their average.
\paragraph{Intuition.} The first $y_i$ bits representing the $i$th entry are set to one before applying noise in Algorithm~\ref{alg:ALP1}, cf. Figure~\ref{fig:projection}. The last $m-y_i$ bits are zero, except if there are hash collisions. 
Some bits might be flipped due to randomized response, but we expect the majority of the first $y_i$ bits to be ones and the majority of the remaining $m-y_i$ bits to be zeros. As such the estimate of $y_i$ is based on prefixes maximizing the difference between ones and zeros. The pseudocode for the algorithm is given as Algorithm~\ref{alg:ALPestimator}.

\begin{algorithm}[t]
  \caption{\ALPest \label{alg:ALPestimator}}
  \SetKwInOut{Parameters}{Parameters}\SetKwInOut{Input}{Input}\SetKwInOut{Output}{Output}
  \SetAlgoLined

  \Parameters{$\alpha > 0$.}
  \Input{Embedding $\tilde{z} \in \{0,1\}^{s \times m}$. Sequence of hash functions $h=(h_1,\ldots,h_m)$. Index $i \in [d]$.}
  \Output{Estimate of $x_i$.}

  \nl Define the function $f\colon \{0,\dots,m\} \rightarrow \mathbb{Z}$ as:
  $$f(n)= \sum_{a=1}^n 2\tilde{z}_{h_a(i),a}-1$$ \\
  \nl Let $P$ be the set of arguments maximizing $f$. That is,
  $$P = \{ n \in \{0,\dots,m\} : f(a) \leq f(n) \text{ for all }a \in \{0,\dots,m\}\}$$ \\
  \nl Let $\tilde{y}_i = \mathrm{average}(P)$ \\
  \nl Return $\tilde{y}_i \cdot \alpha$.
\end{algorithm}

\begin{figure}[t]
  \includegraphics[width=\linewidth]{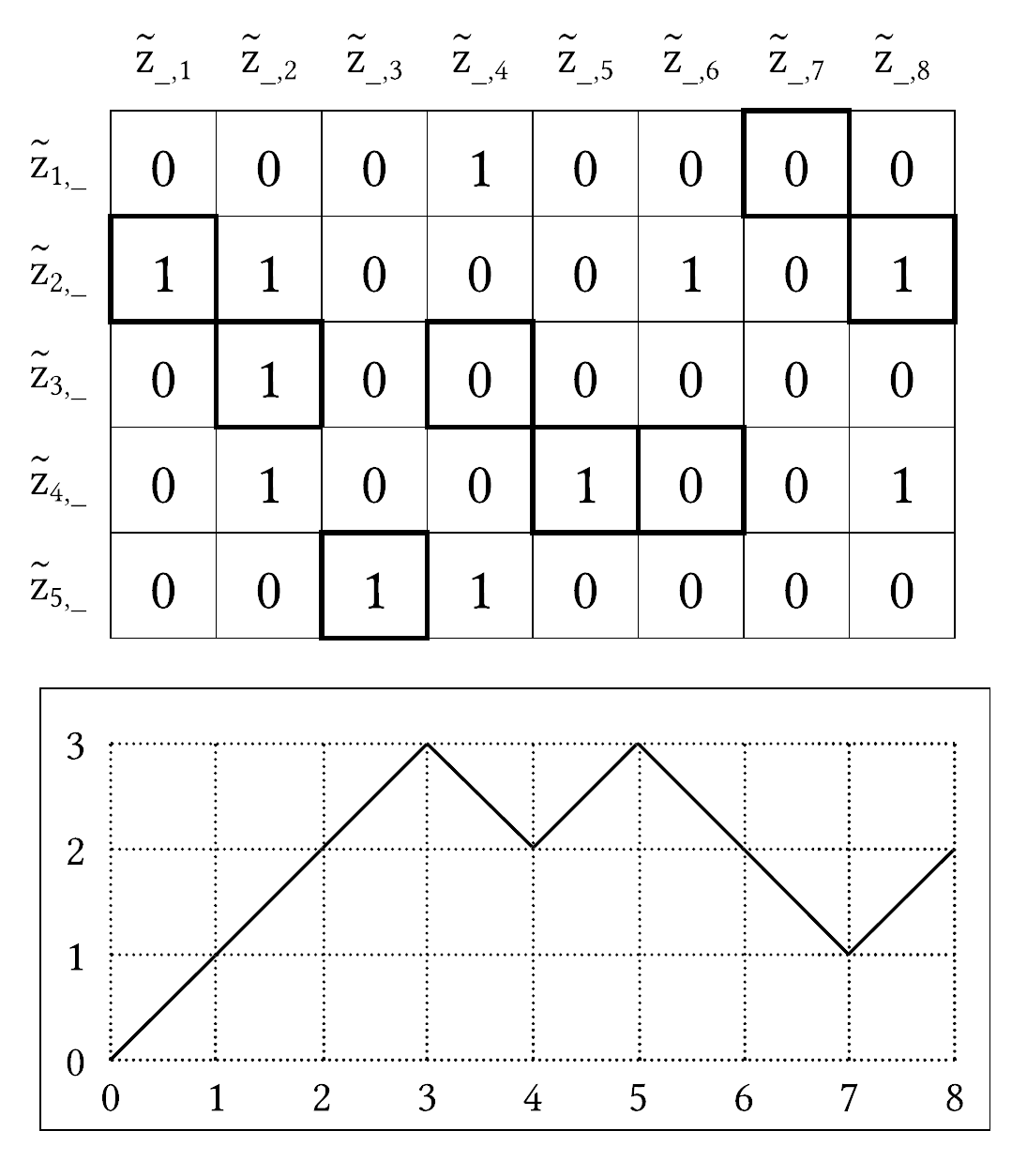} 
  \Description[Visualization of ALP1-estimation]{We use $m$ hash functions to estimate the projected entry.}
  \captionsetup{justification=centering}
  \caption{Estimation of $i$th entry from Figure~\ref{fig:projection}.\\
  The partial sum is maximized at indices 3 and 5. \\
  The estimate is 4, while the true value was 5.
  } 
  \label{fig:estimation}
\end{figure}

Figure~\ref{fig:estimation} shows an example of Algorithm~\ref{alg:ALPestimator}. The example is based on the embedding from Figure~\ref{fig:projection} after adding noise.
The plot shows the value of $f$ for all candidate estimates. 
This sum is maximized at positions $3$ and $5$. This is visualized as the global \emph{peaks} in the plot. The estimate is the average of those positions. 

\begin{lemma} \label{lem:ALP1estrunningtime}
  The evaluation time of Algorithm~\ref{alg:ALPestimator} is $O{\left(\frac{\beta}{\alpha}\right)}$. 
\end{lemma}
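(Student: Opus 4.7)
\medskip

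The plan is to show that the algorithm can be implemented so that every step runs in time proportional to $m = \lceil \beta/\alpha \rceil$, and then invoke the word RAM assumptions from the preliminaries.

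First, I would compute $f$ incrementally via the recurrence $f(n) = f(n-1) + 2\tilde{z}_{h_n(i),n} - 1$ with $f(0) = 0$. Each update requires one hash evaluation $h_n(i)$, one bit lookup into $\tilde{z}$, and one integer addition. By the choice of universal hash family of~\cite{DietzfelbingerHKP97} mentioned in the preliminaries, each $h_n(i)$ is computed in constant time in the $w$-bit word RAM model; the bit lookup and arithmetic are also constant time. So all $m+1$ values of $f$ are produced in $O(m)$ time.

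During the same pass I would maintain the running maximum and the set $P$ of indices attaining it: whenever $f(n)$ strictly exceeds the current maximum I reset $P$ to $\{n\}$ and update the max, whenever $f(n)$ equals it I append $n$ to $P$, and otherwise I do nothing. Equivalently (and avoiding storing $P$ explicitly), I can keep only the sum of indices in $P$ and the cardinality of $P$, updating these in constant time per step. This yields $\mathrm{average}(P) = \tilde{y}_i$ in $O(1)$ time after the pass; multiplying by $\alpha$ is a single arithmetic operation.

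Summing up, all steps together take $O(m) = O(\lceil \beta/\alpha \rceil) = O(\beta/\alpha)$ time, which is the claimed bound. There is no real obstacle here, but the one point worth being explicit about is that we rely on the word RAM model (constant-time hash evaluation, bit access, and arithmetic on $w$-bit words) in order to charge each of the $m$ iterations only $O(1)$ time; absent that, an additional $O(\log d)$ factor would appear from the hash evaluations.
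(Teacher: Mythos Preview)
Your proposal is correct and follows essentially the same approach as the paper's proof: both compute the partial sums incrementally (the paper calls it ``dynamic programming'') to obtain an $O(m)$ running time with $m=\lceil\beta/\alpha\rceil$. Your version is more detailed---spelling out the recurrence, the constant-time hash evaluation, and how to extract $\mathrm{average}(P)$ in a single pass---but the underlying idea is identical.
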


\begin{proof}
  We can compute all partial sums by evaluating each bit $(\tilde{z}_{h_1(i),1},\ldots,\tilde{z}_{h_m(i),m})$ once using dynamic programming. As such the evaluation time is $O(m)$ with $m=\ceil{\frac{\beta}{\alpha}}$. We have $m=O(\beta)$ when $\alpha=\Theta(1)$.
\end{proof}

We now analyze the per-entry error of Algorithm~\ref{alg:ALPestimator}. We first analyze the expected error based on the parameters of the algorithm. The results are presented in Lemma~\ref{lem:ALP1fixederror}. In Lemmas~\ref{lem:probboundonerror} and~\ref{lem:maxerrorwithprob} we bound the tail distribution of the per-entry error of the algorithm.

\begin{lemma} \label{lem:ALP1errorroughbound}
  The expected per-entry error of Algorithm~\ref{alg:ALPestimator} is bounded by $(\frac{1}{2} + \E[|y_i - \tilde{y}_i|]) \cdot \alpha$ for entries with a value of at most $\beta$.
\end{lemma}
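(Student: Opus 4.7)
The plan is to decompose the error via the triangle inequality into two independent contributions: the rounding error introduced in Step (1) of Algorithm~\ref{alg:ALP1}, and the estimation error $|y_i - \tilde y_i|$ from recovering $y_i$ out of the noisy embedding.

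Concretely, let $\tilde y_i$ denote the intermediate quantity produced by \ALPest{} before multiplication by $\alpha$, and recall that $y_i = \RRound{x_i/\alpha}$. Then
\begin{align*}
  |x_i - \tilde y_i \cdot \alpha|
  &= \bigl|\,x_i - y_i\cdot\alpha + y_i\cdot\alpha - \tilde y_i\cdot\alpha\,\bigr|\\
  &\leq |x_i - y_i\cdot\alpha| + \alpha\cdot|y_i-\tilde y_i|.
\end{align*}
Taking expectations and using linearity gives
\[
  \E[|x_i - \tilde y_i\cdot\alpha|] \leq \E[|x_i - y_i\cdot\alpha|] + \alpha\cdot\E[|y_i-\tilde y_i|].
\]
For the first term, rewrite $|x_i - y_i\cdot\alpha| = \alpha\cdot|x_i/\alpha - \RRound{x_i/\alpha}|$ and apply Lemma~\ref{lem:rrounderror} to bound this by $\alpha/2$. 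Substituting yields the claimed bound $(\tfrac{1}{2} + \E[|y_i-\tilde y_i|])\cdot\alpha$.

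The role of the hypothesis $x_i \leq \beta$ is just to make this decomposition sound: since $m = \lceil \beta/\alpha \rceil$, the value $y_i = \RRound{x_i/\alpha}$ lies in $\{0,1,\dots,m\}$, so $y_i$ is actually representable in the embedding and the comparison $|y_i - \tilde y_i|$ (with $\tilde y_i \in \{0,\dots,m\}$ by construction) is meaningful. If $x_i > \beta$ this fails, since the unary encoding would be truncated and the rounding error identity would no longer hold. There is no real obstacle here — the whole argument is a one-line triangle-inequality split followed by Lemma~\ref{lem:rrounderror} — the only thing to be careful about is separating the randomness of the rounding from the randomness of the randomized response / hash collisions used to produce $\tilde y_i$, which is immediate because the bound $\E[|x_i/\alpha - \RRound{x_i/\alpha}|] \leq 1/2$ holds pointwise over any realization of the remaining randomness.
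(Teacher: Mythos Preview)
Your proof is correct and follows essentially the same approach as the paper: apply the triangle inequality to split $|x_i/\alpha - \tilde y_i|$ through $y_i$, then invoke Lemma~\ref{lem:rrounderror} to bound the rounding term by $1/2$. Your added discussion of why the hypothesis $x_i \leq \beta$ is needed is more explicit than the paper's (the paper leaves this implicit and treats the $x_i > \beta$ overflow separately in Lemma~\ref{lem:ALP1fixederror}), though note that $\tilde y_i$ is an average over $P \subseteq \{0,\dots,m\}$ and hence lies in $[0,m]$ rather than in $\{0,\dots,m\}$ as you wrote.
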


\begin{proof}
  It is clear that the error of the $i$th entry is $\alpha$ times the difference between $\tilde{y}_i$ and $\frac{x_i}{\alpha}$. The expected difference is bounded by:
  \begin{align*}
    \E[|\frac{x_i}{\alpha}-\tilde{y}_i|] & \leq \E[|\frac{x_i}{\alpha}-y_i|] + \E[|y_i - \tilde{y}_i|] \\
    & \leq \frac{1}{2} + \E[|y_i - \tilde{y}_i|]  \enspace .
  \end{align*}
  The last inequality follows from Lemma~\ref{lem:rrounderror}.
\end{proof}

We find an upper bound on $\E[|y_i - \tilde{y}_i|]$ by analyzing simple random walks. A simple random walk is a stochastic process such that $S_0=0$ and $S_n=\sum_{\ell=1}^n X_\ell$, where $X$ are independent and identically distributed random variables with $\Pr[X_\ell = 1] = p$ and $\Pr[X_\ell = -1] = 1 - p = q$. 

\begin{lemma} \label{lem:randomwalkmonkey}
  Let $S$ be a simple random walk with $p<q$. At any step $n$ the probability that there exists a later step $\ell > n$ such that $S_\ell > S_n$ is $\frac{p}{q}$.
\end{lemma}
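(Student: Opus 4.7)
The plan is to reduce to the case $n=0$ using the fact that the increments of a simple random walk are i.i.d., so the process $(S_{n+\ell} - S_n)_{\ell \geq 0}$ has the same distribution as $(S_\ell)_{\ell \geq 0}$. Thus it suffices to compute $h := \Pr[\exists \ell \geq 1 : S_\ell \geq 1]$, since for integer-valued walks the strict inequality $S_\ell > S_n$ is equivalent to $S_{n+\ell} - S_n \geq 1$.

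Next I would carry out a standard first-step analysis together with the strong Markov property. Conditioning on $X_1$: with probability $p$ the walk immediately reaches $1$, contributing $p$ to $h$. With probability $q$ the walk is at $-1$ after step $1$, and from $-1$ the event $\{S_\ell \geq 1 \text{ for some } \ell\}$ is equivalent to first hitting $0$ and then hitting $1$. By the strong Markov property (and stationarity of increments), each of these two single-unit upcrossings occurs with probability $h$, and they happen in succession, so the joint probability is $h^2$. This yields the quadratic
\begin{equation*}
h \;=\; p \,+\, q\,h^2.
\end{equation*}

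Solving $q h^2 - h + p = 0$ gives $h = \frac{1 \pm \sqrt{1-4pq}}{2q}$. Since $p+q=1$, one has $1-4pq = (q-p)^2$, and with $p<q$ the two roots simplify to $h=1$ and $h = p/q$. To single out the correct root, I would invoke the strong law of large numbers: because $\E[X_\ell] = p-q < 0$, we have $S_n/n \to p-q < 0$ almost surely, hence $S_n \to -\infty$ and in particular $S_n$ exceeds any fixed positive level only finitely often, so $h < 1$. This forces $h = p/q$, as claimed.

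The only subtle step is the justification of the $h^2$ factor: one must argue cleanly that hitting level $+1$ starting from $-1$ decomposes into two independent copies of the upcrossing event, via the strong Markov property applied at the hitting time of $0$. Everything else is routine algebra plus an appeal to the SLLN to discard the spurious root $h=1$.
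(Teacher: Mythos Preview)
Your argument is correct and classical; the only cosmetic point is that ``$S_n$ exceeds any fixed positive level only finitely often'' does not by itself force $h<1$ (it could exceed level $1$ exactly once almost surely). The clean way to discard the root $h=1$ is to observe that $h=1$ would, by iterating the strong Markov property, give $\limsup_n S_n = +\infty$ almost surely, contradicting $S_n\to -\infty$.

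As for comparison with the paper: the paper's proof of this lemma is a one-line citation (``It follows directly from Theorem~1 by Alm~\cite{alm2002simple}'') and supplies no argument of its own. Your proposal is therefore not just a different route but strictly more --- a complete, self-contained derivation via first-step analysis and the quadratic $h=p+qh^2$, which is the standard textbook treatment of first-passage probabilities for biased simple random walks. What you gain is independence from an external reference; what the paper gains is brevity. Both are perfectly adequate for the lemma's role in the paper, which is only as an input to the computation in Lemma~\ref{lem:randomwalk}.
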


\begin{proof}
  It follows directly from Theorem~1 by Alm~\cite{alm2002simple}.
\end{proof}

For our analysis, we are concerned with the maximum $n$ such that $S_n \geq 0$. For an infinite random walk where $p < q$ such an $n$ exists with probability $1$. 

\begin{lemma} \label{lem:randomwalk}
  Let $S$ be a simple random walk with $p<q$. The expected last non-negative step of $S$ is: $\E[\max_n: S_n \geq 0]=\frac{4pq}{(q-p)^2}$.
\end{lemma}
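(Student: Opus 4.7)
The plan is to characterize the last non-negative step, use the preceding lemma (or an equivalent ballot-type argument) to compute a one-step probability, and then evaluate a generating-function sum.

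First I would argue that $N=\max\{n:S_n\geq 0\}$ coincides with the last visit to zero. Since $p<q$, the walk drifts to $-\infty$, so from any state $k>0$ the walk almost surely returns to $0$ (a $\pm 1$ walk cannot escape to $+\infty$ and must cross $0$ on its way down). Hence if $S_n>0$ then some later step lies at $0$, so $n\neq N$; also steps with $S_n<0$ are disqualified by definition. Therefore $\{N=n\}$ forces $S_n=0$, and moreover requires that after time $n$ the walk never again reaches height $\geq 0=S_n$.

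Next I would compute the conditional probability $\Pr[\text{no }\ell>n:\,S_\ell\geq S_n\mid S_n=0]$. One cheap route is to invoke Lemma~\ref{lem:randomwalkmonkey}: from $S_n$, the walk strictly exceeds $S_n$ with probability $p/q$. To weaken strict to non-strict, condition on the first step: with probability $p$ we immediately sit at $S_n+1\geq S_n$, and with probability $q$ we drop to $S_n-1$ and must climb one unit, which succeeds with probability $p/q$. Thus $\Pr[\exists\ell>n:S_\ell\geq S_n]=p+q\cdot (p/q)=2p$, giving
\begin{equation*}
\Pr[N=n]=(q-p)\Pr[S_n=0].
\end{equation*}
By the Markov property, the future walk is independent of the past, so this factorization is legitimate.

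Then I would evaluate $\E[N]=(q-p)\sum_{n\geq 0} n\Pr[S_n=0]$. Only even indices contribute, with $\Pr[S_{2n}=0]=\binom{2n}{n}(pq)^n$, so I need $\sum_{n\geq 0} 2n\binom{2n}{n}(pq)^n$. Differentiating the classical identity $\sum_{n\geq 0}\binom{2n}{n}x^n=(1-4x)^{-1/2}$ gives $\sum_{n\geq 0}n\binom{2n}{n}x^n=2x/(1-4x)^{3/2}$. Substituting $x=pq$ and using $1-4pq=(q-p)^2$ yields $\sum_n n\Pr[S_n=0]=4pq/(q-p)^3$. Multiplying by $(q-p)$ produces the claimed value $4pq/(q-p)^2$.

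The main obstacle is the non-strict vs strict inequality subtlety in applying Lemma~\ref{lem:randomwalkmonkey}; the generating-function evaluation is routine, and the reduction $\{N=n\}=\{S_n=0\}\cap\{\text{walk never returns to }\geq 0\}$ is the key structural observation that makes the sum tractable.
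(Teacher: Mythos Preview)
Your proposal is correct and follows essentially the same approach as the paper: both reduce to showing $\Pr[N=n]=(q-p)\Pr[S_n=0]$ via a first-step conditioning argument built on Lemma~\ref{lem:randomwalkmonkey}, and both evaluate the resulting sum $\sum_n 2n\binom{2n}{n}(pq)^n$ through the generating function $(1-4x)^{-1/2}$ and the identity $1-4pq=(q-p)^2$. The only cosmetic difference is that the paper computes the ``stay strictly below'' probability $q-p$ directly as $q\cdot(1-p/q)$, whereas you compute the complementary ``ever reach $\geq S_n$'' probability as $p+q\cdot(p/q)=2p$; these are two sides of the same conditioning.
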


\begin{proof}
  We use Lemma~\ref{lem:randomwalkmonkey} to find the probability that $S_n$ is the unique maximum in $\{S_n,\ldots,S_\infty\}$ as follows:
  \begin{align*}
    \Pr[S_n > \max(\{S_{n+1},\ldots,S_\infty\})] & = \Pr[X_{n+1} = -1] \cdot \\
    & \quad \Pr[S_{n+1} = \max(\{S_{n+1},\ldots,S_\infty\})] \\
    & = q \cdot (1-\frac{p}{q}) \\
    & = q - p  \enspace .
  \end{align*}
  
  The last non-negative step must have value exactly zero and as such must be at an even numbered step. The probability that step $2i$ is the last non-negative is:
  \begin{align*}
    \Pr[(\max_n: S_n \geq 0) = 2i] & = \Pr[S_{2i} = 0] \cdot \\
    & \quad \Pr[S_i > \max(\{S_{i+1},\ldots,S_\infty\})] \\
    & = {2i \choose i} (pq)^i (q-p) \enspace .
  \end{align*}

  We are now ready to find the expected last non-negative step of an infinite simple random walk as:
  \begin{align*}
    \E[\max_n: S_n \geq 0] & = \sum_{i=0}^\infty 2i \cdot \Pr[(\max_n: S_n \geq 0 ) = 2i] \\
    & = \sum_{i=0}^\infty 2i {2i \choose i} (pq)^i (q-p) \\
    & = 2(q-p) \sum_{i=0}^\infty i {2i \choose i} (pq)^i \\
    & = \frac{4pq}{(q-p)^2} \enspace .
  \end{align*}

  The last equality follows from the identity $\sum_{i=0}^\infty i {2i \choose i} (pq)^i = \frac{2pq}{(q-p)^3}$. See Appendix~\ref{appendix:sumclosedform} for a proof of this identity.
\end{proof}

We are now ready to bound $\E[|y_i-\tilde{y}_i|]$. We consider entries with value at most $\beta$, i.e., $y_i \leq m$.

\begin{lemma} \label{lem:ALP1yibound}
  Let $y_i \leq m$ and $\gamma=\frac{\alpha+2}{1+\frac{\alpha k}{s}}- 2$. Then the expected value of $|y_i - \tilde{y}_i|$ is bounded such that
  \begin{align*}
    \E[|y_i - \tilde{y}_i|] \leq \frac{4\alpha + 4}{\alpha^2}+\frac{4\gamma + 4}{\gamma^2} \enspace .
  \end{align*}
\end{lemma}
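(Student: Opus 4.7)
The plan is to express $|y_i - \tilde{y}_i|$ in terms of the extreme elements of the argmax set $P$ and then bound the two resulting quantities by the index of the last non-negative step of two simple random walks, so that Lemma~\ref{lem:randomwalk} can be applied directly. Writing $P^- = \min P$ and $P^+ = \max P$, the fact that $\tilde{y}_i$ is an average of elements of $P$ gives $\tilde{y}_i \in [P^-, P^+]$, and a short case split (depending on whether $y_i$ lies below, inside, or above this interval) yields
\[
  |y_i - \tilde{y}_i| \;\leq\; (y_i - P^-)_+ + (P^+ - y_i)_+,
\]
so it suffices to bound the expectation of each term separately.

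For the first term I read the values $f(y_i), f(y_i-1), \ldots, f(0)$ as a walk going leftward from $y_i$. For every position $b \leq y_i$, the cell $z_{h_b(i),b}$ is forced to $1$ in Step~\ref{line:ALPz} regardless of any hash collisions with other non-zero entries, so after randomized response $\tilde{z}_{h_b(i),b}$ equals $1$ with probability $q = (\alpha+1)/(\alpha+2)$ and $0$ with probability $p = 1/(\alpha+2)$, independently across $b$ (different positions use distinct, independently chosen hash functions $h_b$ and independent randomized-response coins). Each leftward step of $f$ is therefore $+1$ with probability $p$ and $-1$ with probability $q$, matching the setting of Lemma~\ref{lem:randomwalk}. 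Since $P^- \leq y_i$ implies $f(P^-) \geq f(y_i)$, the quantity $(y_i - P^-)_+$ is at most the last non-negative step of this walk; extending the finite walk to an infinite one only enlarges this index, so Lemma~\ref{lem:randomwalk} gives
\[
  \E\bigl[(y_i - P^-)_+\bigr] \;\leq\; \frac{4pq}{(q-p)^2} \;=\; \frac{4\alpha+4}{\alpha^2}.
\]

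For the second term the analysis is the same in spirit, but now hash collisions drive the walk. For $b > y_i$, the cell $z_{h_b(i),b}$ equals $1$ only if some other non-zero entry $j \neq i$ with $y_j \geq b$ satisfies $h_b(j)=h_b(i)$, an event of probability at most $(k-1)/s \leq k/s$ by universality and a union bound over the at most $k-1$ such indices. Composing with randomized response gives $\Pr[\tilde{z}_{h_b(i),b}=1] \leq p + (k/s)(q-p) = (1+\alpha k/s)/(\alpha+2) = 1/(\gamma+2)$, by the definition of $\gamma$. Independence across $b$ still holds, so a pointwise coupling shows that the rightward walk is dominated by the i.i.d.\ walk obtained by replacing $\alpha$ with $\gamma$; the assumption $s > 2k$ is exactly what guarantees $1/(\gamma+2) < 1/2$, i.e., the negative drift required by Lemma~\ref{lem:randomwalk}. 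Repeating the leftward argument on the dominating walk then gives $\E[(P^+ - y_i)_+] \leq (4\gamma+4)/\gamma^2$, and summing the two bounds yields the lemma.

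The main obstacle is this rightward walk: its step probabilities vary with $b$ because the number of potential colliders (indices $j \neq i$ with $y_j \geq b$) depends on the input vector, so Lemma~\ref{lem:randomwalk} cannot be applied directly. Reducing to a single i.i.d.\ dominating walk driven by the worst-case collision probability $k/s$ is precisely what motivates introducing the parameter $\gamma$, and it lets us reuse the closed form $4pq/(q-p)^2$. A minor additional check is that the coupling is monotone in each coordinate so that the last non-negative step of the true walk is indeed dominated, which follows from per-position coupling together with independence across $b$.
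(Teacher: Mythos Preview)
Your proof is correct and follows essentially the same route as the paper: bound the deviation of the estimator from $y_i$ in each direction by the last non-negative step of a simple random walk and apply Lemma~\ref{lem:randomwalk}. The only cosmetic difference is that the paper works with a single element $\bar{y}_i\in P$ furthest from $y_i$ and splits by the sign of $y_i-\bar{y}_i$, whereas you use $P^-,P^+$ and positive parts; your explicit remarks on independence across positions and on the monotone coupling needed to pass from the position-dependent collision probabilities to the i.i.d.\ walk with parameter $1/(\gamma+2)$ actually make rigorous a point the paper leaves implicit.
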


\begin{proof}
  Recall the definition of $P$ from Algorithm~\ref{alg:ALPestimator}. Let $\bar{y}_i \in P$ denote an element furthest from $y_i$ that is $|y_i-a| \leq |y_i - \bar{y}_i|$ for all $a \in P$. It it clearly sufficient to consider $\bar{y}_i$ for the proof since $|y_i - \tilde{y}_i| \leq |y_i - \bar{y}_i|$. 
  We first consider the case of $\bar{y}_i \leq y_i$. It follows from the definition of $\bar{y}_i$ as a maximum that $\sum_{j=\bar{y}_i+1}^{y_i} \tilde{z}_{h_j(i),j} \leq 0$. As such at least half the bits $(\tilde{z}_{h_{\bar{y}_i+1}(i),\bar{y}_i+1},\ldots,\tilde{z}_{h_{y_i}(i),y_i})$ must be zero, that is they were flipped by randomized response in Step~(3) of Algorithm~\ref{alg:ALP1}.
  As such the length of the longest interval ending at bit $\tilde{z}_{h_{y_i}(i),y_i}$ where at least half the bits were flipped is an upper bound on the value of $y_i-\bar{y}_i$.
  The expected size of said interval is bounded by the expected last non-negative step of a simple random walk with $p=\frac{1}{\alpha+2}$. It follows from Lemma~\ref{lem:randomwalk} that:
  \begin{align*}
    \E[y_i-\bar{y}_i \mid \bar{y}_i \leq y_i] & \leq \frac{4pq}{(q-p)^2} 
     = \frac{\frac{4\alpha + 4}{(\alpha+2)^2}}{\frac{\alpha^2}{(\alpha+2)^2}} 
     = \frac{4\alpha + 4}{\alpha^2} \enspace .
  \end{align*}
  We can use a similar argument when $y_i \geq \bar{y}_i$ to show that at least half the bits in $(\tilde{z}_{h_{y_i+1}(i),y_i+1},\ldots,\tilde{z}_{h_{\bar{y}_i}(i),\bar{y}_i})$ must be $1$ since $\bar{y}_i$ is a maximum. In this case we have to consider the possibility of hash collisions. Each hash function maps to $[s]$ and at most $k$ entries result in a hash collision. The probability of a hash collision is at most $\frac{k}{s}$ using a union bound. As such for $j > y_i$ we have $\Pr[\tilde{z}_{h_{j}(i),j} = 1] \leq (1- \frac{k}{s}) \cdot p + \frac{k}{s} \cdot q = \frac{1 + \frac{\alpha k}{s}}{\alpha + 2}$. We let $\frac{1 + \frac{\alpha k}{s}}{\alpha + 2}=\frac{1}{\gamma+2}$ such that $\E[\bar{y}_i - y_i \mid \bar{y}_i \geq y_i]\leq \frac{4\gamma + 4}{\gamma^2}$ by Lemma~\ref{lem:randomwalk} and the calculation above. We isolate $\gamma$ to find:
  \begin{align*}
    \frac{1}{\gamma + 2} &= \frac{1 + \frac{\alpha k}{s}}{\alpha + 2} \\
    \left(\Leftrightarrow\right) \quad \gamma + 2 &= \frac{\alpha + 2}{1 + \frac{\alpha k}{s}} \\ 
    \left(\Leftrightarrow\right)   \quad \phantom{+ 21} \gamma &= \frac{\alpha + 2}{1 + \frac{\alpha k}{s}} - 2 \enspace .
  \end{align*} 

  Note that $\gamma > 0$ holds due to the requirement $s > 2k$ of Algorithm~\ref{alg:ALP1}. By conditional expectation, we may upper bound the total expected error by
  \begin{align}\label{eq:error-bound}
    \E[|y_i-\tilde{y}_i|] & \leq \E[|y_i - \bar{y}_i|] \nonumber \\
    & \leq \E[y_i-\bar{y}_i \mid \bar{y}_i \leq y_i] + \E[\bar{y}_i-y_i \mid \bar{y}_i \geq y_i] \nonumber \\
    & \leq \frac{4\alpha + 4}{\alpha^2} + \frac{4\gamma + 4}{\gamma^2} \enspace .
  \end{align}
\end{proof}

As such we can bound the expected per-entry error for entries with a true value of at most $\beta$ by a function of the parameters $\alpha$ and $s$. In Section~\ref{sec:experiments} we discuss the choice of these parameters based on the upper bond and experiments. 
For any fixed values of $\alpha$ and $\frac{k}{s}$ we have:

\begin{lemma} \label{lem:ALP1fixederror}
  Let $\alpha=\Theta(1)$ and $s=\Theta(k)$. Then the expected per-entry error of Algorithm~\ref{alg:ALPestimator} is $\E[|x_i-\tilde{x}_i|] \leq \max(0, x_i-\beta) + O(1)$.
\end{lemma}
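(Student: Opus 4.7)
The plan is to combine Lemma~\ref{lem:ALP1errorroughbound} and Lemma~\ref{lem:ALP1yibound} for the easy case $x_i \leq \beta$, and then extend to $x_i > \beta$ by a truncation argument exploiting that only $m$ bits of the unary representation are stored.

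For $x_i \leq \beta$, Lemma~\ref{lem:ALP1errorroughbound} bounds the expected error by $(1/2 + \E[|y_i - \tilde{y}_i|])\cdot \alpha$, and Lemma~\ref{lem:ALP1yibound} bounds $\E[|y_i-\tilde{y}_i|]$ by $(4\alpha+4)/\alpha^2 + (4\gamma+4)/\gamma^2$ with $\gamma = (\alpha+2)/(1+\alpha k/s) - 2$. I would first check that $\gamma = \Theta(1)$ under the assumptions: since $s > 2k$ we have $\alpha k/s < \alpha/2$, so $\gamma > 0$; and since both $\alpha$ and $s/k$ are constants, $\gamma$ is a positive constant. Hence the total expected error is $O(1)$, which matches the claim because $\max(0, x_i - \beta) = 0$ in this range.

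For $x_i > \beta$, let $y_i^\ast := \min(y_i, m)$ and decompose
\[
|x_i - \tilde{y}_i\alpha| \;\le\; |x_i - y_i\alpha| \;+\; \alpha(y_i - y_i^\ast) \;+\; \alpha\,|y_i^\ast - \tilde{y}_i|.
\]
The first term has expectation at most $\alpha/2$ by Lemma~\ref{lem:rrounderror}. For the second, since $y_i \le \lceil x_i/\alpha \rceil$ and $m \ge \beta/\alpha$, one deterministically has $y_i - y_i^\ast \le \max(0,\, (x_i-\beta)/\alpha + 1)$, so after multiplication by $\alpha$ this contributes at most $\max(0,\, x_i - \beta) + \alpha$. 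For the third, I would observe from Step~(2) of Algorithm~\ref{alg:ALP1} that the embedding depends on $y_i$ only through $y_i^\ast$, because only indices $b \in [m]$ are stored. Thus the first half of the proof of Lemma~\ref{lem:ALP1yibound} (the random-walk argument bounding $\E[y_i - \bar{y}_i \mid \bar{y}_i \le y_i]$) applies verbatim with $y_i$ replaced by $y_i^\ast$, and since $\tilde{y}_i \le m = y_i^\ast$ whenever $y_i > m$, the ``other side'' case $\bar{y}_i \ge y_i^\ast$ contributes only the usual $O(1)$ as in Lemma~\ref{lem:ALP1yibound}. In either regime, $\E[|y_i^\ast - \tilde{y}_i|] = O(1)$.

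Summing the three contributions and using $\alpha = \Theta(1)$ yields $\E[|x_i - \tilde{x}_i|] \le \max(0,\, x_i - \beta) + O(1)$, as claimed. The main obstacle is the third term: one has to be careful to invoke the truncation observation (so that the random walk from Lemma~\ref{lem:ALP1yibound} applies unchanged up to replacing $y_i$ with $y_i^\ast$) rather than re-deriving the bound from scratch. Everything else is a direct triangle-inequality decomposition plus checking that $\gamma$ stays a positive constant under the given parameter regime.
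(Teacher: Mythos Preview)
Your proposal is correct and follows essentially the same approach as the paper: invoke Lemmas~\ref{lem:ALP1errorroughbound} and~\ref{lem:ALP1yibound} for $x_i \le \beta$, and for $x_i > \beta$ use the observation that the embedding depends on $y_i$ only through $\min(y_i,m)$, so one incurs an additional additive $x_i - \beta$ on top of the same $O(1)$ bound. The paper compresses the $x_i > \beta$ case into a single sentence (``entries above $\beta$ have an additional error of up to $x_i - \beta$''), whereas you make the triangle-inequality decomposition and the bound $y_i - y_i^\ast \le (x_i-\beta)/\alpha + 1$ explicit; but the substance is identical.
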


\begin{proof}
  It follows from Lemmas~\ref{lem:ALP1errorroughbound} and~\ref{lem:ALP1yibound} that the expected error for any entry bounded by $\beta$ is:
  \begin{align*}
    \E[\lvert x_i - \tilde{x}_i \rvert \mid x_i \leq \beta ] & \leq {\left(\frac{1}{2} + \frac{4\alpha + 4}{\alpha^2} + \frac{4\gamma + 4}{\gamma^2}\right)} \cdot \alpha \enspace ,
  \end{align*}
  
  where $\gamma=\frac{\alpha + 2}{1 + \frac{\alpha k}{s}}-2$. Entries above $\beta$ have an additional error of up to $x_i - \beta$, since $y_i = m$ and $y_i > m$ are represented identically in the embedding by Algorithm~\ref{alg:ALP1}. Since $\alpha$ and $\frac{k}{s}$ are constants we have:
  \begin{align*}
    \E[\lvert x_i - \tilde{x}_i \rvert] \leq \max(0, x_i - \beta) + O(1) \enspace .
  \end{align*}
\end{proof}

Next, we bound the tail probabilities for the per-entry error of the mechanism. We bound the error of the estimate $\tilde{y}_i$, which implies bounds on the error of the mechanism.

\begin{lemma} \label{lem:probboundonerror}
  Let $\gamma=\frac{\alpha+2}{1+\frac{\alpha k}{s}}-2$ and $\tau > 0$. 
  Let $p=\frac{1}{\gamma+2}$ and $q=1-p$. Then for Algorithm~\ref{alg:ALPestimator} we have:
  \begin{align*}
    \Pr[|y_i - \tilde{y}_i| \geq \tau] \leq \frac{2\cdot (4pq)^{\tau/2}}{\sqrt{\pi}(q-p)} \enspace ,
  \end{align*}
\end{lemma}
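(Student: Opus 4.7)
The plan is to mirror the random-walk reduction used in the proof of Lemma~\ref{lem:ALP1yibound} and then replace the expected-value calculation with a direct tail estimate on the last non-negative step of a simple random walk.

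First I would reduce as in Lemma~\ref{lem:ALP1yibound}. Define $\bar{y}_i \in P$ as a maximizer of $f$ that is farthest from $y_i$. Split according to whether $\bar{y}_i \leq y_i$ or $\bar{y}_i \geq y_i$. In the first case, on the interval $(\bar{y}_i, y_i]$ at least half the bits must be zeros (because $f$ did not increase on that suffix), so $|y_i-\bar{y}_i|$ is stochastically dominated by the last non-negative step of a simple random walk with bias $p=\frac{1}{\alpha+2}$. In the second case, hash collisions inflate the per-bit bias to $\frac{1}{\gamma+2}$, and again the error is dominated by the corresponding random walk with parameter $p=\frac{1}{\gamma+2}$. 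Since $\gamma \leq \alpha$, using the worse parameter $p=\frac{1}{\gamma+2}$ in both cases and a union bound gives
\[
\Pr[|y_i-\tilde{y}_i| \geq \tau] \;\leq\; 2\,\Pr\!\left[\max\{n: S_n \geq 0\} \geq \tau\right],
\]
where $S$ is the simple random walk of Lemma~\ref{lem:randomwalk}.

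Next I would use the exact expression derived inside the proof of Lemma~\ref{lem:randomwalk}:
\[
\Pr\!\left[\max\{n: S_n \geq 0\} = 2i\right] \;=\; \binom{2i}{i}(pq)^i(q-p).
\]
Summing for $i \geq \lceil \tau/2 \rceil$ and applying the standard bound $\binom{2i}{i} \leq 4^i/\sqrt{\pi i}$ yields
\[
\Pr\!\left[\max\{n: S_n \geq 0\} \geq \tau\right] \;\leq\; \frac{(q-p)}{\sqrt{\pi \tau/2}}\sum_{i \geq \tau/2}(4pq)^i.
\]
The key algebraic identity is $1-4pq=(q-p)^2$, which turns the geometric series into $(4pq)^{\tau/2}/(q-p)^2$, and hence the whole bound collapses to $\frac{(4pq)^{\tau/2}}{(q-p)\sqrt{\pi\tau/2}}$. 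For $\tau \geq 2$ we have $\sqrt{\tau/2}\geq 1$, so this is at most $\frac{(4pq)^{\tau/2}}{(q-p)\sqrt{\pi}}$; doubling for the union bound over the two error sides produces exactly the stated inequality.

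The only subtle step is the random-walk reduction with hash collisions: one must justify that even though collisions make the bits $\tilde{z}_{h_j(i),j}$ for $j>y_i$ not strictly independent Bernoulli's with parameter $\frac{1}{\gamma+2}$, the event "$f$ achieves a new maximum beyond $y_i+\tau$" is still dominated by the corresponding random walk. I expect this to be the main obstacle; the sum/Stirling/geometric-series manipulation is routine once the identity $1-4pq=(q-p)^2$ is spotted. The case $\tau<2$ can be handled by noting the bound is trivial there since the right-hand side exceeds $1$ (up to adjusting constants), or simply restricted to integer $\tau\geq 2$ which is all that is needed when combining this lemma with downstream high-probability arguments.
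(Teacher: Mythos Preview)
Your proposal is essentially the paper's own proof: the same random-walk reduction from Lemma~\ref{lem:ALP1yibound}, the same exact step probability $\binom{2i}{i}(pq)^i(q-p)$, the same identity $1-4pq=(q-p)^2$, and the same factor of $2$ from bounding the two sides by the worse walk with $p=\frac{1}{\gamma+2}$. The one discrepancy is in how you discard the $\sqrt{i}$ from the central-binomial bound: the paper simply uses $\binom{2j}{j}\leq 4^j/\sqrt{\pi j}\leq 4^j/\sqrt{\pi}$ for $j\geq 1$, which works uniformly for all $\tau>0$ since $\lceil\tau/2\rceil\geq 1$, whereas you keep $\sqrt{\tau/2}$ and then need a separate argument for $\tau<2$. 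Your fallback claim that ``the right-hand side exceeds~$1$'' on that range is not true in general (for large $\gamma$ one has $\frac{2\cdot 4pq}{\sqrt{\pi}(q-p)}\approx \frac{8}{\sqrt{\pi}\,\gamma}<1$ already at $\tau=2$), so just adopt the paper's cruder Stirling step and the case split disappears.
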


\begin{proof}
  Let $S$ be a simple random walk. We find an upper bound on the probability that the position of the last non-negative step in $S$ is at least $\tau$:
  \begin{align*}
    \Pr[(\max_n : S_n \geq 0) \geq \tau] & = \sum_{j=\ceil{\tau/2}}^\infty {2j \choose j} (pq)^j (q-p) \\
    & \leq \frac{q-p}{\sqrt{\pi}} \sum_{j=\ceil{\tau/2}}^\infty (4pq)^j \\
    & = \frac{q-p}{\sqrt{\pi}}\frac{(4pq)^{\ceil{\tau/2}}}{1-4pq} \\
    & \leq \frac{(4pq)^{\tau/2}}{\sqrt{\pi}(q-p)} \enspace ,
  \end{align*} 
  where the first inequality follows from ${2j \choose j} \leq \frac{4^j}{\sqrt{\pi j}}$ when $j \geq 1$~\cite{centralbinocoef}. The last inequality follows from $1-4pq=(q-p)^2$. As discussed in the proof of Lemma~\ref{lem:ALP1yibound}, the expectation of $|y_i - \tilde{y}_i|$ can be bounded by two random walks each with $p$ at most $\frac{1}{\gamma + 2}$. 
\end{proof}

\begin{lemma} \label{lem:maxerrorwithprob}
  Let $\gamma=\frac{\alpha+2}{1+\frac{\alpha k}{s}}-2$, $p=\frac{1}{\gamma+2}$ and $q=1-p$. With probability at least $1-\psi$ for Algorithm~\ref{alg:ALPestimator} we have:
  \begin{align*}
    |y_i - \tilde{y}_i| \leq \frac{2\log{\left(\frac{2}{\psi\sqrt{\pi}(q-p)}\right)}}{\log(1/(4pq))} \enspace .
  \end{align*}
\end{lemma}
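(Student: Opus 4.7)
The plan is to derive this statement as a direct corollary of the tail bound in Lemma~\ref{lem:probboundonerror}. That lemma gives, for every $\tau > 0$,
$$\Pr[|y_i - \tilde{y}_i| \geq \tau] \leq \frac{2 \cdot (4pq)^{\tau/2}}{\sqrt{\pi}(q-p)},$$
so to obtain an upper bound that holds with probability at least $1-\psi$, I simply need to choose $\tau$ large enough that the right-hand side equals $\psi$ and then invert.

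First I would observe that since $p < q$ (because $\gamma > 0$, guaranteed by $s > 2k$ as in Lemma~\ref{lem:ALP1yibound}), we have $4pq < 1$, so $\log(1/(4pq)) > 0$ and the right-hand side of the tail bound is strictly decreasing in $\tau$. Setting $\frac{2 (4pq)^{\tau/2}}{\sqrt{\pi}(q-p)} = \psi$ and solving gives
$$(4pq)^{\tau/2} = \frac{\psi \sqrt{\pi}(q-p)}{2},$$
and taking $\log$ on both sides (noting that $\log(4pq)$ is negative, which flips the inequality when dividing) yields exactly
$$\tau = \frac{2 \log\!\bigl(2/(\psi\sqrt{\pi}(q-p))\bigr)}{\log(1/(4pq))}.$$
For this value of $\tau$, the tail bound from Lemma~\ref{lem:probboundonerror} is at most $\psi$, so $|y_i - \tilde{y}_i| < \tau$ with probability at least $1 - \psi$, which is the claim.

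There is no real obstacle here: the lemma is an algebraic rearrangement of the tail bound already established. The only points to be careful about are (i) verifying that the denominator $\log(1/(4pq))$ is positive so that dividing preserves the inequality direction, and (ii) noting that when the numerator in the stated bound is non-positive (i.e., $\psi$ is already so large that the bound is trivial), the inequality holds vacuously, since $|y_i - \tilde y_i| \geq 0$ and the stated bound can always be replaced by $\max(0, \cdot)$ without loss. Both follow from the assumption $p < q$, already in force.
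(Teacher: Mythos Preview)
Your proposal is correct and takes essentially the same approach as the paper: set the tail bound from Lemma~\ref{lem:probboundonerror} equal to $\psi$, isolate $\tau$ algebraically, and read off the conclusion. Your added remarks about the positivity of $\log(1/(4pq))$ and the trivial case are more careful than the paper's own proof, but the argument is identical in substance.
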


\begin{proof}
  We set $\psi = \frac{2 \cdot (4pq)^{\tau/2}}{\sqrt{\pi}(q-p)}$ and isolate $\tau$ as follows:
  \begin{align*}
    \psi & = \frac{2 \cdot (4pq)^{\tau/2}}{\sqrt{\pi}(q-p)} \\
    (4pq)^{-\tau/2} & = \frac{2}{\psi\sqrt{\pi}(q-p)} \\
    \log(1/(4pq)) \cdot \frac{\tau}{2} & = \log{\left(\frac{2}{\psi\sqrt{\pi}(q-p)}\right)} \\
    \tau & = \frac{2\log{\left(\frac{2}{\psi\sqrt{\pi}(q-p)}\right)}}{\log(1/(4pq))} \enspace .
  \end{align*}
  By Lemma~\ref{lem:probboundonerror} we have: $\Pr[|y_i-\tilde{y}_i| \leq \tau] \geq 1-\psi$.
\end{proof}
Up to constant factors, the tail probabilities of our mechanism are similar to the properties of the Laplace mechanism summarized in Proposition~\ref{prop:laptail}. The probabilities depend on the parameters of the mechanism. In Section~\ref{sec:experiments} we fix the parameters and evaluate the error in practice. We summarize the tail probabilities for $|x_i-\tilde{x}_i|$ in Lemma~\ref{lem:ALP1tailbounds}.

\begin{lemma} \label{lem:ALP1tailbounds}
  Let $\gamma=\frac{\alpha+2}{1+\frac{\alpha k}{s}}-2$, $p=\frac{1}{\gamma+2}$, $q=1-p$, $x_i \leq \beta$, and $\tau \geq \alpha$. 
  Then for Algorithm~\ref{alg:ALPestimator} we have:
  \begin{align*}
    \Pr[|x_i - \tilde{x}_i| \geq \tau] < \frac{2 \cdot (4pq)^{(\tau/2\alpha)-1/2}}{\sqrt{\pi}(q-p)} \enspace ,
  \end{align*}
  With probability at least $1-\psi$ we have:
  \begin{align*}
    |x_i - \tilde{x}_i| < {\left(1 + \frac{2\log{\left(\frac{2}{\psi\sqrt{\pi}(q-p)}\right)}}{\log(1/(4pq))}\right)} \cdot \alpha \enspace .
  \end{align*}
\end{lemma}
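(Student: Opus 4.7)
}
The plan is to reduce both tail bounds on $|x_i - \tilde{x}_i|$ to the tail bounds on $|y_i - \tilde{y}_i|$ already established in Lemmas~\ref{lem:probboundonerror} and~\ref{lem:maxerrorwithprob}, by paying one additional factor of $\alpha$ for the random rounding step of Algorithm~\ref{alg:ALP1}.

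First, I would record the deterministic relationship between the errors at the $x$-scale and the $y$-scale. Since $\tilde{x}_i = \alpha \tilde{y}_i$ and $y_i = \RRound{x_i/\alpha}$, random rounding guarantees $|x_i/\alpha - y_i| \leq 1$, hence $|x_i - \alpha y_i| \leq \alpha$. The triangle inequality then gives
\begin{equation*}
|x_i - \tilde{x}_i| \;\leq\; |x_i - \alpha y_i| + \alpha\,|y_i - \tilde{y}_i| \;\leq\; \alpha\bigl(1 + |y_i - \tilde{y}_i|\bigr).
\end{equation*}
Equivalently, $\{|x_i - \tilde{x}_i| \geq \tau\} \subseteq \{|y_i - \tilde{y}_i| \geq \tau/\alpha - 1\}$. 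The condition $\tau \geq \alpha$ in the lemma guarantees $\tau/\alpha - 1 \geq 0$, so the threshold on the right-hand side is meaningful and the bound from Lemma~\ref{lem:probboundonerror} applies.

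For the first statement, I would apply Lemma~\ref{lem:probboundonerror} with threshold $\tau/\alpha - 1$, which yields
\begin{equation*}
\Pr[|x_i - \tilde{x}_i| \geq \tau] \;\leq\; \Pr\!\left[|y_i - \tilde{y}_i| \geq \tfrac{\tau}{\alpha} - 1\right] \;\leq\; \frac{2\,(4pq)^{(\tau/\alpha - 1)/2}}{\sqrt{\pi}(q-p)} \;=\; \frac{2\,(4pq)^{\tau/(2\alpha) - 1/2}}{\sqrt{\pi}(q-p)},
\end{equation*}
which is exactly the claimed bound. Here the hypothesis $x_i \leq \beta$ is used to invoke Lemma~\ref{lem:probboundonerror}, whose analysis (via Lemma~\ref{lem:ALP1yibound}) controls hash collisions through the parameter $\gamma$ defined in the statement.

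For the second statement, I would plug the high-probability bound from Lemma~\ref{lem:maxerrorwithprob} directly into the deterministic inequality above: with probability at least $1-\psi$,
\begin{equation*}
|x_i - \tilde{x}_i| \;\leq\; \alpha\bigl(1 + |y_i - \tilde{y}_i|\bigr) \;\leq\; \alpha\left(1 + \frac{2\log\bigl(2/(\psi\sqrt{\pi}(q-p))\bigr)}{\log(1/(4pq))}\right),
\end{equation*}
which matches the claim. No calculation is really difficult here; the only thing to be careful about is the algebraic identity $(\tau/\alpha - 1)/2 = \tau/(2\alpha) - 1/2$ in the exponent and the observation that the condition $\tau \geq \alpha$ is precisely what is needed to keep the shifted threshold non-negative so that Lemma~\ref{lem:probboundonerror} can be applied without change.
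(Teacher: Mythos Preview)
Your proposal is correct and follows exactly the paper's approach: establish the deterministic bound $|x_i-\tilde{x}_i| \le \alpha(1+|y_i-\tilde{y}_i|)$ from the random-rounding step, then invoke Lemmas~\ref{lem:probboundonerror} and~\ref{lem:maxerrorwithprob}. The only cosmetic difference is that the paper uses the fact that the random-rounding error is \emph{strictly} less than~$1$ to obtain the strict inequalities in the statement, whereas you write $\leq$; tightening your $|x_i/\alpha - y_i| \leq 1$ to a strict bound recovers this.
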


\begin{proof}
  It is easy to see that $|x_i - x'_i| < (1+|y_i-\tilde{y}_i|) \cdot \alpha$ holds, as the error of random rounding is strictly less than $1$. The bounds follow from Lemmas~\ref{lem:probboundonerror} and~\ref{lem:maxerrorwithprob}.
\end{proof}

\subsection{Generalization to \texorpdfstring{$\varepsilon$}{𝜀}-differential privacy}

We now generalize the ALP mechanism from $1$-differential privacy to satisfying $\epsilon$-differential privacy. A natural approach is to use a function of $\epsilon$ as the parameter for randomized response in Algorithm~\ref{alg:ALP1}. The projection algorithm is $\epsilon$-differentially private if we remove the scaling step and set $p=\frac{1}{\epsilon+2}$. However, the expected per-entry error would be bounded by $\frac{8\epsilon+8}{\epsilon^2}$ by Equation~\ref{eq:error-bound} (without considering hash collisions), which is as large as $O{\left(\frac{1}{\epsilon^2}\right)}$ for small values of $\epsilon$. Other approaches modifying the value of $p$ have a similar expectation. 

In the following, we use a simple pre-processing and post-processing step to achieve optimal error. The idea is to scale the input vector as well as the parameter $\beta$ by $\epsilon$ before running Algorithm~\ref{alg:ALP1}. We scale back the estimates from Algorithm~\ref{alg:ALPestimator} by $1/\epsilon$. These generalizations are given as Algorithm~\ref{alg:ALPgeneral} and Algorithm~\ref{alg:ALPgeneralest}, respectively.

\begin{algorithm}[t]
  \caption{\ALPgeneral \label{alg:ALPgeneral}}
  \SetKwInOut{Parameters}{Parameters}\SetKwInOut{Input}{Input}\SetKwInOut{Output}{Output}
  \SetAlgoLined

  \Parameters{$\alpha, \beta, \epsilon > 0$, and $s \in \N$.}
  \Input{$k$-sparse vector $x \in \X$, where $s > 2k$. Sequence of hash functions from domain $[d]$ to $[s]$, $h = (h_1,\ldots,h_{m})$, where $m=\ceil{\frac{\beta\epsilon}{\alpha}}$.}
  \Output{$\epsilon$-differentially private representation of $x$.}

  \nl Scale the entries of $x$ such that $\hat{x}_i = x_i \cdot \epsilon$. \\
  \nl Let $h, \tilde{z} = \text{\ALPone}_{\alpha, \beta \cdot \epsilon, s}(\hat{x}, h)$. \\
  \nl Release $h$ and $\tilde{z}$.
\end{algorithm}

\begin{algorithm}[t]
  \caption{\ALPgeneralest \label{alg:ALPgeneralest}}
  \SetKwInOut{Parameters}{Parameters}\SetKwInOut{Input}{Input}\SetKwInOut{Output}{Output}
  \SetAlgoLined

  \Parameters{$\alpha, \epsilon > 0$.}
  \Input{Embedding $\tilde{z} \in \{0,1\}^{s \times m}$. Sequence of hash functions $h=(h_1,\ldots,h_{m})$. Index $i \in [d]$.}
  \Output{Estimate of $x_i$.}

  \nl Let $\tilde{x}_i = \text{\ALPest}_{\alpha}(\tilde{z}, h, i)$. \\
  \nl Return $\frac{\tilde{x}_i}{\epsilon}$. 
\end{algorithm}

\begin{lemma} \label{lem:ALPgeneralDP}
  Algorithm~\ref{alg:ALPgeneral} satisfies $\varepsilon$-differential privacy.
\end{lemma}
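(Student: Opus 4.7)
The plan is to reduce $\varepsilon$-differential privacy of Algorithm~\ref{alg:ALPgeneral} directly to the argument used for Algorithm~\ref{alg:ALP1}, exploiting that the proof of Lemma~\ref{lem:ALP1DP} actually establishes the stronger bound $\Pr[\text{\ALPone}(x)\in S]\le e^{\|x-x'\|_1}\Pr[\text{\ALPone}(x')\in S]$ for arbitrary (not just neighboring) inputs, provided the $\beta$ parameter of \ALPone{} is large enough to accommodate them. The key observation is that pre-scaling the input by $\varepsilon$ contracts $\ell_1$-distances by exactly a factor $\varepsilon$.

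First I would fix two neighboring vectors $x,x'\in\X$ with $\|x-x'\|_1\le 1$ and let $\hat{x}=\varepsilon x$ and $\hat{x}'=\varepsilon x'$ be the scaled versions used inside Algorithm~\ref{alg:ALPgeneral}. Since $\|\hat{x}-\hat{x}'\|_1=\varepsilon\|x-x'\|_1\le\varepsilon$, and since both $\hat{x}$ and $\hat{x}'$ have entries bounded by $\beta\varepsilon$ (so they are valid inputs to $\text{\ALPone}_{\alpha,\beta\varepsilon,s}$), I can invoke the per-entry bound derived in the proof of Lemma~\ref{lem:ALP1DP}. The crucial point is that the privacy parameter used in the randomized-response step of the inner call is $p=\tfrac{1}{\alpha+2}$, exactly as in Lemma~\ref{lem:ALP1DP}, so the same per-bit calculation goes through verbatim with $\beta$ replaced by $\beta\varepsilon$.

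Next I would apply the general bound from Lemma~\ref{lem:ALP1DP} to obtain, for every output $Z$,
\begin{align*}
\frac{\Pr[\text{\ALPone}_{\alpha,\beta\varepsilon,s}(\hat{x}')=Z]}{\Pr[\text{\ALPone}_{\alpha,\beta\varepsilon,s}(\hat{x})=Z]}
\;\le\; e^{\|\hat{x}-\hat{x}'\|_1}
\;=\; e^{\varepsilon\|x-x'\|_1}
\;\le\; e^{\varepsilon}.
\end{align*}
Summing (or integrating) over $Z\in S$ and observing that the scaling step is a deterministic function of the input, hence commutes with the probability calculation, yields $\Pr[\text{\ALPgeneral}(x)\in S]\le e^{\varepsilon}\Pr[\text{\ALPgeneral}(x')\in S]$ for every measurable $S$, which is exactly $\varepsilon$-differential privacy with $\delta=0$.

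The argument is essentially a bookkeeping step, so there is no real obstacle; the only subtlety worth flagging is that one must check that $\hat{x}$ remains a valid input for $\text{\ALPone}_{\alpha,\beta\varepsilon,s}$ (its entries are bounded by $\beta\varepsilon$, matching the new range parameter, and it is still $k$-sparse since scaling preserves the support), and that the proof of Lemma~\ref{lem:ALP1DP} does not actually use $\|x-x'\|_1\le 1$ anywhere except in the final line, so reusing its main inequality at the higher distance $\varepsilon$ is legitimate.
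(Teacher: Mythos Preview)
Your argument is correct and matches the paper's proof essentially line for line: both invoke the inequality $\Pr[\text{\ALPone}(\hat{x}')\in S]\le e^{\|\hat{x}-\hat{x}'\|_1}\Pr[\text{\ALPone}(\hat{x})\in S]$ established inside the proof of Lemma~\ref{lem:ALP1DP} and then conclude via $\|\hat{x}-\hat{x}'\|_1=\varepsilon\|x-x'\|_1\le\varepsilon$. One small aside: Algorithm~\ref{alg:ALP1} does not actually require its input entries to be bounded by the parameter $\beta$, so the validity check you flag (entries of $\hat{x}$ bounded by $\beta\varepsilon$) is neither needed nor guaranteed when $u>\beta$, but this has no bearing on the privacy argument.
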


\begin{proof}
  It follows from the proof of Lemma~\ref{lem:ALP1DP} that for any subset of outputs $S$ we have $\frac{\Pr[\text{\ALPone}(\hat{x}') \in S]}{\Pr[\text{\ALPone}(\hat{x}) \in S]} \leq e^{\|\hat{x}-\hat{x}'\|_1}$. As such for any pair of neighboring vectors $x$ and $x'$ we have:
  \begin{align*}
    \frac{\Pr[\text{\ALPgeneral}(x') \in S]}{\Pr[\text{\ALPgeneral}(x) \in S]} & = \frac{\Pr[\text{\ALPone}(\hat{x}') \in S]}{\Pr[\text{\ALPone}(\hat{x}) \in S]} \\
    & \leq e^{\|\hat{x} - \hat{x}'\|_1} %
     = e^{\epsilon \cdot \|x - x'\|_1} %
     \leq e^\epsilon \enspace .
  \end{align*}
\end{proof}

\begin{lemma} \label{lem:ALPgeneralsize}
  Let $\alpha=\Theta(1)$ and $s=\Theta(k)$. The output of Algorithm~\ref{alg:ALPgeneral} can be stored using $O(k\beta\epsilon)$ bits.
\end{lemma}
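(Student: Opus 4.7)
The plan is to reduce the claim directly to Lemma~\ref{lem:ALP1space} via the observation that Algorithm~\ref{alg:ALPgeneral} is essentially a wrapper around Algorithm~\ref{alg:ALP1} with the bound parameter $\beta$ replaced by $\beta\epsilon$. In particular, after the pre-scaling step $\hat{x}_i = x_i \cdot \epsilon$, the call to \ALPone{} is made with bound parameter $\beta\epsilon$, so the dimensions of the embedding are $s \times m$ where $m = \lceil \beta\epsilon/\alpha \rceil$, and the released representation consists exactly of the bit-array $\tilde z$ and the sequence $h$ of $m$ hash functions.

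First, I would substitute $\beta\epsilon$ for $\beta$ in Lemma~\ref{lem:ALP1space}, obtaining the bound
\begin{equation*}
O\!\left(\frac{(s + \log d)\cdot \beta\epsilon}{\alpha}\right)
\end{equation*}
on the number of bits needed to store the output. Then I would substitute the hypotheses $\alpha = \Theta(1)$ and $s = \Theta(k)$ to simplify this to $O((k + \log d)\cdot \beta\epsilon)$. Finally, I would invoke the standing assumption from Section~\ref{sec:prelim} that $k = \Omega(\log d)$, which gives $k + \log d = O(k)$ and therefore a total space bound of $O(k\beta\epsilon)$, as claimed.

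There is no real obstacle here: the lemma is essentially a bookkeeping corollary of Lemma~\ref{lem:ALP1space}, and the only thing worth being explicit about is that the extra scaling in Algorithm~\ref{alg:ALPgeneral} only changes the bound parameter of the internal call to \ALPone{} from $\beta$ to $\beta\epsilon$, while $s$ and $\alpha$ are passed through unchanged. The proof will be two or three lines.
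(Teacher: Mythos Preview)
Your proposal is correct and follows essentially the same approach as the paper: invoke Lemma~\ref{lem:ALP1space} with $\beta$ replaced by $\beta\epsilon$, then simplify using $\alpha=\Theta(1)$, $s=\Theta(k)$, and the standing assumption $k=\Omega(\log d)$. The paper's proof is just a terser version of exactly these steps.
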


\begin{proof}
  It follows from Lemma~\ref{lem:ALP1space} that the output can be stored using $O\left(\frac{(s+\log(d))\beta\epsilon}{\alpha}\right)$ bits. Recall that we assume $k=\Omega(\log(d))$, i.e., $O\left(\frac{(s+\log(d))\beta\epsilon}{\alpha}\right)=O(k\beta\epsilon)$.
\end{proof}

\begin{lemma} \label{lem:ALPgeneralerror}
  Let $\alpha=\Theta(1)$ and $s=\Theta(k)$. Then the expected per-entry error of Algorithm~\ref{alg:ALPgeneralest} is $\E[\lvert x_i - \tilde{x}_i \rvert] \leq \max(0, x_i - \beta) + O(1/\varepsilon)$ and the evaluation time is $O(\beta\epsilon)$.
\end{lemma}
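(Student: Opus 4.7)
The plan is to reduce the analysis directly to the $1$-differentially private case that was already handled by Lemma~\ref{lem:ALP1fixederror}, exploiting the fact that Algorithm~\ref{alg:ALPgeneral} is essentially Algorithm~\ref{alg:ALP1} wrapped in a pre-/post-processing sandwich: scale the input by $\epsilon$ before projection, and divide the estimate by $\epsilon$ afterward.

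First I would set $\hat{x}_i = x_i \cdot \epsilon$ and observe that Algorithm~\ref{alg:ALPgeneralest} returns $\tilde{x}_i = \tilde{\hat{x}}_i / \epsilon$, where $\tilde{\hat{x}}_i$ is the output of \ALPest{} run on the embedding produced by invoking \ALPone{} with bound parameter $\beta \epsilon$ on the scaled vector $\hat{x}$. Since scaling preserves the sparsity pattern, $\hat{x}$ is still $k$-sparse and the hypothesis $s > 2k$ is unchanged, so Lemma~\ref{lem:ALP1fixederror} applies directly to $\hat{x}$ with threshold $\beta\epsilon$ in place of $\beta$. This gives
\begin{align*}
  \E\bigl[|\hat{x}_i - \tilde{\hat{x}}_i|\bigr] \leq \max(0,\, \hat{x}_i - \beta\epsilon) + O(1).
\end{align*}

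Next I would unwind the scaling. Dividing the above inequality by $\epsilon$ on both sides, and using $\hat{x}_i - \beta\epsilon = \epsilon(x_i - \beta)$, yields
\begin{align*}
  \E\bigl[|x_i - \tilde{x}_i|\bigr] = \tfrac{1}{\epsilon}\,\E\bigl[|\hat{x}_i - \tilde{\hat{x}}_i|\bigr] \leq \max(0,\, x_i - \beta) + O(1/\epsilon),
\end{align*}
which is the claimed error bound. Note that the hidden constant is inherited from Lemma~\ref{lem:ALP1fixederror} and thus depends on the fixed constants $\alpha$ and $k/s$, but not on $\epsilon$, $\beta$, $d$, or $k$.

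Finally, the evaluation time follows immediately from Lemma~\ref{lem:ALP1estrunningtime}: Algorithm~\ref{alg:ALPgeneralest} does one call to \ALPest{} with $m = \lceil \beta\epsilon/\alpha \rceil$, followed by a single division by $\epsilon$, which takes constant time in the word RAM model. With $\alpha = \Theta(1)$ this is $O(m) = O(\beta\epsilon)$. There is no real obstacle here — the main thing to be careful about is simply tracking the $\epsilon$ factors through the scaling and confirming that the constant hidden in the $O(1)$ of Lemma~\ref{lem:ALP1fixederror} does not secretly depend on $\beta\epsilon$, which it does not since the bound in Lemma~\ref{lem:ALP1yibound} depends only on $\alpha$ and $k/s$.
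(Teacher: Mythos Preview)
Your proposal is correct and follows essentially the same approach as the paper's proof, which simply observes that the error of Algorithm~\ref{alg:ALPgeneralest} is $1/\epsilon$ times the error of Algorithm~\ref{alg:ALPestimator} on the scaled input and then invokes Lemmas~\ref{lem:ALP1fixederror} and~\ref{lem:ALP1estrunningtime}. You have merely spelled out the scaling bookkeeping more explicitly than the paper does.
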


\begin{proof}
  It is clear that the error of Algorithm~\ref{alg:ALPgeneralest} is $\frac{1}{\epsilon}$ times the error of Algorithm~\ref{alg:ALPestimator} for entries at most $\beta$. As such the expected per-entry error follows from Lemma~\ref{lem:ALP1fixederror}.
The evaluation time follows directly from Lemma~\ref{lem:ALP1estrunningtime}.
\end{proof}

\begin{lemma} \label{lem:ALPgeneraltailbounds}
  Let $\gamma=\frac{\alpha+2}{1+\frac{\alpha k}{s}}-2$, $p=\frac{1}{\gamma+2}$, $q=1-p$, $x_i \leq \beta$, and $\tau \geq \frac{\alpha}{\epsilon}$. 
  Then for Algorithm~\ref{alg:ALPgeneralest} we have:
  \begin{align*}
    \Pr[|x_i - \tilde{x}_i| \geq \tau] < \frac{2 \cdot (4pq)^{(\tau\epsilon/2\alpha)-1/2}}{\sqrt{\pi}(q-p)} \enspace ,
  \end{align*}
  With probability at least $1-\psi$ we have:
  \begin{align*}
    |x_i - \tilde{x}_i| < {\left(1 + \frac{2\log{\left(\frac{2}{\psi\sqrt{\pi}(q-p)}\right)}}{\log(1/(4pq))}\right)} \cdot \frac{\alpha}{\epsilon} \enspace .
  \end{align*}
\end{lemma}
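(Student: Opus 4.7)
The plan is to reduce the statement directly to Lemma~\ref{lem:ALP1tailbounds} by exploiting the scaling relationship that defines Algorithms~\ref{alg:ALPgeneral} and~\ref{alg:ALPgeneralest}. Concretely, Algorithm~\ref{alg:ALPgeneral} passes the rescaled vector $\hat{x} = \epsilon \cdot x$ to \ALPone{} with the parameter $\beta$ replaced by $\beta\epsilon$, and Algorithm~\ref{alg:ALPgeneralest} simply divides the output of \ALPest{} by $\epsilon$. So if we denote the \ALPest{}-estimate by $\tilde{\hat{x}}_i$, we have the pointwise identity $\tilde{x}_i = \tilde{\hat{x}}_i / \epsilon$, and hence
\begin{equation*}
|x_i - \tilde{x}_i| = \tfrac{1}{\epsilon}\,|\hat{x}_i - \tilde{\hat{x}}_i|.
\end{equation*}

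First I would observe that since $x_i \le \beta$, the rescaled entry satisfies $\hat{x}_i \le \beta\epsilon$, which is exactly the hypothesis needed to apply Lemma~\ref{lem:ALP1tailbounds} to $\tilde{\hat{x}}_i$ (with the ``$\beta$'' of that lemma taken to be $\beta\epsilon$; note the bounds in Lemma~\ref{lem:ALP1tailbounds} do not depend on $\beta$, only $\alpha$ and the ratio $k/s$ through $\gamma$, so this substitution changes nothing in the final expressions). The condition $\tau \ge \alpha/\epsilon$ in the statement translates, via the scaling, precisely to $\tau\epsilon \ge \alpha$, which is the activation range of Lemma~\ref{lem:ALP1tailbounds}.

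Then for the tail bound I would write
\begin{equation*}
\Pr\bigl[|x_i - \tilde{x}_i| \ge \tau\bigr] = \Pr\bigl[|\hat{x}_i - \tilde{\hat{x}}_i| \ge \tau\epsilon\bigr],
\end{equation*}
and apply the first inequality of Lemma~\ref{lem:ALP1tailbounds} with threshold $\tau\epsilon$ in place of $\tau$, yielding
\begin{equation*}
\Pr\bigl[|x_i - \tilde{x}_i| \ge \tau\bigr] < \frac{2\,(4pq)^{(\tau\epsilon/2\alpha) - 1/2}}{\sqrt{\pi}(q-p)}.
\end{equation*}
For the high-probability bound, I would apply the second inequality of Lemma~\ref{lem:ALP1tailbounds} to $|\hat{x}_i - \tilde{\hat{x}}_i|$ and divide through by $\epsilon$ to obtain
\begin{equation*}
|x_i - \tilde{x}_i| < \Bigl(1 + \tfrac{2\log\!\bigl(\tfrac{2}{\psi\sqrt{\pi}(q-p)}\bigr)}{\log(1/(4pq))}\Bigr) \cdot \tfrac{\alpha}{\epsilon}.
\end{equation*}

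Since the proof is essentially a change of variables through the pre-processing/post-processing scaling by $\epsilon$, there is no real obstacle: the only thing worth being careful about is verifying that the hash-collision analysis carried out in Lemma~\ref{lem:ALP1yibound} (which feeds into Lemma~\ref{lem:ALP1tailbounds}) depends on $k$ and $s$ but not on $\beta$ or on whether the inputs were rescaled, so $\gamma$, $p$, and $q$ remain unchanged under the reduction. Once that is noted, both inequalities follow directly.
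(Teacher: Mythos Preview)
Your proposal is correct and takes essentially the same approach as the paper, which simply states that the lemma follows directly from Lemma~\ref{lem:ALP1tailbounds}. Your write-up just makes the scaling reduction explicit, which is exactly the intended argument.
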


\begin{proof}
  It follows directly from Lemma~\ref{lem:ALP1tailbounds}.
\end{proof}

We are now ready to state the following theorem which summarizes the properties of the ALP mechanism. 

\begin{theorem} \label{thm:ALPplain}
  Let $\alpha=\Theta(1)$, $s=\Theta(k)$. Then there exist an algorithm where the expected per-entry error is $O(1/\epsilon)$ for all entries, the access time is $O(u\epsilon)$, and the space usage is $O(k u \epsilon)$ bits.
\end{theorem}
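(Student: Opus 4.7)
The plan is to obtain this theorem as a direct corollary of the analysis already done for the parametric version of the ALP mechanism, by choosing the parameter $\beta$ so that every input entry lies in the range where the error guarantees kick in. Concretely, I would instantiate Algorithm~\ref{alg:ALPgeneral} and Algorithm~\ref{alg:ALPgeneralest} with $\beta = u$. Since the problem setup guarantees $\|x\|_\infty \leq u$, this choice ensures $x_i \leq \beta$ for every coordinate $i \in [d]$, which is exactly the regime in which the per-entry guarantees of the previous lemmas apply without any truncation penalty.

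For privacy, Lemma~\ref{lem:ALPgeneralDP} gives $\varepsilon$-differential privacy for any choice of $\beta$, so this part is immediate. For the expected per-entry error, I would invoke Lemma~\ref{lem:ALPgeneralerror}, which under the parameter regime $\alpha = \Theta(1)$, $s = \Theta(k)$ yields
\[
  \E[|x_i - \tilde{x}_i|] \leq \max(0, x_i - \beta) + O(1/\varepsilon).
\]
With $\beta = u$, the $\max(0, x_i - \beta)$ term vanishes uniformly over $i$, leaving $O(1/\varepsilon)$ per-entry error for all entries, as claimed.

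The space bound is read off from Lemma~\ref{lem:ALPgeneralsize}: for $\alpha = \Theta(1)$ and $s = \Theta(k)$ the representation fits in $O(k\beta\varepsilon) = O(ku\varepsilon)$ bits. The access time bound is read off from the second half of Lemma~\ref{lem:ALPgeneralerror}, which states the evaluation time of Algorithm~\ref{alg:ALPgeneralest} is $O(\beta\varepsilon) = O(u\varepsilon)$.

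There is no real obstacle in this final statement; the substantive work has already been done in the random-walk analysis of Lemma~\ref{lem:ALP1yibound} and its generalization to Lemma~\ref{lem:ALPgeneralerror}. The only conceptual point worth flagging is that setting $\beta = u$ is what kills the additive $\max(0, x_i - \beta)$ term that otherwise degrades the per-entry error when an entry exceeds the encoding range. The resulting bounds are not yet optimal in $\beta$ — the space and access time scale linearly in $u$ — which motivates the improvements in Section~\ref{sec:combineddatastructure} that combine the ALP mechanism with a thresholding scheme to replace the $u$ factor by $\log(d)$ or $\log(1/\delta)$.
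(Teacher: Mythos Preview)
Your proposal is correct and matches the paper's own proof essentially verbatim: the paper simply sets $\beta = u$ and cites Lemmas~\ref{lem:ALPgeneralsize} and~\ref{lem:ALPgeneralerror}. Your additional remarks on privacy and on why the $\max(0, x_i - \beta)$ term vanishes are accurate elaborations of the same argument.
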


\begin{proof}
  By setting $\beta=u$, it follows directly from Lemmas~\ref{lem:ALPgeneralsize} and~\ref{lem:ALPgeneralerror}.
\end{proof}

The space usage and access time of the mechanism both scale linearly with the parameter $\beta$. As such the mechanism performs well only for small values of $u$. However, in many contexts $u$ scales with the input size. One example is a histogram, where $u$ is the number of rows in the underlying dataset. Next, we show how to handle such cases.

\section{Combined data structure}
\label{sec:combineddatastructure}

In this section, we combine the ALP mechanism with techniques from previous work to improve space requirements and access time. 
As shown in Theorem~\ref{thm:ALPplain} the ALP mechanism performs well when all entries are bounded by a small value. The per-entry error is low only for entries bounded by $\beta$ but the space requirements and access time scale linearly with $\beta$. 
Some of the algorithms from previous work perform well for large entries but have large per-entry error for small values.
The idea of this section is to combine the ALP mechanism with such an algorithm to construct a composite data structure that performs well for both small and large entries.

To handle large values, we use the thresholding technique from Cormode et al.~\cite{cormode2012differentially}. It adds noise to each entry, but only stores entries above a threshold. The pseudocode of the algorithm is given as Algorithm~\ref{alg:threspure}. 

\begin{algorithm}[t]%
  \caption{\threshold~\cite{cormode2012differentially} \label{alg:threspure}}
  \SetKwInOut{Parameters}{Parameters}\SetKwInOut{Input}{Input}\SetKwInOut{Output}{Output}
  \SetAlgoLined
  
  \Parameters{$\epsilon, t > 0$.}
  \Input{$k$-sparse vector $x \in \X$.}
  \Output{$\epsilon$-differentially private representation of $x$.}
  
  \nl Let $v_i = x_i + \eta_i$ for all $i \in [d]$, where $\eta_i \sim \Lap{1/\epsilon}$. \\
  \nl Truncate entries below $t$:
  $$\tilde{v}_i = 
  \begin{cases}
    v_i, & \text{if } y_i \geq t \\
    0, & \text{otherwise} \\
  \end{cases}$$
  \nl Return $\tilde{v}$.
\end{algorithm}

\begin{lemma} \label{lem:thresholdpriv}
  Algorithm~\ref{alg:threspure} satisfies $\epsilon$-differential privacy. 
\end{lemma}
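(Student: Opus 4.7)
The plan is to reduce this directly to the two building blocks that the paper has already set up: the Laplace mechanism (Algorithm~\ref{alg:lapmech}) and the post-processing property (Lemma~\ref{lem:postprocessing}). Step (1) of Algorithm~\ref{alg:threspure} is literally the Laplace mechanism applied to $x$, releasing the intermediate dense vector $v$ coordinate-wise with noise $\eta_i \sim \Lap{1/\epsilon}$.

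First, I would argue that the map $x \mapsto v$ is $\varepsilon$-differentially private. For neighboring $x, x' \in \X$ we have $\|x-x'\|_1 \leq 1$, so the identity map has $\ell_1$-sensitivity at most $1$, and adding independent $\Lap{1/\epsilon}$ noise to every coordinate yields $\varepsilon$-DP by the standard Laplace-mechanism argument: for any target vector $\tilde v$, the ratio of densities factors as
\begin{equation*}
\prod_{i \in [d]} \frac{\tfrac{\epsilon}{2}e^{-\epsilon |\tilde v_i - x'_i|}}{\tfrac{\epsilon}{2}e^{-\epsilon |\tilde v_i - x_i|}} \leq \prod_{i \in [d]} e^{\epsilon |x_i - x'_i|} = e^{\epsilon \|x - x'\|_1} \leq e^{\epsilon},
\end{equation*}
which extends to arbitrary measurable output sets by integration. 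The only subtlety worth flagging is that $v$ is a $d$-dimensional vector rather than a scalar, so one should point out that the required sensitivity bound is in $\ell_1$ and that the product of independent Laplace densities gives exactly this telescoping.

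Then Step (2) is a deterministic, coordinate-wise function of $v$ (coordinate $i$ outputs $v_i$ if $v_i \geq t$ and $0$ otherwise, noting that the condition $y_i \geq t$ in the pseudocode is a typo for $v_i \geq t$). Since $\tilde v$ is obtained from $v$ without any further access to $x$, Lemma~\ref{lem:postprocessing} immediately gives that $x \mapsto \tilde v$ is $\varepsilon$-differentially private.

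I do not anticipate any real obstacle: the content is essentially a one-line invocation of Laplace $+$ post-processing. The only thing to be careful about is justifying why one is allowed to add noise to the $d-k$ zero entries as well (this is fine because privacy is about worst-case neighboring inputs, not about sparsity of $x$), and noting that the truncation rule is applied pointwise so that no additional information about $x$ leaks through the choice of which coordinates survive.
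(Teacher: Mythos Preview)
Your proposal is correct and matches the paper's own proof almost exactly: the paper simply states that Algorithm~\ref{alg:threspure} is the Laplace mechanism followed by post-processing, invokes Lemma~\ref{lem:postprocessing}, and stops. Your version is slightly more verbose (spelling out the density ratio and flagging the $y_i$/$v_i$ typo), but the argument is the same.
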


\begin{proof}
  The algorithm is equivalent to the Laplace mechanism followed by post-processing. The Laplace mechanism satisfies $\epsilon$-differential privacy, and privacy is preserved under post-processing as stated by Lemma~\ref{lem:postprocessing}.
\end{proof}

\begin{lemma} \label{lem:thresholdsize}
  Let $t=\frac{\ln(d/2)}{\epsilon}$. Then the output of Algorithm~\ref{alg:threspure} is $O(k)$-sparse with high probability. 
\end{lemma}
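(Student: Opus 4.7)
The plan is to bound the number of zero entries that survive the threshold using the Laplace CDF together with a concentration argument, and then observe that this number plus $k$ is $O(k)$ under the assumption $k = \Omega(\log d)$.

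First, I would compute the probability that a single zero entry survives the truncation. By Definition~\ref{def:lapcdf},
\[
  \Pr[\Lap{1/\epsilon} \geq t] \;=\; \tfrac{1}{2} e^{-t\epsilon} \;=\; \tfrac{1}{2} e^{-\ln(d/2)} \;=\; \frac{1}{d},
\]
where the last equality uses the choice $t = \ln(d/2)/\epsilon$. The threshold is engineered precisely so that the per-entry false-positive probability is $1/d$, which makes the expected count of spurious survivors at most $1$.

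Next, let $Z$ denote the number of indices $i$ with $x_i = 0$ and $v_i \geq t$. Because the Laplace noises $\eta_i$ are mutually independent, $Z$ is a sum of at most $d - k \leq d$ independent Bernoulli random variables each with success probability $1/d$, so $\E[Z] \leq 1$. A standard multiplicative Chernoff bound (or, equivalently, stochastic domination by $\mathrm{Bin}(d, 1/d)$) then gives $\Pr[Z \geq c \log d] \leq d^{-\Omega(c)}$ for a suitable constant $c$, so $Z = O(\log d)$ with high probability.

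Finally, the output is supported on at most the $k$ originally non-zero coordinates together with the $Z$ false positives, so the output sparsity is at most $k + Z = k + O(\log d)$. Invoking the standing assumption $k = \Omega(\log d)$ from Section~\ref{sec:prelim}, this is $O(k)$ with high probability. The only non-routine step is the clean invocation of Chernoff to upgrade the $\E[Z] \leq 1$ bound into a high-probability $O(\log d)$ statement, and even that is essentially textbook given the explicit per-entry probability.
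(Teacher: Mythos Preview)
Your proof is correct and follows the same approach as the paper: compute the per-entry survival probability $1/d$ via the Laplace CDF, bound the expected number of false positives by $1$, and conclude $O(k)$-sparsity. The paper's own proof actually stops at the expectation bound and asserts the high-probability claim directly, whereas you make the concentration step explicit via Chernoff and the $k = \Omega(\log d)$ assumption---so your argument is, if anything, more complete than the original.
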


\begin{proof}
  Using Definition~\ref{def:lapcdf} we find that the probability of storing a zero entry is:
  \begin{align*}
    \Pr[\Lap{1/\epsilon} \geq t] & = \Pr[\Lap{1/\epsilon} \leq -t] %
     = \frac{1}{2}e^{-t\epsilon} %
     = \frac{1}{d} \enspace .
  \end{align*}
  By linearity of expectation, the expected number of stored true zero entries is at most one, and as such the vector is $O(k)$-sparse with high probability.
\end{proof}

As discussed in Section~\ref{sec:relatedwork}, the expected per-entry error of Algorithm~\ref{alg:threspure} is $\Ologdeps$ for worst-case input. We combine the algorithm with the ALP mechanism from the previous section to achieve $\Ooneeps$ expected per-entry error for any input. We use the threshold parameter $t$ as value for parameter $\beta$ in Algorithm~\ref{alg:ALPgeneral}. The algorithm is presented in Algorithm~\ref{alg:ALPcombined}. 

\begin{algorithm}[t]
  \caption{\ALPfinal \label{alg:ALPcombined}}
  \SetKwInOut{Parameters}{Parameters}\SetKwInOut{Input}{Input}\SetKwInOut{Output}{Output}
  \SetAlgoLined

  \Parameters{$\alpha, \epsilon_1, \epsilon_2 > 0$, and $s \in \N$.}
  \Input{$k$-sparse vector $x \in \X$, where $s > 2k$. 
  Sequence of hash functions from domain $[d]$ to $[s]$, $h = (h_1,\ldots,h_{m})$, where $m=\ceil{\frac{\beta\epsilon_2}{\alpha}}$.} 
  \Output{$(\epsilon_1 + \epsilon_2)$-differentially private representation of $x$.}

  \nl Let $t=\frac{\ln(d/2)}{\epsilon_1}$. \\
  \nl Let $\tilde{v} = \text{\threshold}_{\epsilon_1, t}(x)$. \\
  \nl Let $h, \tilde{z} = \text{\ALPgeneral}_{\alpha, \epsilon_2, t, s}(x, h)$ \\
  \nl Return $\tilde{v}$, $h$ and $\tilde{z}$.
\end{algorithm} 

\begin{lemma} \label{lem:ALPcombinedDP}
  Algorithm~\ref{alg:ALPcombined} satisfies $(\epsilon_1 + \epsilon_2)$-differential privacy. 
\end{lemma}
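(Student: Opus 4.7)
The plan is to apply the basic composition theorem (Lemma~\ref{lem:composition}) directly. Algorithm~\ref{alg:ALPcombined} releases the pair $(\tilde{v}, (h, \tilde{z}))$, where the first component is produced by \threshold{} on input $x$ and the second component is produced by \ALPgeneral{} on the same input $x$. These are the only two places where the sensitive data is accessed, so the algorithm is literally the composition $\M(x) = (\M_1(x), \M_2(x))$ required by Lemma~\ref{lem:composition}.

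Next I would invoke the privacy guarantees already established for each component: by Lemma~\ref{lem:thresholdpriv}, \threshold{} with parameter $\epsilon_1$ is $(\epsilon_1, 0)$-differentially private, and by Lemma~\ref{lem:ALPgeneralDP}, \ALPgeneral{} with parameter $\epsilon_2$ is $(\epsilon_2, 0)$-differentially private. Applying Lemma~\ref{lem:composition} with $\delta_1 = \delta_2 = 0$ yields that the joint release satisfies $(\epsilon_1 + \epsilon_2, 0)$-differential privacy, which is exactly the claim.

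One small subtlety worth noting, but not an obstacle: the two subroutines share the same sequence of hash functions $h$ is not actually true here — only \ALPgeneral{} uses $h$, while \threshold{} operates solely on $x$ — so the two mechanisms use independent randomness, which is what composition requires. There is nothing else to verify, so this lemma is essentially a one-line application of composition; the real content lies in Lemmas~\ref{lem:thresholdpriv} and~\ref{lem:ALPgeneralDP}, which have already been proven.
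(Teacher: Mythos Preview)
Your proposal is correct and mirrors the paper's own proof almost verbatim: both invoke Lemma~\ref{lem:thresholdpriv} and Lemma~\ref{lem:ALPgeneralDP} for the two components, note that the subroutines use independent randomness, and then apply composition (Lemma~\ref{lem:composition}). There is nothing to add.
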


\begin{proof}
  The two parts of the algorithm are independent as there is no shared randomness. The first part of the algorithm satisfies $\epsilon_1$-differential privacy by Lemma~\ref{lem:thresholdpriv} and the second part satisfies $\epsilon_2$-differential privacy by Lemma~\ref{lem:ALPgeneralDP}. As such it follows directly from composition (Lemma~\ref{lem:composition}) that Algorithm~\ref{alg:ALPcombined} satisfies $(\epsilon_1 + \epsilon_2)$-differential privacy.  
\end{proof}

\begin{lemma} \label{lem:ALPcombinedsize}
  Let $\alpha=\Theta(1)$, $s=\Theta(k)$, $\epsilon_1=\Theta(\epsilon_2)$. Then the output of Algorithm~\ref{alg:ALPcombined} is stored using $O(k\log(d+u))$ bits with high probability.  
\end{lemma}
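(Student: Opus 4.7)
The plan is to bound the two components of the output separately and then add them, noting that the ``with high probability'' qualifier is inherited only from the thresholding part.

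First I would analyze the threshold component $\tilde{v}$. By Lemma~\ref{lem:thresholdsize}, with $t=\ln(d/2)/\epsilon_1$ the vector $\tilde{v}$ is $O(k)$-sparse with high probability: the $k$ originally non-zero entries contribute at most $k$ non-zeros in $\tilde{v}$, and among the $d-k$ originally zero entries the expected number that survive the threshold is bounded by $(d-k)\cdot\Pr[\Lap{1/\epsilon_1}\ge t]\le 1$, which is $O(k)$ with high probability by a Chernoff/Markov argument. In the $w$-bit word RAM model with $w=\Theta(\log d+\log u)$, each surviving entry (a value in $[0,u+O(\log(d)/\epsilon_1)]$, rounded to the available word precision) together with its index in $[d]$ fits in $O(\log(d+u))$ bits, and an $O(k)$-sparse vector can be kept in a hash table using $O(k\log(d+u))$ bits total.

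Next I would bound the ALP component $(h,\tilde{z})$. This is produced by invoking \ALPgeneral\ with parameter $\beta=t$ and privacy parameter $\epsilon_2$. Plugging into Lemma~\ref{lem:ALPgeneralsize} gives a space usage of
\[
O(k\beta\epsilon_2)=O\!\left(k\cdot\tfrac{\ln(d/2)}{\epsilon_1}\cdot \epsilon_2\right)=O(k\log d),
\]
where the last step uses the assumption $\epsilon_1=\Theta(\epsilon_2)$. Note that this bound is deterministic, so no probabilistic analysis is required here.

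Finally I would combine the two bounds. Adding the threshold part and the ALP part yields $O(k\log(d+u))+O(k\log d)=O(k\log(d+u))$ bits, and the ``with high probability'' qualifier comes from the sparsity bound on $\tilde{v}$ while the ALP bound holds deterministically. I expect the only mildly subtle step to be verifying that the noisy threshold entries can indeed be encoded in $O(\log(d+u))$ bits per entry in the word RAM model (i.e., that the noise magnitude and the fractional precision needed for utility do not blow up the per-entry cost). Once that is in hand, the rest is a direct application of the previously established size lemmas together with the $\epsilon_1=\Theta(\epsilon_2)$ assumption that makes the $1/\epsilon_1$ from $t$ cancel against the $\epsilon_2$ from Lemma~\ref{lem:ALPgeneralsize}.
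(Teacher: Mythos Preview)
Your proposal is correct and follows essentially the same approach as the paper: bound the threshold output $\tilde{v}$ by $O(k\log(d+u))$ bits via Lemma~\ref{lem:thresholdsize} and the word-RAM hash-table representation, bound the ALP output $(h,\tilde{z})$ by $O(k\beta\epsilon_2)=O(k\log d)$ bits via Lemma~\ref{lem:ALPgeneralsize} with $\beta=t$ and $\epsilon_1=\Theta(\epsilon_2)$, and add. The paper's own proof is just a two-sentence version of exactly this argument, so your extra detail (the Chernoff/Markov remark, the per-entry encoding discussion) is elaboration rather than a different route.
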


\begin{proof}
  It follows from Lemma~\ref{lem:thresholdsize} that we can store $\tilde{v}$ using $O(k\log(d + u))$ bits with high probability. Since $\beta=t$ it follows from Lemma~\ref{lem:ALPgeneralsize} that we can store $h$ and $\tilde{z}$ using $O(kt\epsilon_2)=O(k\log(d))$ bits. 
\end{proof}

To estimate an entry, 
we access $\tilde{v}$ when a value is stored for the entry and the ALP embedding otherwise. This algorithm is presented in Algorithm~\ref{algo:ALPcombinedest}.

\begin{algorithm}[t]
  \caption{\ALPfinalest \label{algo:ALPcombinedest}}
  \SetKwInOut{Parameters}{Parameters}\SetKwInOut{Input}{Input}\SetKwInOut{Output}{Output}
  \SetAlgoLined

  \Parameters{$\alpha, \epsilon_2 > 0$.}
  \Input{Vector $\tilde{v} \in \X$. Embedding $\tilde{z} \in \{0,1\}^{s \times m}$. Sequence of hash functions $h=(h_1,\ldots,h_{m})$. Index $i \in [d]$.}
  \Output{Estimate of $x_i$.}

  \nl Estimate the entry using either the vector or the embedding such that:
  $$\tilde{x}_i = \begin{cases}
    \tilde{v}_i, & \text{if } \tilde{v}_i \neq 0 \\
    \text{\ALPgeneralest}_{\epsilon_2, \alpha}(\tilde{z}, h, i), & \text{otherwise}
  \end{cases}$$
  \nl Return $\tilde{x}_i$.
\end{algorithm}

\begin{lemma} \label{lem:ALPpure}
  Let $\alpha=\Theta(1)$, $s=\Theta(k)$, and $\epsilon_1=\Theta(\epsilon_2)$.
  Let $\tilde{v}$, $h$, and $\tilde{z}$ be the output of Algorithm~\ref{alg:ALPcombined} given these parameters. Then the evaluation time of Algorithm~\ref{algo:ALPcombinedest} is $O(\log(d))$. The expected per-entry error is $O(1/\epsilon)$ and the expected maximum error is $\Ologdeps$.
\end{lemma}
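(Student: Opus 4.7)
The plan is to establish the three claims in Lemma~\ref{lem:ALPpure} independently. The evaluation time follows almost directly: Algorithm~\ref{algo:ALPcombinedest} first looks up $\tilde{v}_i$ in a hash table in $O(1)$ time; if it is nonzero we return it, otherwise we invoke \ALPgeneralest. By Lemma~\ref{lem:ALPgeneralerror} (whose running time bound follows from Lemma~\ref{lem:ALP1estrunningtime}) this call costs $O(\beta\epsilon_2) = O(t\epsilon_2) = O(\log(d))$ because $t = \ln(d/2)/\epsilon_1$ and $\epsilon_1 = \Theta(\epsilon_2)$.

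For the per-entry error I would use the fact that the two sub-algorithms are run on independent randomness, so conditioned on the value of $\tilde{v}_i$ the ALP estimate is distributed exactly as the unconditioned ALP estimate. Split the expectation by whether $\tilde{v}_i$ is stored:
\begin{align*}
  \E[|x_i-\tilde{x}_i|] &= \E\!\left[|\eta_i|\cdot \mathbf{1}_{\tilde v_i\neq 0}\right]
  + \Pr[\tilde v_i=0]\cdot \E[|x_i - \tilde{x}_i^{\text{ALP}}|].
\end{align*}
The first summand is at most $\E[|\eta_i|] = O(1/\epsilon_1)$ by Proposition~\ref{prop:laperror}. For the second, by Lemma~\ref{lem:ALPgeneralerror} with $\beta=t$, the ALP error is at most $\max(0,x_i-t) + O(1/\epsilon_2)$ in expectation. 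If $x_i\le t$ this contributes $O(1/\epsilon_2)$. If $x_i > t$, then $\Pr[\tilde v_i = 0] = \tfrac{1}{2}e^{-(x_i-t)\epsilon_1}$ by Definition~\ref{def:lapcdf}, and the contribution becomes $\tfrac{1}{2}e^{-(x_i-t)\epsilon_1}\bigl((x_i-t)+O(1/\epsilon_2)\bigr)$; since $ye^{-y\epsilon_1}\le 1/(e\epsilon_1)$ this is again $O(1/\epsilon)$. Summing the two pieces gives $O(1/\epsilon)$ as desired.

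For the maximum error I would combine the tail bounds with a union bound. By Proposition~\ref{prop:laptail}, for each $i$ we have $|\eta_i|\le C\log(d)/\epsilon_1$ with probability at least $1-1/d^2$; by Lemma~\ref{lem:ALPgeneraltailbounds}, the ALP-error for any entry with $x_i\le t$ is at most $C\log(d)/\epsilon_2$ with the same probability. Union-bounding over all $d$ coordinates and integrating the tail (as one does to convert $e^{-\tau\epsilon}$ tails into $\E[\max_i|\eta_i|]=O(\log d/\epsilon)$), it remains only to bound the error on entries with $x_i>t$ when $\tilde v_i=0$. But in that (low-probability) case the event $\tilde v_i=0$ itself forces $|\eta_i|\ge x_i-t$, so conditional on the Laplace tail bound holding we must have $x_i-t = O(\log(d)/\epsilon_1)$; the ALP output is trivially in $[0,t]$ after clamping, so the total error is at most $(x_i-t) + t - \tilde{x}_i^{\text{ALP}} = O(\log(d)/\epsilon)$. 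The contribution from the failure event is absorbed since the error is always at most $u$ and we choose the failure probability to be $O(1/d^2)$, which multiplied by $u$ contributes a lower-order term.

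The main obstacle is the last case --- large entries whose Laplace copy fails to exceed the threshold --- because there Lemma~\ref{lem:ALPgeneraltailbounds} does not apply directly (it assumes $x_i \le \beta$). The key trick will be to note that the ALP encoding for $x_i > t$ is identical to the encoding for $x_i = t$ except that the random-rounding bias is in one direction, so the analysis of Lemma~\ref{lem:ALPgeneraltailbounds} still yields an exponential tail for $|t - \tilde{x}_i^{\text{ALP}}|$; coupling this with the fact that the very event $\tilde v_i = 0$ already implies a large Laplace deviation (and hence small probability) is what allows the rare ``miss'' to be absorbed into the $O(\log(d)/\epsilon)$ budget.
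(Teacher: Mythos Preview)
Your treatment of the evaluation time and the expected per-entry error is correct and essentially matches the paper's argument; the paper carries out the per-entry bound as a single integral rather than your two-case split, but the content is the same (your inequality $ye^{-y\epsilon_1}\le 1/(e\epsilon_1)$ is exactly what makes the paper's integral converge).

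For the expected maximum error, however, your route is both more complicated than necessary and contains a gap. The gap is in the sentence ``the error is always at most $u$ and we choose the failure probability to be $O(1/d^2)$, which multiplied by $u$ contributes a lower-order term.'' Nothing in the setup relates $u$ to $d$; $u$ can be arbitrarily large relative to $d$, so $u/d^2$ need not be $O(\log(d)/\epsilon)$. You could try to shrink the failure probability, but then the Laplace tail threshold grows to $\Theta(\log(u)/\epsilon)$ and you overshoot the target.

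The paper avoids all of this --- the union bound, the separate ALP tail analysis, and the ``main obstacle'' paragraph --- with a one-line deterministic pointwise bound. Observe that the ALP estimate is always clamped to $[0,\beta]$, and that the event $\tilde v_i=0$ with $x_i>\beta$ forces $|\eta_i|>x_i-\beta$. Hence in \emph{every} case the combined error at coordinate $i$ satisfies
\[
  |x_i-\tilde{x}_i|\ \le\ |\eta_i|+\beta,
\]
so $\E[\max_i|x_i-\tilde{x}_i|]\le \E[\max_i|\eta_i|]+\beta = O(\log(d)/\epsilon)+O(\log(d)/\epsilon)$. This is what the paper means by ``the expected maximum error of Algorithm~\ref{alg:threspure} is $O(\log(d)/\epsilon)$ and the output of Algorithm~\ref{alg:ALPgeneralest} is at most $\beta$.'' No tail bounds on the ALP side are needed at all, which in particular dissolves your ``main obstacle.''
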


\begin{proof}
  The evaluation time follows from Lemma~\ref{lem:ALPgeneralerror}. That is, the evaluation time is $O(\beta\epsilon)=O(t\epsilon)=O(\log(d))$.

  The error depends on both parts of the algorithm. The expected per-entry error for the $i$th entry is $\max(0,x_i-\beta) + O(1/\epsilon_2)$ when $\tilde{v}_i = 0$ by Lemma~\ref{lem:ALPgeneralerror}. That is, when $\eta_i$ is less than $\beta-x_i$ in Algorithm~\ref{alg:threspure}. When $\tilde{v}_i \neq 0$ the error is the absolute value of $\eta_i$. That is, we can analyze it using conditional probability and the probability density function of the Laplace distribution from Definition~\ref{def:lappdf}.
  \begin{align*}
    \E[|x_i-\tilde{x}_i|] & = \E[|x_i-\tilde{x}_i| \mid \tilde{v}_i = 0] \cdot \Pr[\tilde{v}_i=0] \\
    & \quad + \E[|x_i-\tilde{x}_i| \mid \tilde{v}_i \neq 0] \cdot \Pr[\tilde{v}_i \neq 0] \\
    & \leq (\max(0,x_i-\beta) + O(1/\epsilon_2)) \cdot \Pr[\Lap{1/\epsilon_1} < \beta-x_i]  \\
    & \quad + \int_{\beta - x_i}^\infty |v - x_i| \cdot \frac{\epsilon_1}{2} e^{-\lvert v - x_i \rvert \epsilon_1} \,dv \\
    & < \int_{-\infty}^{\beta - x_i} (|v-x_i| + O(1/\epsilon_2)) \cdot \frac{\epsilon_1}{2} e^{-|v - x_i|\epsilon_1} \,dv \\
    & \quad + \int_{\beta - x_i}^\infty |v - x_i| \cdot \frac{\epsilon_1}{2} e^{-\lvert v - x_i \rvert \epsilon_1} \,dv \\
    & < \int_{-\infty}^\infty |v - x_i| \cdot \frac{\epsilon_1}{2} e^{-|v - x_i|\epsilon_1} \,dv + O(1/\epsilon_2) \\
    & = O(1/\epsilon_1) + O(1/\epsilon_2) = O(1/\epsilon) \enspace .
  \end{align*}
  The expected maximum error of Algorithm~\ref{alg:threspure} is $\Ologdeps$ and the output of the Algorithm~\ref{alg:ALPgeneralest} is at most $\beta$. Since $\beta=\Ologdeps$ the expected maximum error is $\Ologdeps$. 
\end{proof}

\subsection{Removing the dependency on dimension}

To make access time independent of the dimension $d$, we can turn to approximate differential privacy. 
This allows us to use a smaller threshold in the initial thresholding approach, which in turn results in smaller values for $\beta$ in the ALP mechanism.

The following algorithm is similar to that introduced by Korolova et al.~\cite{korolova2009releasing}, which we discussed in Section~\ref{sec:relatedwork}. It adds noise to non-zero entries only, and uses a threshold to satisfy approximate differential privacy. Our algorithm differs from the work of Korolova et al. by using a random rounding step. This step is not needed in a discrete setting, where at most a single zero-valued entry is changed to a non-zero entry for neighboring vectors. However, in the real-valued context, several zero entries can change.

\begin{algorithm}[t]
  \caption{\thresholdapprox~(Following technique by~\cite{korolova2009releasing}) \label{alg:thresapprox}}
  \SetKwInOut{Parameters}{Parameters}\SetKwInOut{Input}{Input}\SetKwInOut{Output}{Output}
  \SetAlgoLined
  
  \Parameters{$\epsilon, \delta > 0$.}
  \Input{$k$-sparse vector $x \in \X$.}
  \Output{$(\epsilon, \delta)$-differentially private approximation of $x$.}
  
  \nl Apply random rounding to non-zero entries below $1$ such that:
  $$y_i = \begin{cases}
    \RRound{x_i}, & \text{if } 0 < x_i < 1 \\
    x_i, & \text{otherwise}
  \end{cases}$$ \\
  \nl Let $v_i = y_i + \eta_i$ for all non-zero entries, where $\eta_i \sim \Lap{1/\epsilon}$. \\
  \nl Let $t=\frac{\ln{\left(1/\delta\right)}}{\epsilon}+2$. \\
  \nl Truncate entries below $t$:
  $$\tilde{v}_i = 
  \begin{cases}
    v_i, & \text{if } y_i \neq 0 \text{ and } v_i \geq t \\
    0, & \text{otherwise} 
  \end{cases}$$ \\
  \nl Return $\tilde{v}$.
\end{algorithm}

\begin{lemma} \label{lem:alpcombinedpureDP}
  Algorithm~\ref{alg:thresapprox} satisfies $(\epsilon, \delta)$-differential privacy. 
\end{lemma}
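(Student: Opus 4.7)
The plan is to decompose Algorithm~\ref{alg:thresapprox} into $d$ independent per-entry sub-mechanisms and apply basic composition (Lemma~\ref{lem:composition}) to combine their guarantees. Fix neighbors $x, x' \in \X$ with $\|x - x'\|_1 \leq 1$, write $a_i := |x_i - x'_i|$, and set $c := \tfrac{1}{2}e^{-(t-1)\epsilon} = \delta/(2 e^\epsilon)$, where the second equality uses $t = \ln(1/\delta)/\epsilon + 2$. The goal is to show the per-entry release of $\tilde{v}_i$ is $(\epsilon a_i,\, a_i c)$-DP; since the rounding coin and the noise $\eta_i$ are independent across $i$, iterating Lemma~\ref{lem:composition} gives an overall bound of $(\epsilon \sum_i a_i,\, c \sum_i a_i)$-DP, and since $\sum_i a_i = \|x - x'\|_1 \leq 1$, this is stronger than the required $(\epsilon, \delta)$-DP.

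For the per-entry analysis I split on the values of $x_i, x'_i$, using $a_i \leq 1$. When both $x_i, x'_i \geq 1$ the rounding branch is not taken, so the per-entry mechanism is Laplace noise on $x_i$ followed by thresholding (post-processing), which is $(\epsilon a_i, 0)$-DP by the standard Laplace analysis. When both $x_i, x'_i \in [0,1]$, I couple the two executions by sharing the uniform $u_i$ used in random rounding and the Laplace variate $\eta_i$: under this coupling $y_i = 1$ iff $u_i < x_i$, so the per-entry outputs disagree only when $u_i$ lies in the strip between $x_i$ and $x'_i$ \emph{and} $\eta_i \geq t - 1$, an event of probability at most $a_i \cdot c$; hence $(0, a_i c)$-DP. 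For the mixed case ($x_i \in [0,1]$ and $x'_i \in (1, 1 + a_i]$, or vice versa), I chain via the intermediate point $x_i'' = 1$, combining the two previous bounds with the standard DP triangle inequality $\Pr[\M(x) \in S] \leq e^{\epsilon_1 + \epsilon_2}\Pr[\M(x') \in S] + \delta_1 + e^{\epsilon_1} \delta_2$; this yields $(\epsilon(x'_i - 1),\, (1 - x_i)c)$, dominated by $(\epsilon a_i, a_i c)$ since $x'_i - 1 \leq a_i$ and $1 - x_i \leq a_i$.

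The main obstacle is the ``both small'' case. A naive pointwise density-ratio analysis fails there because, at non-zero outputs, the density ratio equals $x_i/x'_i$, which is unbounded relative to $a_i$ when $x'_i$ is tiny. The maximal coupling via the shared $u_i$ sidesteps this by bounding the \emph{absolute} disagreement probability instead of the ratio. The random-rounding step of Algorithm~\ref{alg:thresapprox} is essential for this argument: without it, $y_i = x_i$ would shift the Laplace distribution by different amounts under $x$ and $x'$, so the probability of producing a non-zero output would not scale linearly in $x_i$, and the per-entry $\delta_i$'s would not telescope to $\delta$ regardless of how many entries change.
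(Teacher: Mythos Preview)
Your per-coordinate decomposition is sound and gives a correct proof, but it is structured differently from the paper's argument. The paper does not compose across coordinates; instead it passes through two intermediate \emph{vectors} $\hat{x}$ and $\hat{x}'$ chosen so that the step $x\to\hat{x}$ (and $\hat{x}'\to x'$) changes only entries that are $\le 1$ on both sides, while $\hat{x}\to\hat{x}'$ changes only entries that are $\ge 1$ on both sides. The small-entry steps are handled by exactly the total-variation bound you use in your coupling (summed once over all coordinates, contributing at most $\tfrac{\delta}{2e^\epsilon}$ each), and the large-entry step is pure Laplace plus post-processing. Your route is more modular---everything reduces to a single-coordinate statement and Lemma~\ref{lem:composition}---while the paper's route avoids invoking composition and the per-entry triangle chaining.

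One imprecision in your mixed case: you only carried out the triangle inequality in the direction $x_i\to 1\to x'_i$, which indeed gives additive term $(1-x_i)c$. The reverse direction $x'_i\to 1\to x_i$ puts the $e^{\epsilon_1}$ factor on the nonzero $\delta$, yielding $e^{\epsilon(x'_i-1)}(1-x_i)c$, and for large $\epsilon$ this can exceed $a_i c$, so the per-entry mechanism is not literally $(\epsilon a_i,\,a_i c)$-DP as you claim. This does not break the argument: summing these terms over $i$ still gives at most $e^{\epsilon}\sum_i a_i\,c\le e^{\epsilon}c=\delta/2<\delta$, precisely because you chose $c=\delta/(2e^\epsilon)$ with slack. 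Just state the per-entry bound in the mixed case as $(\epsilon a_i,\,e^{\epsilon a_i}a_i c)$ (or work one direction at a time and invoke symmetry at the end) and the composition goes through cleanly.
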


\begin{proof}
  Let $x$ and $x'$ be neighboring vectors. We consider two additional vectors $\hat{x}$ and $\hat{x}'$ such that:
  $$\hat{x}_i = \begin{cases}
    \min(1, x'_i), & \text{if } x_i \leq 1 \\
    x_i, & \text{otherwise;}
  \end{cases}$$
  $$\hat{x}'_i = \begin{cases}
    1, & \text{if } x'_i < 1 \text{ and } 1 < x_i \\
    x'_i, & \text{otherwise.}
  \end{cases}$$
  The vectors are constructed such that $x$ and $\hat{x}$ can only differ for entries at most $1$ in both vectors. The same holds for $x'$ and $\hat{x}'$. Additionally, the $\ell_1$-distance is still at most $1$ between any pair of vectors. 
  
  We find the probability of outputting anything for an entry less than or equal to $1$ as:
  \begin{align*}
    \Pr[\tilde{v}_i \neq 0 | x_i \leq 1] & = \Pr[y_i=1] \cdot \Pr[\Lap{1/\epsilon} \geq t-1] \\
    & = x_i \cdot \Pr[\Lap{1/\epsilon} \leq -(t - 1)] \\
    & = x_i \cdot \frac{1}{2} e^{-(t-1)\epsilon} \\
    & = x_i \cdot \frac{1}{2} e^{-\ln(1/\delta)-\epsilon} \\
    & = x_i \frac{\delta}{2\cdot e^\epsilon} \enspace .
  \end{align*}
  
  Since $x$ and $\hat{x}$ only differ for entries less than or equal to $1$ we have for any subset of outputs $S$:
  \begin{align*}
    \Pr[\text{\thresholdapprox}(x) \in S] & \leq \Pr[\text{\thresholdapprox}(\hat{x}) \in S] \\
    & \quad + \sum_{i \in [d]} |\hat{x}_i-x_i| \frac{\delta}{2\cdot e^\epsilon} \\
    & \leq \Pr[\text{\thresholdapprox}(\hat{x}) \in S] + \frac{\delta}{2 \cdot e^\epsilon} \enspace .
  \end{align*}
  The inequality holds in both directions and for the pair of $x'$ and $\hat{x}'$ as well. 
  
  By definition $\hat{x}$ and $\hat{x}'$ only differ for entries of at least $1$. As such we can ignore the random rounding step and we have:
  \begin{align*}
    \Pr[\text{\thresholdapprox}(\hat{x}) \in S] &\leq e^{\|\hat{x}-\hat{x}'\|_1\epsilon} \Pr[\text{\thresholdapprox}(\hat{x}') \in S] \\
    & \leq e^\epsilon \cdot \Pr[\text{\thresholdapprox}(x) \in S] \enspace .
  \end{align*}

  Using the inequalities above we have:
  \begin{align*}
    \Pr[\text{\thresholdapprox}(x) \in S] & \leq \Pr[\text{\thresholdapprox}(\hat{x}) \in S] + \frac{\delta}{2e^\epsilon} \\
    & \leq e^\epsilon \cdot \Pr[\text{\thresholdapprox}(\hat{x}') \in S] + \frac{\delta}{2 \cdot e^\epsilon} \\
    & \leq e^\epsilon \cdot \left(\Pr[\text{\thresholdapprox}(x') \in S] + \frac{\delta}{2 \cdot e^\epsilon}\right) \\
    & \quad + \frac{\delta}{2 \cdot e^\epsilon} \\
    & \leq e^\epsilon \cdot \Pr[\text{\thresholdapprox}(x') \in S] + \delta \enspace .
  \end{align*}
\end{proof}

\begin{lemma}
  Let $\delta=O\left(\frac{1}{k}\right)$. Then the expected maximum error of Algorithm~\ref{alg:thresapprox} is $\Ologdeltaeps$.
\end{lemma}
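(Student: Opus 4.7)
The plan is to argue that only non-zero entries contribute nontrivial error, bound the per-entry error in a way that couples the truncation case with the noise magnitude, and then take a union bound over the at most $k$ non-zero entries using the tail bound of the Laplace distribution.

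First I would observe that if $x_i = 0$ then $y_i = 0$, so Step~(4) forces $\tilde{v}_i = 0$ and the contribution to the error is zero. Hence the maximum error is determined by the at most $k$ coordinates where $x_i > 0$. For such a coordinate the random-rounding step gives $|x_i - y_i| \leq 1/2$, and there are two subcases depending on the truncation: either $v_i \geq t$, in which case $\tilde{v}_i = v_i$ and $|x_i - \tilde{v}_i| \leq |x_i - y_i| + |\eta_i| \leq 1/2 + |\eta_i|$; or $\tilde{v}_i = 0$, in which case the error is $x_i$, but $v_i < t$ forces $x_i \leq y_i + 1/2 < t - \eta_i + 1/2 \leq t + |\eta_i| + 1/2$. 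In both cases the error for a non-zero coordinate is bounded by $t + |\eta_i| + 1/2$.

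Next I would take a maximum over the (at most) $k$ non-zero coordinates to get
\begin{equation*}
\max_{i \in [d]} |x_i - \tilde{x}_i| \leq t + \tfrac{1}{2} + \max_{i\,:\,x_i > 0} |\eta_i|,
\end{equation*}
and use Proposition~\ref{prop:laptail} together with a union bound over the $k$ non-zero coordinates to conclude that $\E[\max_i |\eta_i|] = O(\log(k)/\varepsilon)$. Substituting $t = \ln(1/\delta)/\varepsilon + 2$ and using the hypothesis $\delta = O(1/k)$, which gives $\log(k) = O(\log(1/\delta))$, yields $\E[\max_i |x_i - \tilde{x}_i|] = O(\log(1/\delta)/\varepsilon) = \Ologdeltaeps$, as desired.

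The only non-routine step is the error bound for the truncated case: one has to notice that the event $\tilde{v}_i = 0$ when $x_i$ is large automatically forces $|\eta_i|$ to be large, so that the "lost" value $x_i$ can still be charged against the noise magnitude plus the threshold $t$. Once that observation is in hand, the rest follows from standard Laplace tail bounds and the assumption $\delta = O(1/k)$.
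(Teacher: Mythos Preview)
Your argument is essentially the same as the paper's proof, only spelled out in more detail: the paper decomposes the error as the Laplace-noise contribution (expected maximum $O(\log(k)/\epsilon)$ over $k$ entries) plus the truncation contribution (at most $t = O(\log(1/\delta)/\epsilon)$), and then uses $\delta = O(1/k)$ to absorb the first term into the second. Your explicit bound $|x_i - \tilde{v}_i| \leq t + |\eta_i| + \text{const}$ in the truncated case is exactly what makes the paper's terse sentence ``by removing entries we add error of up to $O(\log(1/\delta)/\epsilon)$'' rigorous.

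One small correction: the claim ``$|x_i - y_i| \leq 1/2$'' is the \emph{expected} rounding error (Lemma~\ref{lem:rrounderror}), not a deterministic bound. Random rounding of $x_i \in (0,1)$ to $\{0,1\}$ can give error arbitrarily close to $1$. The correct pointwise bound is $|x_i - y_i| < 1$ (and $=0$ when $x_i \geq 1$, since rounding is only applied below $1$). This changes the additive constant from $1/2$ to $1$ but does not affect the asymptotic conclusion. You should also briefly note the sub-case where $0 < x_i < 1$ rounds to $y_i = 0$: then no noise $\eta_i$ is drawn and $\tilde{v}_i = 0$ directly, with error $x_i < 1$, which is trivially absorbed into the $O(\log(1/\delta)/\epsilon)$ bound.
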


\begin{proof}
  The expected maximum error added by the Laplace noise is $O\left(\frac{\log(k)}{\epsilon}\right)$, since we add noise to at most $k$ entries. By removing entries we add error of up to $\Ologdeltaeps$. As such the expected maximum error for worst-case input is:
  \begin{align*}
    \E[\|x-\tilde{v}\|_\infty] & \leq O\left(\frac{\log(k)}{\epsilon}\right) + O\left(\frac{\log(1/\delta)}{\epsilon}\right) 
    = O\left(\frac{\log(1/\delta)}{\epsilon}\right) \enspace .
  \end{align*}
\end{proof}

In the following, we use Algorithm~\ref{alg:thresapprox} instead of Algorithm~\ref{alg:threspure} in Step~(2) of Algorithm~\ref{alg:ALPcombined}.
\begin{lemma} \label{lem:ALPapprox}
  Let $\alpha=\Theta(1)$, $s=\Theta(k)$, $\epsilon_1=\Theta(\epsilon_2)$, and $\delta > 0$ with $\delta = O(1/k)$.
  By using Algorithm~\ref{alg:thresapprox} in Algorithm~\ref{alg:ALPcombined} the access time is $O(\log(1/\delta))$. The expected per-entry error is $O(1/\epsilon)$ and the expected maximum error is $\Ologdeltaeps$. The combined mechanism satisfies $(\epsilon_1 + \epsilon_2,\delta)$-differential privacy.
\end{lemma}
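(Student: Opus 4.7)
The plan is to establish the four claims (privacy, access time, expected per-entry error, expected maximum error) by closely mirroring the structure of Lemma~\ref{lem:ALPpure}, with the only substantive change being that the threshold $t$ is now $\ln(1/\delta)/\epsilon_1 + 2$ rather than $\ln(d/2)/\epsilon_1$. Accordingly, one sets $\beta = t$ in the call to \ALPgeneral{} and carries this new value through all the bounds.

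For privacy, Algorithm~\ref{alg:thresapprox} is $(\epsilon_1,\delta)$-differentially private by Lemma~\ref{lem:alpcombinedpureDP}, while the \ALPgeneral{} step is $\epsilon_2$-differentially private by Lemma~\ref{lem:ALPgeneralDP}; since the two components use independent randomness, basic composition (Lemma~\ref{lem:composition}) yields $(\epsilon_1+\epsilon_2,\delta)$-differential privacy. For access time, the estimator either reads a single stored entry from $\tilde{v}$ in $O(1)$ time, or it calls \ALPgeneralest{}, whose runtime by Lemma~\ref{lem:ALPgeneralerror} is $O(\beta \epsilon_2)$; substituting $\beta = \ln(1/\delta)/\epsilon_1 + 2$ and using $\epsilon_1 = \Theta(\epsilon_2)$ gives $O(\log(1/\delta))$.

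For the expected per-entry error, I would repeat the conditional-expectation calculation in Lemma~\ref{lem:ALPpure}, splitting on whether $\tilde{v}_i = 0$. When $\tilde{v}_i = 0$, the ALP estimator gives error at most $\max(0, x_i - \beta) + O(1/\epsilon_2)$ by Lemma~\ref{lem:ALPgeneralerror}; when $\tilde{v}_i \neq 0$, the error equals $|\eta_i|$ plus at most $1/2$ from random rounding. An integral over the Laplace density with tails above and below $\beta - x_i$ then bounds the total by $O(1/\epsilon_1) + O(1/\epsilon_2) = O(1/\epsilon)$. For the expected maximum error, the previous lemma shows that the thresholding step contributes $O(\log(1/\delta)/\epsilon)$, and since the ALP estimator is always clamped to $[0,\beta] = [0, O(\log(1/\delta)/\epsilon_1)]$, the maximum over all coordinates remains $O(\log(1/\delta)/\epsilon)$.

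The main subtlety will be the per-entry error for entries $0 < x_i < 1$, since the random rounding in Step~(1) of Algorithm~\ref{alg:thresapprox} means that whether Laplace noise is added at all depends on the rounded value $y_i$. One must argue that, conditioned on $y_i = 0$, the ALP estimate still has expected error $O(1/\epsilon_2)$ (which follows from $x_i \le 1 \le \beta$ so Lemma~\ref{lem:ALPgeneralerror} applies), and that conditioned on $y_i = 1$ the integral argument above works verbatim with the rounded value in place of $x_i$, with an extra additive $1/2$ accounting for the rounding discrepancy. Combining both cases by the law of total expectation yields $O(1/\epsilon)$, completing the proof.
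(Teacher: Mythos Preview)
Your proposal is correct and follows essentially the same approach as the paper, whose proof is simply the one-liner ``The proof is the same as the proofs of Lemmas~\ref{lem:ALPcombinedDP} and~\ref{lem:ALPpure}.'' In fact you are more careful than the paper: your final paragraph explicitly handles the random-rounding step of Algorithm~\ref{alg:thresapprox} for entries $0 < x_i < 1$, a case the paper's deferral glosses over but which, as you note, only contributes an additive $O(1)$ and does not affect the $O(1/\epsilon)$ bound.
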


\begin{proof}
  The proof is the same as the proofs of Lemmas~\ref{lem:ALPcombinedDP} and~\ref{lem:ALPpure}.
\end{proof}

\begin{lemma} \label{lem:ALPapproxsize}
  Let $\alpha=\Theta(1)$, $s=\Theta(k)$, and $\epsilon_1=\Theta(\epsilon_2)$. Then the memory requirement of combining Algorithm~\ref{alg:thresapprox} and the ALP mechanism is $O(k(\log(d+u) + \log(1/\delta)))$.
\end{lemma}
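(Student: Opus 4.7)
The plan is to bound the two components of the combined data structure separately and sum them. Algorithm~\ref{alg:ALPcombined} (modified to use Algorithm~\ref{alg:thresapprox} in Step~(2)) outputs a thresholded vector $\tilde{v}$ together with the ALP artifacts $(h, \tilde{z})$, and these occupy essentially independent regions of memory.

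First, I will bound the space for $\tilde{v}$. A crucial observation, specific to Algorithm~\ref{alg:thresapprox} as opposed to Algorithm~\ref{alg:threspure}, is that Laplace noise is added only to the at most $k$ entries that are non-zero in $x$; all other entries are set to $0$ in $\tilde{v}$ by construction. Consequently $\tilde{v}$ has at most $k$ non-zero entries \emph{deterministically}, and no high-probability sparsity argument is needed. Each stored entry requires $O(\log d)$ bits to record its index and $O(\log u)$ bits to record its (noised, truncated) value within the word-RAM model, so $\tilde{v}$ fits in $O(k\log(d+u))$ bits via a hash table.

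Next, I will bound the space for the ALP part. Here Algorithm~\ref{alg:ALPgeneral} is invoked with $\beta = t = \frac{\ln(1/\delta)}{\epsilon_1} + 2$ and privacy parameter $\epsilon_2$. Lemma~\ref{lem:ALPgeneralsize} directly yields $O(k\beta\epsilon_2)$ bits, and since $\epsilon_1 = \Theta(\epsilon_2)$ we have $\beta\epsilon_2 = O(\log(1/\delta))$, giving $O(k\log(1/\delta))$ bits in total for $h$ and $\tilde{z}$.

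Summing the two contributions gives $O(k\log(d+u)) + O(k\log(1/\delta)) = O(k(\log(d+u) + \log(1/\delta)))$, as claimed. There is essentially no obstacle here: the only subtlety is to notice that Algorithm~\ref{alg:thresapprox} never adds noise to true zero entries, so the sparsity of $\tilde{v}$ holds unconditionally rather than only with high probability as in Lemma~\ref{lem:thresholdsize}. Everything else is a direct appeal to Lemma~\ref{lem:ALPgeneralsize} combined with the substitution $\beta = t$.
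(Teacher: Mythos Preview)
Your proposal is correct and follows essentially the same approach as the paper: bound the thresholded vector $\tilde{v}$ by observing it is deterministically $k$-sparse (since Algorithm~\ref{alg:thresapprox} only touches non-zero entries), then bound the ALP part via Lemma~\ref{lem:ALPgeneralsize} with $\beta = t = \Theta(\log(1/\delta)/\epsilon)$. The only cosmetic difference is that the paper writes $\beta = \frac{\ln(1/\delta)}{\epsilon_2}+2$ while you write $\frac{\ln(1/\delta)}{\epsilon_1}+2$, which is immaterial since $\epsilon_1 = \Theta(\epsilon_2)$.
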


\begin{proof}
  The output of Algorithm~\ref{alg:thresapprox} is always $k$-sparse and can be represented using $O(k\log(d+u))$ bits. We set $\beta=\frac{\ln(1/\delta)}{\epsilon_2}+2$ and therefore $h$ and $\tilde{z}$ are represented using $O(k\log(1/\delta))$ bits by Lemma~\ref{lem:ALPgeneralsize}.
\end{proof}

We are now ready to summarize our results for both pure and approximate differential privacy.

\begin{theorem} \label{thm:ALPpure}
  Let $\alpha=\Theta(1)$, $s=\Theta(k)$, and $\epsilon > 0$. Then there exists an $\epsilon$-differentially private algorithm with $\Ooneeps$ expected per-entry error, $\Ologdeps$ expected maximum error, access time of $O(\log(d))$, and space usage of $O(k\log(d+u))$ with high probability.
\end{theorem}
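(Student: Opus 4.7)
The plan is to recognize that this theorem is a packaging result: all four guarantees (privacy, per-entry error, maximum error, and space) have already been established for Algorithm~\ref{alg:ALPcombined} in the preceding lemmas, provided we make a compatible choice of $\epsilon_1$ and $\epsilon_2$. The natural instantiation is to split the privacy budget evenly, setting $\epsilon_1 = \epsilon_2 = \epsilon/2$. This trivially satisfies the $\epsilon_1 = \Theta(\epsilon_2)$ hypothesis required by Lemmas~\ref{lem:ALPcombinedsize} and~\ref{lem:ALPpure}, and it also satisfies $1/\epsilon_1 = \Theta(1/\epsilon)$ and $1/\epsilon_2 = \Theta(1/\epsilon)$, which is the conversion needed to go from the ``per-budget-share'' bounds to bounds in terms of the overall $\epsilon$.

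First I would invoke Lemma~\ref{lem:ALPcombinedDP}, which gives $(\epsilon_1+\epsilon_2)$-differential privacy; with our choice this is exactly $\epsilon$-differential privacy. Next I would invoke Lemma~\ref{lem:ALPcombinedsize} to get that the combined output (the sparse thresholded vector $\tilde{v}$ together with the hash functions $h$ and the perturbed embedding $\tilde{z}$) uses $O(k\log(d+u))$ bits with high probability; here the $O(k\log(d+u))$ bits for $\tilde{v}$ dominate the $O(k\log(d))$ bits for the ALP component. Finally, the expected per-entry error of $O(1/\epsilon)$, expected maximum error of $\Ologdeps$, and access time of $O(\log(d))$ follow directly from Lemma~\ref{lem:ALPpure}, again using $1/\epsilon_1, 1/\epsilon_2 = \Theta(1/\epsilon)$ to fold the constants into $O(\cdot)$ notation.

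There is no genuine obstacle here; the substantive work has already been carried out in Lemmas~\ref{lem:ALPcombinedDP}, \ref{lem:ALPcombinedsize}, and \ref{lem:ALPpure}, and the theorem is a one-line consequence of choosing $\epsilon_1 = \epsilon_2 = \epsilon/2$ and invoking these three lemmas in sequence.
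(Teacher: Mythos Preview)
Your proposal is correct and matches the paper's own proof essentially verbatim: the paper simply states that the theorem follows directly from Lemmas~\ref{lem:ALPcombinedDP}, \ref{lem:ALPcombinedsize}, and~\ref{lem:ALPpure}. Your explicit choice $\epsilon_1=\epsilon_2=\epsilon/2$ is the natural instantiation that makes the cited lemmas apply, and your remark that the substantive work was already done in those lemmas is exactly right.
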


\begin{proof}
  It follows directly from Lemmas~\ref{lem:ALPcombinedDP},~\ref{lem:ALPcombinedsize} and~\ref{lem:ALPpure}.
\end{proof}

\begin{theorem} \label{thm:ALPapprox}
  Let $\alpha=\Theta(1)$, $s=\Theta(k)$, and $\epsilon, \delta > 0$. Then there exist an $(\epsilon, \delta)$-differentially private algorithm with $\Ooneeps$ expected per-entry error, $\Ologdeltaeps$ expected maximum error, access time of $O(\log(1/\delta))$, and space usage of $O(k(\log(d+u)+\log(1/\delta)))$.
\end{theorem}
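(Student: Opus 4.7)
The plan is to invoke the combined data structure of Algorithm~\ref{alg:ALPcombined}, modified so that Step~(2) uses Algorithm~\ref{alg:thresapprox} in place of Algorithm~\ref{alg:threspure}, exactly as in the setup preceding Lemma~\ref{lem:ALPapprox}. I would set $\epsilon_1 = \epsilon_2 = \epsilon/2$ so that basic composition yields overall privacy parameter $\epsilon$, and use $\beta = t = \frac{\ln(1/\delta)}{\epsilon_1} + 2$ so that the ALP embedding only has to accurately represent entries up to the threshold at which Algorithm~\ref{alg:thresapprox} takes over.

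For privacy I would appeal to composition (Lemma~\ref{lem:composition}): the threshold component is $(\epsilon_1,\delta)$-differentially private by Lemma~\ref{lem:alpcombinedpureDP}, the ALP-projection component is $\epsilon_2$-differentially private by Lemma~\ref{lem:ALPgeneralDP}, and the two halves use independent randomness, so the joint release is $(\epsilon_1+\epsilon_2,\delta) = (\epsilon,\delta)$-differentially private. The accuracy guarantees are then read off from Lemma~\ref{lem:ALPapprox}: the expected per-entry error is $\Ooneeps$ (the same case split on $\tilde{v}_i = 0$ vs.\ $\tilde{v}_i \neq 0$ as in the proof of Lemma~\ref{lem:ALPpure}, but now the ``tail'' integral uses the Laplace noise from Algorithm~\ref{alg:thresapprox}), and the expected maximum error is $\Ologdeltaeps$, using the bound of the preceding lemma on the maximum error of Algorithm~\ref{alg:thresapprox}. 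The access time $O(\log(1/\delta))$ is immediate from Algorithm~\ref{algo:ALPcombinedest}: a constant-time lookup into $\tilde{v}$, followed if necessary by an ALP estimate whose cost $O(\beta\epsilon_2) = O(\log(1/\delta))$ comes from Lemma~\ref{lem:ALPgeneralerror}.

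For the space bound I would combine two pieces. First, Algorithm~\ref{alg:thresapprox} only ever writes output at positions where the input was non-zero, so $\tilde{v}$ is deterministically $k$-sparse and stored in $O(k\log(d+u))$ bits via a hash table; this is cleaner than the pure case, where $k$-sparsity held only with high probability. Second, with $\beta = O(\log(1/\delta)/\epsilon)$, Lemma~\ref{lem:ALPgeneralsize} gives $O(k\log(1/\delta))$ bits for the hash sequence and embedding. Summing yields the claimed $O(k(\log(d+u) + \log(1/\delta)))$, which is exactly Lemma~\ref{lem:ALPapproxsize}.

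There is no truly hard step here — the theorem is essentially a packaging of Lemmas~\ref{lem:ALPapprox} and~\ref{lem:ALPapproxsize}, analogous to how Theorem~\ref{thm:ALPpure} packages Lemmas~\ref{lem:ALPcombinedDP}, \ref{lem:ALPcombinedsize}, and~\ref{lem:ALPpure}. The one bookkeeping subtlety I would double-check is the implicit assumption $\delta = O(1/k)$ used in Lemma~\ref{lem:ALPapprox} to absorb a $\log k$ term into $\log(1/\delta)$ in the maximum-error bound; if that assumption is not imposed, the cleanest statement would replace $\log(1/\delta)$ by $\log(\max(k,1/\delta))$, matching the bound attributed to Korolova et al.\ in Section~\ref{sec:relatedwork}. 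Beyond this remark, the proof reduces to citing the relevant lemmas.
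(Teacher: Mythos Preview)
Your proposal is correct and matches the paper's approach: the paper's proof is literally a one-line citation of Lemmas~\ref{lem:alpcombinedpureDP}, \ref{lem:ALPapprox}, and~\ref{lem:ALPapproxsize}, which is exactly the packaging you describe. Your observation about the implicit $\delta = O(1/k)$ assumption is accurate and is indeed the one bookkeeping point the paper glosses over in the theorem statement.
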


\begin{proof}
  It follows directly from Lemmas~\ref{lem:alpcombinedpureDP},~\ref{lem:ALPapprox} and~\ref{lem:ALPapproxsize}.
\end{proof}

\section{Experiments}
\label{sec:experiments}

\begin{figure*}[t]
  \subfigure[Upper bound on expected per-entry error]{\label{fig:plotupperbound}\includegraphics[width=0.33\linewidth]{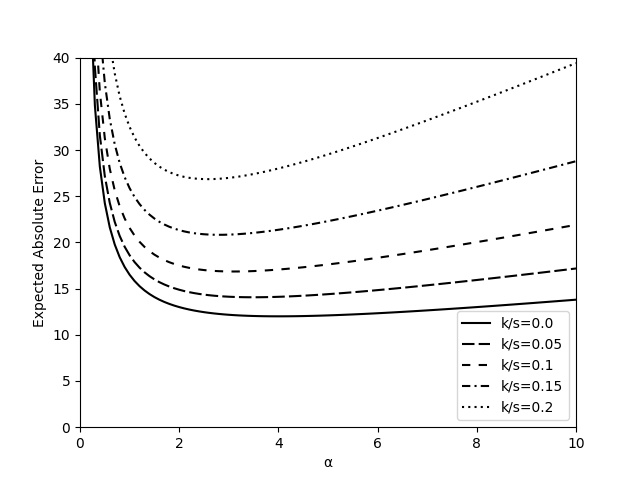}} 
  \subfigure[Observed mean per-entry error]{\label{fig:plotexperiments}\includegraphics[width=0.33\linewidth]{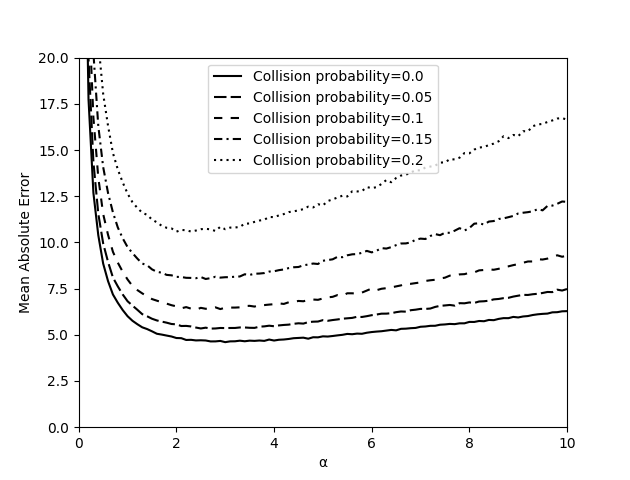}}
  \subfigure[Error distribution $\alpha=3$ and $\mathrm{collisions}=0.1$.]{\label{fig:errordist}\includegraphics[width=0.33\linewidth]{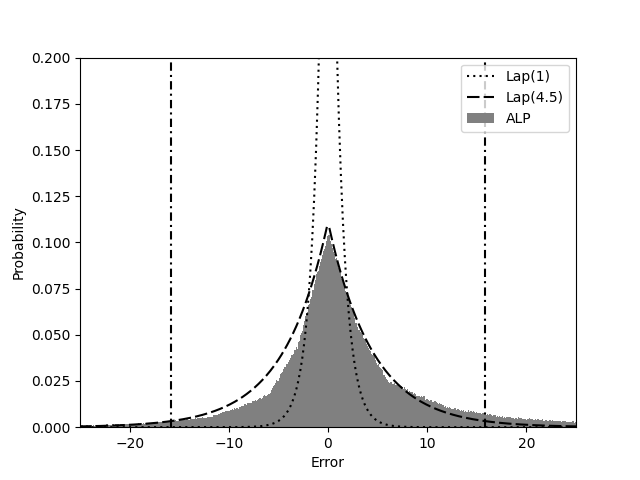}}
  \captionsetup{justification=centering}
  \caption{ Theoretical expected per-entry error and experiment results. \\
            Note that the y-axes for the plots use different scales.}
  \label{fig:perentrycomparison}
  \Description[Plot of the upper bound and observed per-entry error]{The plots show the expected per-entry error for varying parameters of $\alpha$ and $k/s$. The plots follow similar trajectories, but the observed error is approximately half the theoretical upper bound.}
\end{figure*}

In this section, we discuss the per-entry error of \ALPest{} (Algorithm~\ref{alg:ALPestimator}) in practice.
Let $\gamma=\frac{\alpha+2}{1+\frac{\alpha k}{s}}-2$. By Lemma~\ref{lem:ALP1errorroughbound} and~\ref{lem:ALP1yibound} the expected per-entry error of \ALPest{} is upper bounded by:
\begin{align*}
\E[|x_i-\tilde{x}_i|] \leq \left(\frac{1}{2} + \frac{4\alpha + 4}{\alpha^2} + \frac{4\gamma + 4}{\gamma^2}\right) \cdot \alpha \enspace .
\end{align*}

Figure~\ref{fig:plotupperbound} shows the upper bound for varying values of $k/s$ and $\alpha$. Recall that $k/s$ is a bound on the probability of a hash collision. We see that the effect of hash collisions on the error increases for large values of $\alpha$, as each bit in the embedding is more significant. We discuss how the upper bound compares to practice next.

\paragraph{Experimental Setup.}
We designed experiments to evaluate the effect of the adjustable parameters $\alpha$ and $s$ on the expected per-entry error of \ALPest. The experiments were performed on artificial data.
For our setup, we set parameter $\beta=5000$ and chose a true value $x_i$ uniformly at random in the interval $[0,\ldots,\beta]$. We run only on artificial data, as uniform data does not benefit the algorithm, and we can easily simulate worst-case conditions for hash collisions. 
We simulate running the \ALPone{} algorithm by computing $y_i$, simulating hash collisions, and applying randomized response. The probability for hash collisions is fixed in each experiment and the same probability is used for all bits. This simulates worst-case input in which all other non-zero entries have a true value of at least $\beta$. We increment $\alpha$ by steps of $0.1$ in the interval $[0.1,\ldots,10]$ and the probability of a hash collision by $0.05$ in the interval $[0,\ldots,0.2]$. The probability of $0$ serves only as a baseline, as it is not achievable in practice for $k > 1$. The experiment was repeated $10^5$ times for every data point.

Figure~\ref{fig:plotexperiments} shows plots of the mean absolute error of the experiments. As $\alpha$ is increased, the error drops off at first and slowly climbs. The estimates of $y_i$ are more accurate for large values of $\alpha$. However, any inaccuracy is more significant, as $\tilde{y}_i$ is scaled back by a larger value. The error from the random rounding step also increases with $\alpha$.
The plots of the upper bound and observed error follow similar trajectories. However, the upper bound is approximately twice as large for most parameters.

\paragraph{Fixed Parameters.}
The experiments show how different values of $\alpha$ and $s$ affect the expected per-entry error. However, the parameters also determine constant factors for space usage and access time. The space requirements scale linearly in $\frac{s}{\alpha}$ and the access time is inversely proportional to $\alpha$. As such, the optimal parameter choice depends on the use case due to space, access time, and error trade-offs. 

To evaluate the error distribution of the \ALPest{} algorithm we fixed the parameters of an experiment. We set $\alpha=3$ and the hash collision probability to $0.1$. We repeated the experiment $10^6$ times.

The error distribution is shown in Figure~\ref{fig:errordist}. The mean absolute error of the experiment is $6.4$ and the standard deviation is $11$. 
Plugging in the parameters in Lemma~\ref{lem:ALP1tailbounds},
with probability at least 0.9 the error is at most
\begin{align*}
  |x_i-\tilde{x}_i| < 3 + \frac{6\log{\left(\frac{5}{0.12\sqrt{\pi}}\right)}}{\log{\left(\frac{25}{19.24}\right)}} \approx 75.33 \enspace .
\end{align*}
The error of the observed $90$th percentile is $15.78$, which is shown in Figure~\ref{fig:errordist} using vertical lines. 
Again, this shows that the upper bounds are pessimistic.

For comparison, the plots include the Laplace distribution with scale parameters $1$ and $4.5$. Note that the Laplace distribution with parameter $1$ is optimal for the privacy budget. The standard deviation of the distribution with scale $4.5$ is $6.36$ and as such the mean absolute error is similar to the ALP mechanism. 

The distribution is slightly off-center, and the mean error is $2.33$. This is expected due to hash collisions. The effect of hash collisions is also apparent for the largest observed errors. The lowest observed error was $-114$, while the highest was $274$. 
There is a clear trade-off between space usage and per-entry error. We reran the experiment with hash collision probability $0.01$ using the same value for $\alpha$. The error improved for all the metrics mentioned above. The mean absolute error is $4.8$, the standard deviation is $7.8$, the mean error is $0.18$, the $90$th percentile is $11.5$, and the largest observed error is $147$.

\section{Suggestions to Practitioners}
\label{sec:practioners}
The ALP mechanism introduced in this paper combines the best of three worlds: It has low error similar to the Laplace mechanism, produces compact representations using asymptotically optimal space, and has an access time that scales only with $O(\log d)$.

In an application that wants to make use of differentially private histograms/vectors, one first has to get an overview of the assumed properties of the data before making a choice on which approach to use. 
If $d$ is small or the data is assumed to be dense, the Laplace mechanism will offer the best performance. 
If the data is sparse and the dimension $d$ is large, the analyst must know which error guarantee she wishes to achieve, and which access time is feasible in the setting where the application is deployed. 
If a larger error is acceptable for ``small'' entries or access time is crucial, just applying the thresholding technique~\cite{korolova2009releasing,cormode2012differentially} is the better choice.
Otherwise, if small error is paramount or an access time of $O(\log d)$ is sufficient, the ALP mechanism will provide the best solution. 

\paragraph{Variants.}
We assume in this paper that $k$ is a known bound on the sparsity of the input data. However, in some applications the value of $k$ itself is private. Here we briefly discuss approaches in such settings. We use the value of $k$ to select the size of the embedding, such that the probability of hash collisions is sufficiently small. When $k$ is not known we can still bound the probability of hash collisions.

If the input is a \emph{histogram} the sparsity differs by at most $1$ for neighboring datasets. As such we can use a fraction of the privacy budget to estimate the sparsity. Note that this is not possible for vectors, as the difference in sparsity can be as large as $d$ for neighboring datasets.

If $\|x\|_1=n$ is known then we have $\|\hat{x}\|_1=n\epsilon$ for the scaled input. We can bound the probability of hash collisions by a constant when the size of the embedding is $O(n\epsilon)$ bits. If $\|x\|_1$ is unknown we can estimate it using a fraction of the privacy budget. Note that the space differs from the $k$-sparse setting, and remains $O(n\epsilon)$ bits when applying the thresholding techniques.

An implementation of a variant of the ALP mechanism is available as part of the open source project OpenDP (\url{https://opendp.org/}) in the repository \url{https://github.com/opendp/opendp}.

\section{Open problems}\label{sec:openproblems}

The main open problem that we leave is if it is possible to achieve similar space and error with constant time access.
We know of a way (based on the count-min sketch) to achieve optimal \emph{expected} error with constant time access and space within a logarithmic factor of optimal. However, this method does not have strong tail bounds on the error.

\begin{acks}
  We thank the anonymous reviewers for their detailed suggestions that helped improve the paper.
  Christian Janos Lebeda and Rasmus Pagh are affiliated with Basic Algorithms Research Copenhagen (BARC), supported by the VILLUM Foundation grant 16582.
\end{acks}

\bibliographystyle{ACM-Reference-Format}
\bibliography{biblio}


\begin{thebibliography}{15}


\ifx \showCODEN    \undefined \def \showCODEN     #1{\unskip}     \fi
\ifx \showDOI      \undefined \def \showDOI       #1{#1}\fi
\ifx \showISBNx    \undefined \def \showISBNx     #1{\unskip}     \fi
\ifx \showISBNxiii \undefined \def \showISBNxiii  #1{\unskip}     \fi
\ifx \showISSN     \undefined \def \showISSN      #1{\unskip}     \fi
\ifx \showLCCN     \undefined \def \showLCCN      #1{\unskip}     \fi
\ifx \shownote     \undefined \def \shownote      #1{#1}          \fi
\ifx \showarticletitle \undefined \def \showarticletitle #1{#1}   \fi
\ifx \showURL      \undefined \def \showURL       {\relax}        \fi
\providecommand\bibfield[2]{#2}
\providecommand\bibinfo[2]{#2}
\providecommand\natexlab[1]{#1}
\providecommand\showeprint[2][]{arXiv:#2}

\bibitem[\protect\citeauthoryear{Alm}{Alm}{2002}]%
        {alm2002simple}
\bibfield{author}{\bibinfo{person}{Sven~Erick Alm}.}
  \bibinfo{year}{2002}\natexlab{}.
\newblock \showarticletitle{Simple random walk}.
\newblock \bibinfo{journal}{\emph{Unpublished manuscript}}
  (\bibinfo{year}{2002}).
\newblock
\urldef\tempurl%
\url{http://www2.math.uu.se/~sea/kurser/stokprocmn1/slumpvandring_eng.pdf}
\showURL{%
\tempurl}


\bibitem[\protect\citeauthoryear{Balcer and Vadhan}{Balcer and Vadhan}{2019}]%
        {balcer2017differential}
\bibfield{author}{\bibinfo{person}{Victor Balcer} {and}
  \bibinfo{person}{Salil~P. Vadhan}.} \bibinfo{year}{2019}\natexlab{}.
\newblock \showarticletitle{Differential Privacy on Finite Computers}.
\newblock \bibinfo{journal}{\emph{J. Priv. Confidentiality}}
  \bibinfo{volume}{9}, \bibinfo{number}{2} (\bibinfo{year}{2019}).
\newblock


\bibitem[\protect\citeauthoryear{Bun, Nissim, and Stemmer}{Bun
  et~al\mbox{.}}{2019}]%
        {bun2016simultaneous}
\bibfield{author}{\bibinfo{person}{Mark Bun}, \bibinfo{person}{Kobbi Nissim},
  {and} \bibinfo{person}{Uri Stemmer}.} \bibinfo{year}{2019}\natexlab{}.
\newblock \showarticletitle{Simultaneous Private Learning of Multiple
  Concepts}.
\newblock \bibinfo{journal}{\emph{J. Mach. Learn. Res.}}  \bibinfo{volume}{20}
  (\bibinfo{year}{2019}), \bibinfo{pages}{94:1--94:34}.
\newblock


\bibitem[\protect\citeauthoryear{Carter and Wegman}{Carter and Wegman}{1979}]%
        {carter1979universal}
\bibfield{author}{\bibinfo{person}{Larry Carter} {and} \bibinfo{person}{Mark~N.
  Wegman}.} \bibinfo{year}{1979}\natexlab{}.
\newblock \showarticletitle{Universal Classes of Hash Functions}.
\newblock \bibinfo{journal}{\emph{J. Comput. Syst. Sci.}} \bibinfo{volume}{18},
  \bibinfo{number}{2} (\bibinfo{year}{1979}), \bibinfo{pages}{143--154}.
\newblock


\bibitem[\protect\citeauthoryear{Cormode, Procopiuc, Srivastava, and
  Tran}{Cormode et~al\mbox{.}}{2012}]%
        {cormode2012differentially}
\bibfield{author}{\bibinfo{person}{Graham Cormode}, \bibinfo{person}{Cecilia~M.
  Procopiuc}, \bibinfo{person}{Divesh Srivastava}, {and} \bibinfo{person}{Thanh
  T.~L. Tran}.} \bibinfo{year}{2012}\natexlab{}.
\newblock \showarticletitle{Differentially private summaries for sparse data}.
  In \bibinfo{booktitle}{\emph{{ICDT}}}. \bibinfo{publisher}{{ACM}},
  \bibinfo{pages}{299--311}.
\newblock


\bibitem[\protect\citeauthoryear{Dietzfelbinger, Hagerup, Katajainen, and
  Penttonen}{Dietzfelbinger et~al\mbox{.}}{1997}]%
        {DietzfelbingerHKP97}
\bibfield{author}{\bibinfo{person}{Martin Dietzfelbinger},
  \bibinfo{person}{Torben Hagerup}, \bibinfo{person}{Jyrki Katajainen}, {and}
  \bibinfo{person}{Martti Penttonen}.} \bibinfo{year}{1997}\natexlab{}.
\newblock \showarticletitle{A Reliable Randomized Algorithm for the
  Closest-Pair Problem}.
\newblock \bibinfo{journal}{\emph{J. Algorithms}} \bibinfo{volume}{25},
  \bibinfo{number}{1} (\bibinfo{year}{1997}), \bibinfo{pages}{19--51}.
\newblock


\bibitem[\protect\citeauthoryear{Dwork, McSherry, Nissim, and Smith}{Dwork
  et~al\mbox{.}}{2016}]%
        {DworkDPpaper}
\bibfield{author}{\bibinfo{person}{Cynthia Dwork}, \bibinfo{person}{Frank
  McSherry}, \bibinfo{person}{Kobbi Nissim}, {and} \bibinfo{person}{Adam~D.
  Smith}.} \bibinfo{year}{2016}\natexlab{}.
\newblock \showarticletitle{Calibrating Noise to Sensitivity in Private Data
  Analysis}.
\newblock \bibinfo{journal}{\emph{J. Priv. Confidentiality}}
  \bibinfo{volume}{7}, \bibinfo{number}{3} (\bibinfo{year}{2016}),
  \bibinfo{pages}{17--51}.
\newblock


\bibitem[\protect\citeauthoryear{Dwork and Roth}{Dwork and Roth}{2014}]%
        {DworkBook}
\bibfield{author}{\bibinfo{person}{Cynthia Dwork} {and} \bibinfo{person}{Aaron
  Roth}.} \bibinfo{year}{2014}\natexlab{}.
\newblock \showarticletitle{The Algorithmic Foundations of Differential
  Privacy}.
\newblock \bibinfo{journal}{\emph{Found. Trends Theor. Comput. Sci.}}
  \bibinfo{volume}{9}, \bibinfo{number}{3-4} (\bibinfo{year}{2014}),
  \bibinfo{pages}{211--407}.
\newblock


\bibitem[\protect\citeauthoryear{Elkies}{Elkies}{2013}]%
        {centralbinocoef}
\bibfield{author}{\bibinfo{person}{Noam~D. Elkies}.}
  \bibinfo{year}{2013}\natexlab{}.
\newblock \bibinfo{title}{Upper limit on the central binomial coefficient}.
\newblock
  \bibinfo{howpublished}{\url{https://mathoverflow.net/questions/133732/upper-limit-on-the-central-binomial-coefficient}}.
\newblock
\newblock
\shownote{[Online; accessed 15-September-2021].}


\bibitem[\protect\citeauthoryear{Graham, Knuth, and Patashnik}{Graham
  et~al\mbox{.}}{1994}]%
        {concretemath}
\bibfield{author}{\bibinfo{person}{Ronald~L. Graham},
  \bibinfo{person}{Donald~E. Knuth}, {and} \bibinfo{person}{Oren Patashnik}.}
  \bibinfo{year}{1994}\natexlab{}.
\newblock \bibinfo{booktitle}{\emph{Concrete Mathematics: {A} Foundation for
  Computer Science, 2nd Ed}}.
\newblock \bibinfo{publisher}{Addison-Wesley}.
\newblock


\bibitem[\protect\citeauthoryear{Hagerup}{Hagerup}{1998}]%
        {Hagerup98}
\bibfield{author}{\bibinfo{person}{Torben Hagerup}.}
  \bibinfo{year}{1998}\natexlab{}.
\newblock \showarticletitle{Sorting and Searching on the Word {RAM}}. In
  \bibinfo{booktitle}{\emph{{STACS}}} \emph{(\bibinfo{series}{Lecture Notes in
  Computer Science}, Vol.~\bibinfo{volume}{1373})}.
  \bibinfo{publisher}{Springer}, \bibinfo{pages}{366--398}.
\newblock


\bibitem[\protect\citeauthoryear{Hardt and Talwar}{Hardt and Talwar}{2010}]%
        {hardt2010geometry}
\bibfield{author}{\bibinfo{person}{Moritz Hardt} {and} \bibinfo{person}{Kunal
  Talwar}.} \bibinfo{year}{2010}\natexlab{}.
\newblock \showarticletitle{On the geometry of differential privacy}. In
  \bibinfo{booktitle}{\emph{{STOC}}}. \bibinfo{publisher}{{ACM}},
  \bibinfo{pages}{705--714}.
\newblock


\bibitem[\protect\citeauthoryear{Korolova, Kenthapadi, Mishra, and
  Ntoulas}{Korolova et~al\mbox{.}}{2009}]%
        {korolova2009releasing}
\bibfield{author}{\bibinfo{person}{Aleksandra Korolova},
  \bibinfo{person}{Krishnaram Kenthapadi}, \bibinfo{person}{Nina Mishra}, {and}
  \bibinfo{person}{Alexandros Ntoulas}.} \bibinfo{year}{2009}\natexlab{}.
\newblock \showarticletitle{Releasing search queries and clicks privately}. In
  \bibinfo{booktitle}{\emph{{WWW}}}. \bibinfo{publisher}{{ACM}},
  \bibinfo{pages}{171--180}.
\newblock


\bibitem[\protect\citeauthoryear{Koufogiannis, Han, and Pappas}{Koufogiannis
  et~al\mbox{.}}{2015}]%
        {koufogiannis2015optimality}
\bibfield{author}{\bibinfo{person}{Fragkiskos Koufogiannis},
  \bibinfo{person}{Shuo Han}, {and} \bibinfo{person}{George~J. Pappas}.}
  \bibinfo{year}{2015}\natexlab{}.
\newblock \showarticletitle{Optimality of the Laplace Mechanism in Differential
  Privacy}.
\newblock \bibinfo{journal}{\emph{CoRR}}  \bibinfo{volume}{abs/1504.00065}
  (\bibinfo{year}{2015}).
\newblock


\bibitem[\protect\citeauthoryear{Warner}{Warner}{1965}]%
        {warner1965randomized}
\bibfield{author}{\bibinfo{person}{Stanley~L Warner}.}
  \bibinfo{year}{1965}\natexlab{}.
\newblock \showarticletitle{Randomized response: A survey technique for
  eliminating evasive answer bias}.
\newblock \bibinfo{journal}{\emph{J. Amer. Statist. Assoc.}}
  \bibinfo{volume}{60}, \bibinfo{number}{309} (\bibinfo{year}{1965}),
  \bibinfo{pages}{63--69}.
\newblock


\end{thebibliography}

\newpage
\appendix
\section{Closed-form proof of Lemma~\ref{lem:randomwalk}}
\label{appendix:sumclosedform}
Here we provide a closed-form expression used in the proof of Lemma~\ref{lem:randomwalk}. 

In the proof, we will make use of general binomial coefficient(\cite[Equation~5.1]{concretemath}):
\begin{align*}
  {r \choose k} = \frac{r(r-1)\ldots(r-k+2)(r-k+1)}{k!} \enspace ,
\end{align*} 
and the binomial theorem~(\cite[Equation~5.12]{concretemath}): 
\begin{align*} 
  (1+z)^r=\sum_{k=0}^\infty {r \choose k}(z)^k \enspace .
\end{align*} 

Starting from an infinite series with $z < 1/4$, we simplify as follows:
\begin{align*}
  \sum_{k=0}^\infty k {2k \choose k} (z)^k & = \sum_{k=1}^\infty k \frac{(2k)!}{k!k!} z^k \\
  & = \sum_{k=1}^\infty k \frac{k(k-\frac{1}{2})(k-1)\ldots{\left(\frac{3}{2}\right)} 1  {\left(\frac{1}{2}\right)}}{k!k!} 2^{2k} z^k \\
  & = \sum_{k=1}^\infty \frac{(k-\frac{1}{2}) (k-\frac{3}{2})\ldots{\left(\frac{5}{2}\right)}{\left(\frac{3}{2}\right)}{\left(\frac{1}{2}\right)}}{(k-1)!} (4z)^k \\
  & = \frac{4z}{2} \sum_{k=1}^\infty \frac{(k-\frac{1}{2}) (k-\frac{3}{2})\ldots{\left(\frac{5}{2}\right)}{\left(\frac{3}{2}\right)}}{(k-1)!} (4z)^{k-1} \\
  & = 2z \sum_{k=1}^\infty \frac{{\left(-\frac{3}{2}\right)}{\left(-\frac{5}{2}\right)}\ldots(-k+\frac{3}{2}) (-k+\frac{1}{2})}{(k-1)!} (-4z)^{k-1} \\
  & = 2z \sum_{k=1}^\infty {- \frac{3}{2} \choose k - 1} (-4z)^{k-1} \\
  & = 2z \sum_{k=0}^\infty {- \frac{3}{2} \choose k} (-4z)^{k} \\
  & = \frac{2z}{(1-4z)^{3/2}} \enspace .
\end{align*}

Let $p=\frac{a}{a+b}$ and $q=\frac{b}{a+b}$. Then we have: 
\begin{align*}
  1-4pq & = \frac{(a+b)^2}{(a+b)^2} - \frac{4ab}{(a+b)^2} \\
  & = \frac{a^2+b^2-2ab}{(a+b)^2} \\
  & = \frac{(b-a)^2}{(a+b)^2} \\
  & = (q-p)^2 \enspace .
\end{align*} 

Finally, let $p < q$ and let $z=pq$. This gives us the closed-form expression:
\begin{align*}
  \sum_{k=0}^\infty k {2k \choose k} (pq)^k &= \frac{2pq}{(1-4pq)^{3/2}} \\ 
  & = \frac{2pq}{((q-p)^2)^{3/2}} \\
  & = \frac{2pq}{(q-p)^3} \enspace .
\end{align*}

\end{document}